\newtheorem{theorem}{Theorem}
\newtheorem{lemma}[theorem]{Lemma}
\theoremstyle{remark}
\newtheorem{remark}{Remark}
\theoremstyle{definition}
\newtheorem{definition}{Definition}
\algnewcommand\algorithmicinput{\textbf{INPUT:}}
\algnewcommand\INPUT{\item[\algorithmicinput]}
\algnewcommand\algorithmicoutput{\textbf{OUTPUT:}}
\algnewcommand\OUTPUT{\item[\algorithmicoutput]}
\DeclarePairedDelimiter\ceil{\lceil}{\rceil}
\DeclarePairedDelimiter\floor{\lfloor}{\rfloor}
\DeclareMathOperator*{\argmax}{arg\,max}
\theoremstyle{plain}
\newtheorem{assumption}{Assumption}
\title{Optimal nonparametric change point detection and localization}
\author[1]{Oscar Hernan Madrid Padilla}
\author[2]{Yi Yu}
\author[3]{Daren Wang}
\author[4]{Alessandro Rinaldo}
\affil[1]{\small Department of Statistics, UC Berkeley}
\affil[2]{\small School of Mathematics, University of Bristol}
\affil[3]{\small Department of Statistics, University of Chicago}
\affil[4]{\small Department of Statistics and Data Science, Carnegie Mellon University}
\begin{document}

\date{\today}

\maketitle

\begin{abstract}
We study change point detection and localization for univariate data in fully nonparametric settings in which, at each time point, we acquire an i.i.d. sample from an unknown distribution.  We quantify the magnitude of  the distributional changes at the change points using the  Kolmogorov--Smirnov distance. We allow all the relevant parameters -- the minimal spacing between two consecutive change points, the minimal magnitude of the changes in the Kolmogorov--Smirnov distance, and the number of sample points collected at each time point -- to change with the length of time series. We generalize the renowned binary segmentation \citep[e.g.][]{ScottKnott1974} algorithm and its variant, the wild binary segmentation of \cite{fryzlewicz2014wild}, both originally designed for univariate mean change point detection problems, to our nonparametric settings and exhibit rates of consistency for both of them. In particular, we prove that the procedure based on wild binary segmentation is nearly minimax rate-optimal. We further demonstrate a phase transition in the space of model parameters that separates parameter combinations for which consistent localization is possible from the ones for which this task is statistical unfeasible. Finally, we provide extensive numerical experiments to support our theory. R code  is available at \url{https://github.com/hernanmp/NWBS}.

\end{abstract}
\quad\quad\;\;\textbf{Keywords:}  Nonparametric; Kolmogorov--Smirnov statistic; CUSUM; Binary segmentation; Minimax optimality; Phase transition.

\section{Introduction}
\label{sec:introduction}

Change point analysis is a well-established topic in statistics that is concerned with detecting and localizing abrupt changes in the data generating distribution in time series data. Initiate during World War II \citep[see, e.g.,][]{Wald1945}, the field of change point analysis  has produced a large literature as well as hosted well-established methods for statistical inference available to practitioners. These techniques are now widely used to address important real life problems in a variety of disciplines, including, for example,  biology \citep{fan2015identifying,jewell2018fast}, speech  recognition  \citep{fox2011sticky,chowdhury2012bayesian}, social networks \citep{liu2013change},  climate  \citep{itoh2010change}, and   financial  data \citep{preuss2015detection,russell2018breaks}.

The theoretical understanding of the statistical challenges associated to change point problems has also progressed considerably. The initial groundbreaking results of \cite{Yao1988}, \cite{YaoAu1989},  \cite{YaoDavis1986} from the 1980s, aimed at studying change point detection for a univariate piecewise constant signal, have now been extended in several ways. For instance, \cite{fryzlewicz2014wild} and \cite{FrickEtal2014}, among others, proposed computationally-efficient methods dealing with the situations with potentially multiple mean change points \citep[see also,][]{wang2018univariate}. More recently, \cite{pein2017heterogeneous} constructed a method  that  can handle  mean and variance  changes simultaneously. \cite{Cho2015}, \cite{ChoFryzlewicz2015} and \cite{wang2016high}  studied high-dimensional mean change point detection problems. A different line of work,  including efforts by 
 \cite{AueEtal2009}, \cite{avanesov2016change} and \cite{wang2017optimal}, has investigated scenarios where covariance matrices change.  \cite{CribbenYu2017}, \cite{LiuEtal2018} and \cite{wang2018optimal}, among others, inspected dynamic network change point detection problems.

Most of the existing theoretical frameworks for statistical analysis of change point problems, however, largely rely on strong modeling assumptions of parametric nature that may be inadequate to capture the inherent complexity of modern, high-dimensional datasets. Indeed, 
the statistical literature on nonparametric change point analysis is surprisingly limited compared to its parametric counterpart.  Among the few nonparametric results, \cite{carlstein1988nonparametric} considered the scenario where there is at most one change point;  \cite{hawkins2010nonparametric} proposed a Mann--Whitney-type statistics to conduct online change point detection; \cite{matteson2014nonparametric} established the consistency of change point estimators based on statistics originally introduced in \cite{rizzo2010disco};  \cite{zou2014nonparametric} proposed a nonparametric multiple change point detection method which came with some consistency guarantees; more recently,  \cite{padilla2018sequential} proposed  an algorithm  for  nonparametric change point detection based on the  Kolmogorov--Smirnov  statistic; \cite{fearnhead2018changepoint}  a  mean change point detection robust to outliers; and \cite{vanegas2019multiscale} constructed a multiscale method for detecting changes in fixed quantiles of the distributions.

In this paper we advance both the theory and methodology of nonparametric change point analysis by presenting two computationally-efficient procedures for univariate change point localization that are provably consistent and, in one case, nearly minimax optimal. Our analysis builds upon various recent contributions in the literature on parametric and high-dimensional change point analysis but allows for a fully nonparametric change point model. The pioneering work of \cite{zou2014nonparametric} is, to the best of our knowledge, one of very few examples yielding a procedure for nonparametric change point with provable guarantees.  A detailed comparison between our results and the ones of \cite{zou2014nonparametric} will be given in Section~\ref{sec-comp}.

\subsection*{Problem formulation}	

We describe the change point model we are going to consider next. Our settings and notation are fairly standard, with one crucial difference from most of the contributions in the field: the changes in the underlying distribution at the change points are not parametrically specified but are instead quantified through a nonparametric measure of distance between distributions. This feature renders our methods and analysis applicable to a very broad range of change point problems. 

\begin{assumption}[Model]
\label{as1}
	Let $\{Y_{t, i},\, t = 1, \ldots, T,\, i = 1, \ldots, n_t\} \subset \mathbb{R}$ be a collection of independent random variables such that
		\begin{equation}
			\label{eq-model}
				Y_{t, i} \sim F_t, 
		\end{equation}
		where $F_1, \ldots, F_T$ are cumulative distribution functions (CDFs).
	
	Let $\{\eta_k\}_{k = 0}^{K+1} \subset \{0, 1, \ldots, T\}$ be a collection of change points with $0 = \eta_0 < \eta_1 < \ldots < \eta_K < \eta_{K+1} = T$ such that
	\[
F_{\eta_0} = F_{\eta_1}   \quad \text{and} \quad  F_{\eta_k} \neq F_{\eta_k + 1}, \quad k = 1,\ldots,K,
	\]
and
		\[
		F_{\eta_k + 1} = \ldots = F_{\eta_{k+1}}, \quad k = 0, \ldots, K.
		\]
		
Then we	assume that the minimal spacing $\delta$ and the jump size $\kappa$ satisfy	
		\[
		\min_{k = 1, \ldots, K+1} \{\eta_k - \eta_{k-1}\} \geq \delta > 0,
		\]
		and
		\begin{equation}\label{eq-as1-kappa}
		\min_{k = 1, \ldots, K+1} \sup_{z \in \mathbb{R}} \bigl|F_{\eta_{k}}(z) - F_{\eta_{k-1}}(z)\bigr| = \min_{k = 1, \ldots, K+1} \kappa_k = \kappa > 0.
		\end{equation}
Furthermore, we set
	\[
	n_{\min} = \min_{t = 1, \ldots, T} n_t, \quad \mbox{and} \quad n_{\max} = \max_{t = 1, \ldots, T} n_t.
	\]
		
\end{assumption}

{\bf Remark.} According to our notation, which is consistent with the literature on change point analysis, the $k$th change occurs at time $\eta_k + 1$.\\

In \eqref{eq-model},  we allow for multiple observations $n_t$ to be collected at each time $t$. This  generalizes the classical  change point detection framework where $n_t = 1$  for all $t$; see, for instance,  \cite{zou2014nonparametric}.  This flexibility is inspired by the recent interest in anomaly detection problems  where 
multiple  observations can be measured  in a fixed time; see the work in \cite{chan2014distribution}, \cite{reinhart2014spatially} and  \cite{padilla2018sequential}. Our results remain valid even if $n_t = 1$ for all $t$.

We quantify the magnitude of the distributional changes using the Kolmogorov--Smirnov (KS) distance between distribution functions at consecutive change points; see  \eqref{eq-as1-kappa}.  The KS distance is a natural and widely used metric for univariate probability distributions. It is well-known that convergence in the KS distance is stronger than weak convergence but weaker than convergence in the total variation distance and also, provided that the distributions admit bounded Lebesgue densities, in the $L_1$-Wasserstein distance. Reliance on the KS metric buys a great deal of flexibility in our results, which hold without virtually any assumptions on the underlying distribution functions $\{F_t\}$. In particular, they can be continuous, discrete or of mixed type.

	



The nonparametric change point model defined above in Assumption~\ref{as1} is specified by few key parameters: the minimal spacing between two consecutive change points $\delta$, the minimal jump size in terms of the Kolmogorov--Smirnov distance $\kappa$, and the number $n_t$ of data points acquired at each time $t$. We adopt a high-dimensional framework in which all these quantities are allowed to change as functions of the total number of time points $T$. 
For technical reasons, we further assume that, in such asymptotic regime, $n_{\max} \asymp n_{\min} \asymp n$

We consider the change point localization  problem of establishing consistent change point estimators $\{\hat{\eta}_k\}_{k = 1}^{\widehat{K}}$ of the true change points. These are  measurable functions of the date and return an increasing sequence of random time points  $\hat{\eta}_1 < \ldots < \hat{\eta}_{\widehat{K}}$, such that, as $T \to \infty$,  the following event holds with probability tending to $1$:
	\[
		\widehat{K} = K \quad \text{and} \quad \max_{k = 1, \ldots, K}\bigl|\hat{\eta}_k - \eta_k \bigr| \leq \epsilon, 
	\]
where $\epsilon = \epsilon(T)$ is such that $\lim_{T \to \infty} \epsilon/T = 0$
	
	Throughout the rest of the paper, we  refer to the quantity $\epsilon/T$ as the {\it localization rate.} Our goal is to obtain change point estimators that, under the weakest possible conditions, are guaranteed to yield the smallest possible $\epsilon_T$. In fact, the change point estimators we produce satisfy a stronger property that $\lim_{T \to \infty} \epsilon/\delta = 0$.

	
\subsection*{Summary of results}	
	
We will show that under Assumption~\ref{as1}, the hardness of the change point localization task is fully captured  by the quantity 
\begin{equation}\label{eq:snr}
\kappa\sqrt{\delta n},
\end{equation}
 which can be thought of as some form of signal-to-noise ratio. We may interpret this quantity by drawing a parallel with the localization task in classical univariate mean  change point problem involving a signal with piecewise constant mean corrupted  by additive noise. In that problem \cite{wang2018univariate} show that the relevant signal-to-noise ratio is $\kappa\sqrt{\delta}/\sigma$, where $\sigma$ is an upper bound on the variance of the noise and $\kappa$ and $\delta$ are the smallest size of the jump and the minimal spacing between change points. Since, under Assumption~\ref{as1}, we have $\Theta(n)$ data points at every time point, and if we take the sample mean at every time point, then we obtain a bound on the variance of order $n^{-1/2}$, yielding \eqref{eq:snr}.

We list our contributions as follows.
\begin{itemize}
	\item We demonstrate the existence of a phase transition for the localization task in terms of the signal-to-noise ratio $\kappa\sqrt{\delta n}$. Specifically,  We show that in the low signal-to-noise ratio regime $\kappa\sqrt{\delta n} \lesssim 1$, no algorithm is guaranteed to be consistent.  We also show that if  $\kappa\sqrt{\delta n} \geq \zeta_T$, where $\zeta_T$ is any sequence such that $\lim_{T \to \infty}\zeta_T = \infty$, then a minimax lower bound of the localization error rate is $(n\kappa^2 T)^{-1}$.
	
	\item We develop Kolmogorov--Smirnov versions of binary segmentation (BS)  from \cite{ScottKnott1974}  (see  Algorithm~\ref{alg:BS}), and of wild binary segmentation  (WBS)  from  \cite{fryzlewicz2014wild} (see Algorithm~\ref{algorithm:WBS}). We show that under suitable conditions, both methods are consistent.  In addition, Algorithm~\ref{algorithm:WBS} is proved to be nearly minimax rate-optimal, save for  logarithm factors, in terms of both the required signal-to-noise ratio (see Assumption~\ref{assumption-3}) and the localization error rate (see Theorem~\ref{thm-wbs}). Interestingly,  for the lower bounds  on the signal-to-noise ratio and the localization error rate, our rates match those derived for the mean change point localization problem under sub-Gaussian noise; see, e.g., \cite{wang2018univariate}. 

	\item  We  provide extensive comparisons of our algorithms and theoretical guarantees with several competing methods and results from the literature. See \Cref{sec-comp} and \Cref{sec-num}. In particular, our simulations indicate that our procedures perform very well across a variety of scenarios.  
	
\end{itemize}

We point out that, although in deriving the theoretical guarantees for our methodologies we rely on techniques proposed in existing work, namely \cite{venkatraman1992consistency} and \cite{fryzlewicz2014wild}, our results deliver significant improvements in two aspects. First, the extension of the analyses of both BS and WBS to the nonparametric 
settings -- in which the magnitude of the the distributional changes is measures using the KS metric -- requires novel and non-trivial arguments, especially to quantify the order of the stochastic fluctuations of the associated CUSUM statistics. Secondly, the analysis of BS and WBS given in \cite{fryzlewicz2014wild} does not extend to our problem because the proofs of those results  do not track  all the relevant model parameters and, more importantly, suffer from critical errors. Thus, we have derived our results largely from scratch.

\subsection*{Outline }

The  the paper is organized as follows.  In Section~\ref{sec-met}, we detail the two nonparametric change point detection methods, the theoretical results of which are exhibited in Theorems~\ref{thm-bs} and \ref{thm-wbs}, respectively.  The phase transition and the lower bound of the localisation error rate are demonstrated in Section~\ref{sec-phase}.  A further comparison of the two proposed methods, and with other nonparametric and parametric change point detection algorithms, is discussed in Section~\ref{sec-comp}.  Extensive numerical validation of  our methods is presented in Section~\ref{sec-num}.  

\section{Methodology}\label{sec-met}


In this section, we detail our two nonparametric change point detection procedures, both of which are based on the cumulative sum Kolmogorov--Smirnov statistic (CUSUM KS) defined  next.

\begin{definition}[CUSUM KS]
	\label{def:cusum}
	For any $1\leq s < t < e \leq T$, define the CUSUM Kolmogorov--Smirnov statistic
	\begin{equation}
	\label{eqn:ks}
		D_{s,e}^t = \sup_{z \in \mathbb{R}} \bigl|D^t_{s, e}(z)\bigr|,
	\end{equation}
	with
	\begin{equation}
	\label{eq:ks_z}
		D_{s,e}^t(z) = \sqrt{\frac{n_{s:t}\,  n_{(t+1):e} }{n_{s:e}} }\,\left\{ \widehat{F}_{s:t}(z)  -  \widehat{F}_{(t+1):e}(z)    \right\},
	\end{equation}
where	for all $s^{\prime} < e^{\prime}$, and  $z\in  \mathbb{R}$, we write
	\[
		\widehat{F}_{ s^{\prime} : e^{\prime}   }(z) = \frac{1}{n_{s^{\prime}:e^{\prime}}} \sum_{t=s^{\prime}}^{e^{\prime}}  \sum_{i=1}^{n_t} \mathbbm{1}_{\{Y_{t,i} \leq  z\}},  
	\]
	and
	\[
		n_{s^{\prime}  :e^{\prime} } = \sum_{t=s^{\prime}}^{e^{\prime}} n_t.
	\]

\end{definition}

\begin{algorithm}[h!]
	\begin{algorithmic}
		\INPUT $\{Y_{t, i}, \, t = s, \ldots, e, \, i = 1, \ldots, n_t \} \subset \mathbb{R}$, $\tau > 0$.
		\State {\bf Initial} $\mathrm{FLAG} \leftarrow 0$,
		\While{$e-s > 2$ and $\mathrm{FLAG} = 0$}
		\State $a \leftarrow \max_{s < t < e} D^t_{s,e}$
		\If{ $a \leq \tau$}
		\State $\mathrm{FLAG} \leftarrow 1$
		\Else
		\State $b \leftarrow \argmax_{s < t < e}  D^t_{s,e}$
		\State add $b$ to the collection of estimated change points
		\State $\mathrm{NBS} ((s,b-1),\tau)$
		\State $\mathrm{NBS} ((b,e),\tau)$
		\EndIf
		\EndWhile
		\OUTPUT The collection of estimated change points.
		\caption{Nonparametric Binary Segmentation. $\mathrm{NBS}((s,e), \tau)$ }
		\label{alg:BS}
	\end{algorithmic}
\end{algorithm}

\begin{algorithm}[h!]
	\begin{algorithmic}
		\INPUT $\{Y_{t, i}, \, t = s, \ldots, e, \, i = 1, \ldots, n_t \} \subset \mathbb{R}$, $\tau$.
		\For{$m = 1, \ldots, M$}   
		\State $(s_m, e_m) \leftarrow [s,e]\cap [\alpha_m,\beta_m]$	
		\If{$e_m - s_m \geq 2$}
		\State $a_m \leftarrow \max_{s_m < t < e_m} D^t_{s_m, e_m}$
		\State $b_{m} \leftarrow \argmax_{s_m < t < e_m} D^t_{s_m, e_m}$
		\Else 
		\State $a_m \leftarrow -1$	
		\EndIf
		\EndFor
		\State $m^* \leftarrow \argmax_{m = 1, \ldots, M} a_{m}$
		\If{$a_{m^*} > \tau$}
		\State add $b_{m^*}$ to the set of estimated change points
		\State NWBS$((s, b_{m*}),\{ (\alpha_m,\beta_m)\}_{m=1}^M, \tau)$
		\State NWBS $((b_{m*}+1,e),\{ (\alpha_m,\beta_m)\}_{m=1}^M,\tau)$
		
		\EndIf  
		\OUTPUT The set of estimated change points.
		\caption{Nonparametric Wild Binary Segmentation. NWBS$((s, e),$ $\{ (\alpha_m,\beta_m)\}_{m=1}^M, \tau$)}
		\label{algorithm:WBS}
	\end{algorithmic}
\end{algorithm}

Notice that in Definition  \ref{def:cusum}, $n_{s^{\prime}:e^{\prime}} $ is the total number of observations   collected  in the interval $[s^{\prime},e^{\prime}]$, and $\widehat{F}_{s^{\prime}:e^{\prime}}$  is  the empirical  CDF  estimated  using the data collected in that time. Moreover,	$D_{s,e}^t $ is the KS statistic for testing  the null  hypothesis that the data collected over the time course  $[s,e]$ have the same distribution,  against the alternative that $t$ is the only change point in $[s,e]$.

Based on the CUSUM KS statistic, we propose nonparametric versions of BS and WBS in Algorithms~\ref{alg:BS} and \ref{algorithm:WBS}, respectively.  These two algorithms share the same rationale behind  the BS and WBS procedures for  univariate mean change point detection problems (see, for instance, \citealt{fryzlewicz2014wild}).  Specifically, for the nonparametric  binary segmentation  procedure (NBS)  described in Algorithm \ref{alg:BS}, we iteratively search for the time point which maximizes the CUSUM KS statistic.  If the corresponding CUSUM KS value exceeds a pre-specified threshold, then the time point is added to the estimated change point collection. The process stops when all the  statistics  are below the threshold.  As for the nonparametric wild binary segmentation method (NWBS) in Algorithm (\ref{algorithm:WBS}), instead of starting off from the whole time course, we draw a collection of random intervals and conduct NBS within each interval.  The maximum is chosen to be the maximum CUSUM KS over all the intervals, and the comparisons with the pre-specified threshold are conducted thereafter.  

In both Algorithms~\ref{alg:BS} and \ref{algorithm:WBS}, we set the input $\tau$ as the pre-specified threshold.  This  tunning parameter  directly controls the number of estimated change points,  with larger  values of $\tau$  producing   smaller  number of estimated change points. Our theory  in the next sections will shed some lights on how to choose $\tau$. 

Furthermore, for   $s < t <e$,   the computational cost for  calculating  the  statistic $D_{s,e}^t$  is  $O(  (e-s)n_{s:e} \log n_{s:e}  )$. This  can be seen with a naive calculation based on the merge  sort algorithm, and the fact that  supremum  in  \eqref{eqn:ks}  only needs to be taken over  $z \in  \{  y_{u,i}  \,:\, \,\,\,  s  \leq  u \leq  e,  \,  i = 1,\ldots, n_u    \} $. Hence, for a  choice of $\tau$  that produces  $\tilde{K}$ estimated change points, Algorithm \ref{alg:BS}  has worst case running time  $O(\tilde{K} T n_{1:T} \log n_{1:T}    )$.


\section{Theory}
\label{sec-theory}

In this section, we first prove the consistency of the NBS, and then the  consistency and optimality of NWBS.  To back our optimality claim, we provide information-theoretic lower bounds in Section~\ref{sec-phase}.  Finally, we compare our results with existing methods, in particular with \cite{zou2014nonparametric}.  The proofs, as well as any auxiliary technical results, can be found in the Appendices.

\subsection{Nonparametric  binary segmentation and nonparametric wild binary segmentation}\label{sec-bswbs}

We start by showing consistency for the NBS estimator as defined in Algorithm \ref{alg:BS}. Before arriving at our first result, we state an assumption involving the minimum spacing condition, the minimum  jump size, and the  time horizon $T$.

\begin{assumption}
	\label{as2}
	There exists a constant  $C_{\alpha} >0$  such that  $\kappa \delta \,\geq \, C_{\alpha} T^{\Theta}  $ for some $\Theta \in (9/10,1]$. 
\end{assumption}

Note that  Assumption \ref{as2}  can be thought as a nonparametric version of  Assumption 2  in \cite{fryzlewicz2014wild},  which was used  there to  prove the consistency of BS  in the univariate mean change point detection problem with  sub-Gaussian errors.

The constant $C_{\alpha}$ in Assumption~\ref{as2} is an absolute constant, in the sense that it is not a function of $T$.  Technically speaking, when it is used in Theorem~\ref{thm-bs} below, it is required to be large enough to create enough room for the other constants in Theorem~\ref{thm-bs}.  We do not claim the optimality of the constant $C_{\alpha}$, but it can be tracked in the proofs, which we provide in the Appendices for readability.

We are now  ready to state  our theoretical  guarantee for the NBS  estimator. This proves that we can consistently  estimate  the number  of change points and their locations. 

\begin{theorem}[NBS]\label{thm-bs}
	Suppose  that Assumptions \ref{as1}-\ref{as2}  hold  and
	\begin{equation}
	\label{eqn:tuning}
	2\sqrt{n_{\max}\epsilon}     \,+\, C \sqrt{\log n_{1:T}}   \,<\, \tau\,<\,\frac{ \kappa  \delta n_{\min}  }{8   \sqrt{(e-s)n_{\max}} },
	\end{equation}
	with
	\begin{equation}\label{eq:epsilo.nbs}
	\epsilon \,=\, C_2 \, \log( n_{1:T} )  \,\delta^{-3} \kappa^{-1} (e-s)^{7/2}\frac{n_{\max}^{5/2} }{ n_{\min}^3 }, 
	\end{equation}
	for some positive constant $C_2.$ Then, with probability approaching one  as $T \rightarrow \infty$, the output of  NBS($(0,T),\tau$)  is such that
	\[
	\hat{K} \,=\,K,    \,\,\,\,\,\,\,\text{and}  \,\,\,\,\,\,\,        \underset{k = 1, \ldots, \hat{K} }{\max} \,\vert \hat{\eta}_k - \eta_{k}  \vert \,\leq \, \epsilon.
	\]
	
\end{theorem}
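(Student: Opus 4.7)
The plan is to adapt the classical inductive analysis of binary segmentation (originating with Venkatraman and refined by Fryzlewicz) to the KS-CUSUM setting. Throughout, I would work on a single high-probability ``good event'' on which all CUSUM statistics are close to their population analogues, and then do the whole argument deterministically. Writing $\bar F_{s:e}(z)=\frac{1}{n_{s:e}}\sum_{t=s}^e n_t F_t(z)$ for the population-level weighted CDF and $\tilde D_{s,e}^t(z)$ for the CUSUM obtained by replacing $\widehat F$ by $\bar F$, I would split
\begin{equation*}
D_{s,e}^t(z) \;=\; \tilde D_{s,e}^t(z) \,+\, R_{s,e}^t(z), \qquad |R_{s,e}^t(z)|\le E(z),
\end{equation*}
where $E(z)$ is a pure noise term. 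The goal is to show that on the good event the signal part dominates in a quantitative way that permits the Venkatraman-style localization bound.

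\textbf{Step 1: uniform deviation bound.} For every interval $[s',e']\subseteq[1,T]$ and every $z$, $\widehat F_{s':e'}(z)$ is the mean of $n_{s':e'}$ independent Bernoulli variables. A DKW inequality applied for each interval, then a union bound over the at most $T^2$ intervals, gives an event of probability $\to 1$ on which
\begin{equation*}
\sup_{z}\bigl|\widehat F_{s':e'}(z)-\bar F_{s':e'}(z)\bigr| \;\le\; C\sqrt{\log(n_{1:T})/n_{s':e'}}
\end{equation*}
simultaneously. From this I would deduce that $\sup_{s<t<e}\sup_z |R_{s,e}^t(z)| \le C\sqrt{\log n_{1:T}}$, which explains the second term in the lower bound on $\tau$ in \eqref{eqn:tuning}.

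\textbf{Step 2: pointwise-in-$z$ CUSUM analysis.} For each fixed $z$, the indicators $\mathbbm{1}\{Y_{t,i}\le z\}$ reduce the problem to a classical mean-CUSUM analysis with noise bound of order $\sqrt{\log n_{1:T}}$ and with jump sizes $\kappa_k(z):=|F_{\eta_k}(z)-F_{\eta_{k-1}}(z)|$. The function $t\mapsto \tilde D_{s,e}^t(z)$ is piecewise linear in $t$ with breakpoints at the $\eta_k\in(s,e)$ and can be analyzed by the algebraic identities from \cite{venkatraman1992consistency}. In particular: (i) if $[s,e]$ contains a change point $\eta_k$ with $\min(\eta_k-s,e-\eta_k)\ge \delta$ then, choosing $z^\star$ to be a near-maximizer of $|F_{\eta_k}(\cdot)-F_{\eta_{k-1}}(\cdot)|$, $\max_t |\tilde D_{s,e}^t(z^\star)|$ is at least of order $\kappa\delta n_{\min}/\sqrt{(e-s)n_{\max}}$, exceeding the upper bound in \eqref{eqn:tuning} for $\tau$; (ii) if $\hat b=\argmax_t D_{s,e}^t$ and $\eta_k$ is the nearest true change point, then $|\hat b-\eta_k|$ is controlled by how sharply $|\tilde D_{s,e}^t(z^\star)|$ decays away from its peak versus the noise budget $\sqrt{\log n_{1:T}}$. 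Working through the piecewise-linear algebra at $z^\star$ (and using that the peak value scales linearly in $|\hat b-\eta_k|$ near $\eta_k$) yields
\begin{equation*}
|\hat b-\eta_k|\;\le\; C_2\log(n_{1:T})\,\delta^{-3}\kappa^{-1}(e-s)^{7/2}\,n_{\max}^{5/2}/n_{\min}^{3}\;=\;\epsilon,
\end{equation*}
matching \eqref{eq:epsilo.nbs}. The hard part of the proof is this quantitative localization step: the KS statistic is a supremum over $z$, so one must (a) select $z^\star$ that witnesses the jump at $\eta_k$ but still behaves well along the recursion, and (b) convert the pointwise decay bound into a bound for the maximum over $z$ of the observed CUSUM, which is what the algorithm actually uses. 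Showing that the loss from taking a sup-over-$z$ is absorbed into the noise $E(z)$ (uniformly, thanks to Step~1) is the crucial technical ingredient, and is where the nonparametric setting departs significantly from \cite{fryzlewicz2014wild}.

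\textbf{Step 3: induction on the recursion tree.} I would run an induction over the intervals $(s,e)$ on which NBS is called, with the invariant that every true change point $\eta_k$ is either within $\epsilon$ of $\{s,e\}\cup\{\text{already-detected points}\}$, or satisfies $\min(\eta_k-s,e-\eta_k)\ge \delta-\epsilon$ (which is still $\asymp\delta$ under Assumption~\ref{as2} since $\epsilon=o(\delta)$ by the choice \eqref{eq:epsilo.nbs} together with $\kappa\delta\gtrsim T^\Theta$). Two cases: if $[s,e]$ contains a well-separated true change point, Step~2(i) forces $\max_t D_{s,e}^t>\tau$, so NBS does not stop; Step~2(ii) then forces the detected point $\hat b$ to lie within $\epsilon$ of some $\eta_k$, and the invariant is preserved on $[s,\hat b-1]$ and $[\hat b,e]$. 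If $[s,e]$ contains no well-separated change point, then $\tilde D_{s,e}^t$ is of order $\kappa\epsilon\sqrt{n_{\max}}$ uniformly in $t$, and the upper bound on $\tau$ chosen in \eqref{eqn:tuning} combined with the noise bound from Step~1 force $\max_t D_{s,e}^t \le \tau$, so NBS correctly terminates. Combining the two cases yields $\widehat K=K$ and the localization bound on the good event, completing the proof.
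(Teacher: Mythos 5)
Your proposal reconstructs the high-level architecture of the paper's proof correctly: a uniform concentration event, a pointwise-in-$z$ reduction to the Venkatraman piecewise-linear CUSUM analysis, and an induction over the recursion tree in which one shows (i) well-separated change points trigger detection, (ii) intervals with no undetected change point produce no false positive, and (iii) the detected point is within $\epsilon$ of a true change point. This is essentially the plan the paper follows (compare with Lemmas~\ref{lem:concetration}, \ref{lem2}, \ref{lem3}, \ref{lem4}, \ref{lem:local}, and \ref{lem:rate}).

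There are, however, two points where the proposal does not actually close the argument. First, in Step~1 you invoke the DKW inequality for $\widehat F_{s':e'}$, but the observations in a generic interval $[s',e']$ are independent yet \emph{not} identically distributed (the $F_t$ change at the change points inside the interval), so the classical DKW/Massart bound does not apply as stated. The paper's Lemma~\ref{lem:concetration} resolves this with a hand-built chaining argument (Hoeffding pointwise on a $z$-grid chosen via quantiles of the component CDFs, then a multiplicative Chernoff bound to control the discretization error). Your plan would work but needs that replacement or a non-i.i.d.\ DKW extension.

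Second, and more importantly, you correctly flag the sup-over-$z$ step (your point (b) in Step~2) as the crucial technical ingredient, but the plan you sketch for it — control the decay of $\tilde D^t_{s,e}(z^\star)$ for a \emph{deterministic} $z^\star$ witnessing the jump at $\eta_k$, and absorb the gap to the algorithm's data-dependent $z$ into the noise — does not go through as stated. The algorithm's chosen $\hat z$ is random and $\argmax$-coupled with $\hat b$, and there is no a priori reason that the decay of the population CUSUM at $z^\star$ controls the location of the maximizer of the observed sup-over-$z$ CUSUM. The paper's resolution in Lemma~\ref{lem:rate} is different and essential: it fixes $z_0 = \argmax_z|D^b_{s+1,e}(z)|$ (the algorithm's own random choice), and observes that $b$ is \emph{also} the maximizer over $t$ of $|D^t_{s+1,e}(z_0)|$, because $D^b_{s+1,e}=|D^b_{s+1,e}(z_0)|\ge D^t_{s+1,e}\ge|D^t_{s+1,e}(z_0)|$. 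This reduces the problem to a one-dimensional CUSUM at the fixed (conditional on the data) point $z_0$, for which the Venkatraman decay estimate (Lemma~\ref{lem:local}) applies directly, with the lower bound on $|\Delta^{\eta_k}_{s+1,e}(z_0)|$ coming from $(\ref{eqn:e3})$ via concentration. Without this observation that one can work at the algorithm's own $z_0$, Step~2(ii) of your plan remains a genuine gap.
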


From Theorem~\ref{thm-bs} we can see that
	\[
		\frac{\epsilon}{T} \lesssim \log( n_{1:T} )  \,\delta^{-3} \kappa^{-1} (e-s)^{7/2}\frac{1}{n^{1/2} T} \lesssim \frac{\log(n_{1:T})}{T^{1/8} n^{1/2}} \to 0,
	\]
	where the second inequality holds under Assumption~\ref{as2}.  This implies consistency of Algorithm~\ref{alg:BS}.  Similar calculations, based on Assumption~\ref{as2}, also show that the range of values of $\tau$  defined by \eqref{eqn:tuning} is not empty.


Though NBS is computationally fast and provably consistent, the localization rate implied by \eqref{eq:epsilo.nbs} is not optimal, as we will show below. In order to achieve near minimax-optimality we will turn instead to  the more complicated NWBS estimator. By considering random intervals, 
the NWBS methodology will be consistent in settings where the NBS procedure might not and, furthermore, will achieve faster localization rate. 

\begin{assumption}\label{assumption-3}
Assume that there exists a constant $C_{\mathrm{SNR}} > 0$ such that
	\[
		\sqrt{\log(n_{1:T})}  < C_{\mathrm{SNR}} \kappa \sqrt{\delta} \sqrt{n}.
	\]
\end{assumption}

The constant $C_{\mathrm{SNR}} > 0$ plays a similar role as $C_{\alpha}$ used in Assumption~\ref{as2} and inherits the same discussion.  Assumption~\ref{assumption-3} is essentially a requirement on the rate of the signal-to-noise ratio.

Recall that in the univariate mean change point detection case, it is shown that if 
	\[
		\kappa \sqrt{\delta}/\sigma < \sqrt{\log(T)},
	\]
	then no algorithm is guaranteed to produce consistent change point estimators \citep[e.g.][]{wang2018univariate}.  Note that in this paper, essentially, the data in use are the empirical distribution estimators, and intuitively their  fluctuations are in the order of $\sigma \asymp n^{-1/2}$. Hence,  our next theorem leads to   an  intuitive argument that our method is optimal.

\begin{theorem}[NWBS]\label{thm-wbs}
	Assume the inputs of the NWBS algorithm are as follows:
	\begin{itemize}
	\item the sequence ${\{Y_{t, i}\}_{i = 1}^{n_t}}_{t = 1}^T$ satisfies Assumptions~\ref{as1} and \ref{assumption-3};
	\item the collection of intervals $\{(\alpha_s, \beta_s)\}_{s = 1}^S \subset \{1, \ldots, T\}$, with endpoints drawn independently and uniformly from $\{1, \ldots, T\}$, satisfy
	\begin{equation}
		\label{eqn:intervals_condtion}
		\max_{s = 1, \ldots, S}(\beta_s - \alpha_s) \leq C_S\delta,
	\end{equation}
	 almost surely, for an absolute constant $C_S > 1$; and 
	\item the tuning parameter $\tau$ satisfies
		\begin{equation}\label{eq-thm4-tau}
			c_{\tau, 1}\sqrt{\log(n_{1:T})} \leq \tau \leq c_{\tau, 2}  \kappa \delta^{1/2} \frac{n^{3/2}_{\min}}{n_{\max}},
		\end{equation}
		where $c_{\tau, 1}, c_{\tau, 2} > 0$ are constants.
	\end{itemize}

Let $\{\hat{\eta}_k\}_{k = 1}^{\widehat{K}}$ be the corresponding output of the NWBS algorithm.  Then
	\begin{align}
		& \mathbb{P}\left\{\widehat{K} = K \quad \,\mbox{and} \,\quad  \epsilon_k \leq    \epsilon:= C_{\epsilon}\kappa_k^{-2}\log(n_{1:T}) n^{9}_{\max}n^{-10}_{\min}, \,\,\, \forall k = 1, \ldots, K\right\} \nonumber\\
		\geq & 1 - \frac{24 \log(n_{1:T})}{T^3 n_{1:T}} - \frac{48 T}{n_{1:T} \log(n_{1:T}) \delta} - \exp\left\{\log\left(\frac{T}{\delta}\right) - \frac{S\delta^2}{16T^2}\right\}, \label{eq-thm4-result}
	\end{align}
	where $\epsilon_k  \,=\,\vert \hat{\eta}_k - \eta_k \vert$ for $k= 1,\ldots,K$, and $C_{\epsilon}$ is  a positive  constant.
\end{theorem}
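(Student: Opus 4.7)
The proof follows the now-standard template for analyzing wild binary segmentation, adapted to the nonparametric CUSUM. The plan is to first identify two high-probability good events, then condition on their intersection and argue inductively over the recursion tree of NWBS.

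The first good event, call it $\mathcal{A}_1$, controls the uniform deviation of the empirical CUSUM $D_{s,e}^t(z)$ from its population counterpart $\bar D_{s,e}^t(z) := \mathbb{E}\,D_{s,e}^t(z)$, simultaneously over all integer triples $s<t<e$ with $e-s \leq C_S\delta$ and all $z\in\mathbb{R}$. To establish that $\mathbb{P}(\mathcal{A}_1^c)$ is small I would apply the Dvoretzky--Kiefer--Wolfowitz inequality to each of the two empirical CDFs $\widehat F_{s:t}$ and $\widehat F_{(t+1):e}$, then take a union bound over the $O(T^3)$ admissible triples. This yields a uniform fluctuation of order $\sqrt{\log n_{1:T}}$, which is what drives the lower end of the tuning-parameter range in \eqref{eq-thm4-tau} and ultimately produces the first two terms in the failure probability \eqref{eq-thm4-result}. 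The second good event, $\mathcal{A}_2$, asks that for every true change point $\eta_k$ there exists a random interval $(\alpha_m,\beta_m)$ that straddles $\eta_k$ tightly, in the sense that $\alpha_m\in[\eta_{k-1},\eta_k - c_0\delta]$ and $\beta_m\in[\eta_k + c_0\delta,\eta_{k+1}]$ for a small constant $c_0>0$. Because each endpoint is uniform on $\{1,\ldots,T\}$, a Bernoulli-style argument together with a union bound over the at most $T/\delta$ change points delivers the last term $\exp\{\log(T/\delta) - S\delta^2/(16T^2)\}$ of \eqref{eq-thm4-result}.

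Conditional on $\mathcal{A}_1\cap\mathcal{A}_2$ I would induct on the recursion tree. The inductive invariant is that every recursive call $(s,e)$ satisfies: (i) every previously declared change point is within $\epsilon$ of a distinct true change point; and (ii) $(s,e)$ either contains no undetected true change point, or contains at least one that is separated by at least $\delta-\epsilon$ from the boundary $\{s,e\}$. In the first case, $\bar D_{s_m,e_m}^t \equiv 0$ on every random sub-interval lying in $(s,e)$, so by $\mathcal{A}_1$ the maximum $a_{m^\ast}$ is bounded by $C\sqrt{\log n_{1:T}}<\tau$ and the recursion halts correctly. In the second case, $\mathcal{A}_2$ supplies an interval $(s_m,e_m)\subset(s,e)$ that isolates exactly one undetected $\eta_{k_j}$ with slack of order $\delta$ on each side, and a direct calculation of $\bar D_{s_m,e_m}^{\eta_{k_j}}$ shows it is of order $\kappa_{k_j}\sqrt{n\delta}$ up to $n_{\min}/n_{\max}$ ratios. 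By Assumption~\ref{assumption-3} and the upper half of \eqref{eq-thm4-tau}, this dominates $\tau$ plus the concentration error, so $a_{m^\ast}>\tau$ and a new estimate $b_{m^\ast}$ is declared.

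The main obstacle is the localization bound $|b_{m^\ast}-\eta_{k_j}|\leq\epsilon$ with the advertised rate. My plan is a nonparametric analogue of the two-sided quadratic-decrease argument from the mean case. Since $b_{m^\ast}$ maximizes $D^{t}_{s_{m^\ast},e_{m^\ast}}$ in $t$, we have $D^{b_{m^\ast}}_{s_{m^\ast},e_{m^\ast}} \geq D^{\eta_{k_j}}_{s_{m^\ast},e_{m^\ast}}$, and on $\mathcal{A}_1$ both sides agree with their population versions up to $O(\sqrt{\log n_{1:T}})$. The separate lemma that needs to be proved in the KS setting is that, for $t$ inside the isolating interval with $r:=|t-\eta_{k_j}|$,
\[
\bar D^{\eta_{k_j}}_{s_m,e_m} - \bar D^{t}_{s_m,e_m} \;\gtrsim\; \kappa_{k_j}^{2}\,r\,\sqrt{n}\,(n_{\min}/n_{\max})^{C}.
\]
The argument here is asymmetric: pick a witness $z^\star$ with $|F_{\eta_{k_j}}(z^\star)-F_{\eta_{k_j}+1}(z^\star)|\geq \kappa_{k_j}$, lower bound $\bar D^{\eta_{k_j}}_{s_m,e_m}$ by $\bar D^{\eta_{k_j}}_{s_m,e_m}(z^\star)$, and upper bound $\bar D^{t}_{s_m,e_m}$ by the full supremum, then track how the scalar $\bar D^{t}_{s_m,e_m}(z^\star)$ moves as $t$ sweeps across $\eta_{k_j}$ using a linear identity analogous to the mean case. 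Combining this lemma with the argmax inequality and the $O(\sqrt{\log n_{1:T}})$ concentration error yields $r\lesssim \kappa_{k_j}^{-2}\log(n_{1:T})\,n_{\max}^{9}n_{\min}^{-10}$. Iterating the induction, every true change point is localized within $\epsilon$ and no spurious points are added, giving $\widehat{K}=K$ together with the uniform bound $\max_k\epsilon_k\leq\epsilon$ claimed in \eqref{eq-thm4-result}.
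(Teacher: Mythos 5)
Your event structure (concentration event $\mathcal{A}_1$ and interval-coverage event $\mathcal{S}$), the union-bound derivation of the failure probability, and the induction over the recursion tree all match the paper's proof. The detection/halting argument in the case of no undetected change point and the detection argument via Lemma~\ref{lem3} for an undetected change point straddled by some $(\alpha_m,\beta_m)$ also match in spirit, modulo the paper's additional case-splitting in \textbf{Step~1} for change points that have already been localized but still lie inside $(s,e)$.

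The genuine gap is in the localization step. Your plan is to use the raw argmax inequality $D^{b}_{s_m,e_m}\geq D^{\eta_k}_{s_m,e_m}$ together with a one-sided population decrease lemma, then cancel the concentration error $\gamma\asymp\sqrt{\log n_{1:T}}$ once. Even granting the decrease lemma (and setting aside that the slope is $\propto\kappa$, not $\kappa^2$---there is no mechanism producing $\kappa^2$ in the population CUSUM's local linear decay), this yields $|b-\eta_k|\lesssim\gamma\sqrt{\delta}/(\kappa\sqrt{n})$: one factor of $\gamma$, one factor of $\sqrt{\delta}$. Under Assumption~\ref{assumption-3} this is bounded by a constant multiple of $\delta$, so it suffices for consistency but \emph{not} for the claimed rate $\kappa^{-2}\gamma^2\,n_{\max}^{9}n_{\min}^{-10}$, which is quadratic in $\gamma$, free of $\delta$, and matches the minimax lower bound up to logarithms. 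The paper obtains the sharp rate through a very different mechanism: Lemma~\ref{lem-anova} recasts the argmax inequality as a sum-of-squares (ANOVA) comparison $\bigl\|Y_{s,e}-\mathcal{P}^{b}_{s,e}(Y_{s,e})\bigr\|^2\leq\bigl\|Y_{s,e}-\mathcal{P}^{\eta_k}_{s,e}(\mu_{s,e})\bigr\|^2$, and then Lemma~\ref{lem-21} expands the resulting difference into a ``signal'' term growing linearly in $|b-\eta_k|$ and cross terms $(I),(II.1),(II.2),(II.3)$. The crucial gain is that the stochastic cross term over the window between $\eta_k$ and $b$ is of order $\sqrt{|b-\eta_k|}\,\gamma\kappa$---a square-root, not linear, dependence on $|b-\eta_k|$---so matching it against the linear signal term delivers $|b-\eta_k|\lesssim\kappa^{-2}\gamma^2$. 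Your linear-CUSUM argument has no mechanism to exploit this $\sqrt{r}$ scaling and therefore cannot reach the advertised rate; you would need to replace that step entirely with the projection/ANOVA decomposition (including Lemmas~\ref{lem-anova}, \ref{lem:local}, and \ref{Lem-17} for the population discretization).

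A secondary, more minor point: applying DKW directly to $\widehat F_{s:t}$ requires care because the summands across a change point are independent but not identically distributed; the classical DKW constant does not apply. The paper instead works with the weighted sum $\sum_k w_k\sum_i(\mathbbm{1}_{\{Y_{k,i}\leq z\}}-F_k(z))$, controls it at a finite grid by Hoeffding, and handles the continuum in $z$ via a Chernoff bound on the cell counts $u_j$ (Lemma~\ref{lem:concetration}). Your DKW route can be patched (e.g., via a non-i.i.d.\ empirical-process bound), but as written it is not quite correct.
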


Few remarks are in order.
\begin{itemize}
\item Based on Assumption~\ref{assumption-3}, the range of tuning parameters $\tau$ defined in  \eqref{eq-thm4-tau} is not empty, and the upper bound of the localisation error rate satisfies $\max_{k = 1, \ldots, K}\epsilon_k/T \to 0$ as $T$ grows unbounded.   In addition, as long as we choose 
\[
S \gtrsim \log\left(\frac{T}{\delta}\right)\frac{T^2}{\delta^2},
\]
the probability in \eqref{eq-thm4-result} tends to 1 as $T \to \infty$, which shows that NWBS is consistent.


\item The condition  \eqref{eqn:intervals_condtion} is somewhat unsatisfactory, as it assumes some knowledge of the constant $C_s$. This may not be available in practice, even though in many cases an educated guess on the minimal spacing is not too unreasonable to assume. In our proofs we do need condition \eqref{eqn:intervals_condtion}. We remark that such condition does not appear among the assumptions of Theorem 3.2 in \cite{fryzlewicz2014wild} due to a flaw in the proof of that result. On the other hand, in their alternative analysis of WBS, \cite{wang2016high} do not impose \eqref{eqn:intervals_condtion} but formulate instead a different condition that, like in our case, requires knowledge of a lower bound for $\delta$.
Luckily, condition  \eqref{eqn:intervals_condtion} is needed  to guarantee minimax rate optimality of NWBS, but not its consistency. 
  For instance, assuming that 
\begin{equation}
\label{eqn:new_condition}
\kappa \sqrt{\delta n}(\delta/T) \gtrsim \log^{1/2+\xi}(n_{1:T}),
\end{equation}
a slightly more stringent setting than in Assumption~\ref{assumption-3}, we have  that, with probability  tending to one,
\[
\max_{k = 1, \ldots, K}\epsilon_k   =  \epsilon  \lesssim  \frac{\log (n_{1:T})}{\kappa^2}\frac{n^9_{\max}}{n^{10}_{\min}}\left(\frac{T}{\delta}\right)^2.
\]
Thus,  under (\ref{eqn:new_condition}), instead  of Assumption \ref{assumption-3}, we have that 
NWBS is still consistent in the sense that $\epsilon/T \to 0$ as $T \to \infty$. This conclusion can be obtained  by  following the proof of Theorem \ref{thm-wbs}, replacing  $C_S$  with $T/\delta$
which explains the extra $T/\delta$ factor.



	
\item An important aspect of the \Cref{thm-wbs}, which strictly improves upon the guarantees claimed in \cite{fryzlewicz2014wild},
	is that it yields localization rates that are {\it local,} in the sense that each change point $\eta_k$ is associated its own localization rate,  depending on $\kappa_k$, the magnitude of the corresponding distributional change.

\item To better  understand Theorem \ref{thm-wbs},   recall  that in the univariate  mean change point detection problem, Lemma~2 in \cite{wang2018univariate} showed that the lower bound of the localisation rate is of the order $\sigma^2 \kappa^{-2}$. Moreover, if  $n_{\min} \asymp n_{\max} \asymp n$, then 
 $\epsilon$ in Theorem \ref{thm-wbs} is of the order $ n^{-1} \kappa^{-2}   \log (n_{1:T}) $. Hence, intuitively, the upper  bound $\epsilon$ in  Theorem  \ref{thm-wbs} is optimal off by a  logarithm factor.

\end{itemize}

The detailed proofs of Theorems~\ref{thm-bs} and \ref{thm-wbs} can be found in Appendix \ref{sec-appa}.  Here we sketch both results' roadmap, which proceeds by induction due to the nature of the NBS and NWBS algorithms.  We first control the deviances between the  statistics   $\{D_{s,e}^t \,:\,   1\leq s <t \leq e \leq T  \}$ and  their population versions, by exploiting concentration inequalities.  The rest of the proofs are conducted within the so-called good events, the probabilities of which tend to 1 as $T$ grows unbounded, that such fluctuations remain within an appropriate range. Next, we show that the population CUSUM KS statistics (Definition \ref{def:pop_cusum} in Appendix \ref{sec-appa}) achieve their maxima at the true change points.  Consequently, within the good events, our algorithms will correctly detect or reject the existence of change points.  Finally, we show that the CUSUM KS statistics decrease fast enough around their maxima, which leads to the upper bounds on the localisation error rates.  

\subsection{Phase transition and minimax optimality}\label{sec-phase}

In this subsection,  we  prove that the NWBS algorithm is optimal, in the sense of nearly achieving minimax localization rates across a range of models for which the localization task is possible.  Towards that end, recall that in Theorem~\ref{thm-wbs}, we have shown that Algorithm~\ref{algorithm:WBS} provides consistent change point estimators under the assumption that 
	\begin{equation}\label{eq-high-snr}
		\kappa \sqrt{\delta} \sqrt{n} \gtrsim \sqrt{\log(n_{1:T})}.
	\end{equation}
	In Lemma~\ref{lemma-low-snr} below, we will show that if
	\begin{equation}\label{eq-low-snr}
		\kappa \sqrt{\delta} \sqrt{n} \lesssim 1,
	\end{equation}
	then no algorithm is guaranteed to be consistent uniformly over all possible change point problems. More precisely, we will construct a change point setting in which, under any choice of the parameters obeying the scaling \eqref{eq-low-snr}, consistent change point localization is impossible.  In light of \eqref{eq-high-snr}, \eqref{eq-low-snr} and \Cref{thm-wbs}, we have found a phase transition over the space of model parameters that separate scaling for which the localization task is impossible, from combinations of model parameters for which there exists an algorithm -- namely NWBS -- that is consistent. The separation between the these two regions is sharp, saving for a $\sqrt{\log(n_{1:T})}$ term.

\begin{lemma}\label{lemma-low-snr}
Let ${\{Y_{t, i}\}_{i=1}^n}_{t = 1}^T$ be a time series satisfying Assumption~\ref{as1} with one and only one change point.  Let $P^T_{\kappa, n, \delta}$ denote the corresponding joint distribution.  For any $0 < \zeta < 1/\sqrt{2}$, denote
	\[
		\mathcal{P}_{\zeta}^T = \left\{P^T_{\kappa, n, \delta}: \, \delta = \min\left\{\left\lfloor \frac{\zeta^2}{\kappa^2 n} \right\rfloor, \, \left\lfloor \frac{T}{3} \right\rfloor \right\}\right\}.
	\]
	Let $\hat{\eta}$ and $\eta(P)$ be an estimator and the true change point, respectively.  It holds that
	\[
		\inf_{\hat{\eta}} \sup_{P \in \mathcal{P}_{\zeta}^T} \mathbb{E}_P\bigl(\bigl|\hat{\eta} - \eta(P)\bigr|\bigr) \geq \frac{1 - 2\zeta^2}{3} T,
	\]
	where the infimum is over all possible estimators of the change point location.
\end{lemma}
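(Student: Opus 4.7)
I would prove this via Le Cam's two-point method. The goal is to exhibit two joint distributions $P_1, P_2 \in \cP_\zeta^T$ whose associated change points are of order $T$ apart but whose laws over the entire time series are close enough in total variation that no estimator can separate them reliably.

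I would first fix $n$ (say $n = 1$) and choose $\kappa$ so that $\delta = \min\{\lfloor \zeta^2/(\kappa^2 n)\rfloor, \lfloor T/3\rfloor\}$ is at most $T/6$---for instance, by calibrating $\kappa^2 n \asymp \zeta^2/T$. Take $F^{(0)} = \mathrm{Bern}(1/2)$ and $F^{(1)} = \mathrm{Bern}(1/2 + \kappa)$, whose KS distance is exactly $\kappa$ by direct verification. Let $P_1 \in \cP_\zeta^T$ be the joint law of the whole sample with a single change point at $\eta_1 = \delta$ and $P_2 \in \cP_\zeta^T$ the analogous law with change point at $\eta_2 = T - \delta$; then $|\eta_1 - \eta_2| = T - 2\delta \geq 2T/3$.

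The two product measures agree on times $[1,\delta] \cup [T-\delta+1, T]$ and differ only on the middle block $[\delta+1, T-\delta]$ of length $T - 2\delta$, where $P_1$ uses $F^{(1)}$ while $P_2$ uses $F^{(0)}$. By tensorization of the Hellinger affinity across these disagreeing time points (each with $n$ i.i.d.\ samples), together with the bound $H^2(F^{(0)}, F^{(1)}) \leq \kappa^2$ for Bernoulli pairs and the constraint $n\kappa^2\delta \leq \zeta^2$ built into the definition of $\cP_\zeta^T$, I would derive a lower bound of the form $1 - \TV(P_1, P_2)^2 \geq 1 - 2\zeta^2$, equivalently $\TV(P_1, P_2) \leq \sqrt{2}\zeta$.

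Finally, applying Le Cam's inequality via the decomposition $dP_i = (dP_i - dP_{3-i})_+ + (dP_1 \wedge dP_2)$ and the triangle inequality $|\hat\eta - \eta_1| + |\hat\eta - \eta_2| \geq |\eta_1 - \eta_2|$ yields
\[
\E_{P_1}|\hat\eta - \eta_1| + \E_{P_2}|\hat\eta - \eta_2| \;\geq\; |\eta_1 - \eta_2|\bigl(1 - \TV(P_1, P_2)\bigr),
\]
from which
\[
\sup_{P \in \cP_\zeta^T} \E_P|\hat\eta - \eta(P)| \;\geq\; \tfrac{|\eta_1 - \eta_2|}{2}\bigl(1 - \TV(P_1, P_2)\bigr) \;\geq\; \tfrac{T}{3}(1 - 2\zeta^2).
\]
The main obstacle is the sharp upper bound on $\TV(P_1, P_2)$: the naive Pinsker bound only gives $\TV \lesssim \zeta$, which is too loose to produce the quadratic factor $1 - 2\zeta^2$; one must rely on a Hellinger- or Bretagnolle--Huber-type inequality to extract this dependence. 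The restriction $\zeta < 1/\sqrt{2}$ in the statement is then precisely what ensures the TV bound stays strictly below one and the lower bound remains positive.
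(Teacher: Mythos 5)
Your overall strategy (Le Cam two-point, two change-point locations far apart, product TV bound) is the same as the paper's, but there are two substantive gaps.

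\textbf{The design of the pair $(P_1,P_2)$ is wrong.} You keep the ``before'' and ``after'' distributions in the same roles in both hypotheses, so the region where $P_1$ and $P_2$ disagree is the \emph{middle} block $[\delta+1, T-\delta]$, which has $T-2\delta$ time points---at least $T/3$ of them, and as many as $T-2$. The paper instead \emph{flips} the roles: $P_0$ goes from $F^{(0)}$ to $F^{(1)}$ while $P_1$ goes from $F^{(1)}$ to $F^{(0)}$. The two laws then agree on the entire middle block (both are $F^{(1)}$ there) and disagree only on the two end blocks $[1,\delta]\cup[T-\delta+1,T]$, of total size $2\delta$. That flip is essential: it makes the amount of ``disagreeing'' data scale as $\delta n$ rather than $(T-2\delta)n$, which is exactly what the constraint $\delta n\kappa^2\le\zeta^2$ is built to control. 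With your middle-block design, when you push $\delta$ down to $T/6$ to get $|\eta_1-\eta_2|\ge 2T/3$, the tensorized Hellinger distance scales as $(T-2\delta)n\kappa^2 \le (T-2\delta)\zeta^2/\delta$, which is already $4\zeta^2$ at $\delta=T/6$ and blows up as $\delta$ decreases; the TV bound you need does not hold.

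\textbf{The step from your TV bound to the claimed constant does not follow.} You derive $\TV(P_1,P_2)\le\sqrt 2\,\zeta$ (via Hellinger tensorization and a Fuchs--van de Graaf/Bretagnolle--Huber-type relation), and then write the conclusion as if $1-\TV \ge 1-2\zeta^2$. But for $\zeta<1/\sqrt 2$ one has $\sqrt 2\,\zeta > 2\zeta^2$, so $\TV\le\sqrt 2\,\zeta$ is strictly weaker than $\TV\le 2\zeta^2$ and does not give the stated $\frac{1-2\zeta^2}{3}T$. This is not a cosmetic slip; no Hellinger-route bound delivers a TV estimate that is \emph{quadratic} in $\zeta$. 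The paper's quadratic factor comes from a genuinely different route: it uses Dirac masses $\delta_0,\delta_1$ (so each differing marginal is mutually singular, with per-observation TV equal to $1$) together with the \emph{additive} inequality $\TV(\otimes_i P_i,\otimes_i Q_i)\le\sum_i\TV(P_i,Q_i)$ (Steerneman's Eq.~(1.2)), which yields $\TV(P_0,P_1)\le 2\delta n$; since the class constrains $\delta n \kappa^2\le\zeta^2$ with $\kappa=1$, this is $\le 2\zeta^2$ directly. Your smooth Bernoulli marginals have per-observation TV $\kappa<1$, so the additive bound only gives $2\delta n\kappa\le 2\zeta^2/\kappa$, which is worse than $2\zeta^2$.

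Secondary issues worth flagging: your calibration $\kappa^2 n\asymp\zeta^2/T$ actually forces $\lfloor \zeta^2/(\kappa^2 n)\rfloor\asymp T>T/3$, so the min in the definition of $\cP_\zeta^T$ returns $\delta=\lfloor T/3\rfloor$, not ``at most $T/6$'' as you assert; and the Le Cam step as you wrote it gives $\sup\ge\frac{|\eta_1-\eta_2|}{2}(1-\TV)$, so you need $|\eta_1-\eta_2|\ge 2T/3$ to reach a $T/3$ prefactor, which is in direct tension with the TV control as explained above. In short: flip the pre/post distributions so the disagreement lives on the short end blocks, and replace the Hellinger step with the additive total-variation bound (with mutually singular marginals) to obtain the quadratic factor.
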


In our next result, we derive a minimax lower bound on the localization task, which applies to nearly all combinations of model parameters outside the impossibility regions found in \Cref{lemma-low-snr}.

\begin{lemma}\label{lemma-error-opt}
Let $\{Y_{t, i}\}_{i=1, t = 1}^{n, T}$ be a time series satisfying Assumption~\ref{as1} with one and only one change point.  Let $P^T_{\kappa, n, \delta}$ denote the corresponding joint distribution.  Consider the class of distributions
	\[
		\mathcal{Q} = \left\{P^T_{\kappa, n, \delta}: \, \delta < T/2,\, \kappa < 1/2, \, \kappa\sqrt{\delta}\sqrt{n} \geq \zeta_T \right\},
	\]
	for any sequence $\{ \zeta_T \}$ such that $\lim_{T \rightarrow \infty} \zeta_T = \infty $. Let $\hat{\eta}$ and $\eta(P)$ be an estimator and the true change point, respectively.  Then, for all $T$ large enough, it holds that 
	\[
		\inf_{\hat{\eta}} \sup_{P \in \mathcal{Q}} \mathbb{E}_P\bigl(\bigl|\hat{\eta} - \eta(P)\bigr|\bigr) \geq \max \left\{ 1, \frac{1}{2} \Big\lceil\frac{1}{n\kappa^2} \Big\rceil e^{-2} \right\},
	\]
	where the infimum is over all possible estimators of the change point locations.
	
\end{lemma}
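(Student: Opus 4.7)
The plan is to apply the classical two-point Le~Cam method to a carefully chosen pair of change-point models inside $\mathcal{Q}$ whose single change points lie at different locations, and then to use the Bretagnolle--Huber inequality to turn a KL-divergence bound into the claimed minimax lower bound. The target separation between the two change points will be $d=\lceil 1/(n\kappa^2)\rceil$, matching the rate appearing in the conclusion.

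Fix $\kappa\in(0,1/2)$ and $n$, and set $d=\lceil 1/(n\kappa^2)\rceil$. Since $\zeta_T\to\infty$, for all $T$ sufficiently large I can choose an integer $\delta$ satisfying both $\kappa\sqrt{\delta n}\ge \zeta_T$ and $2\delta+d\le T$. Let $F_0=\mathrm{Bern}(1/2-\kappa/2)$ and $F_1=\mathrm{Bern}(1/2+\kappa/2)$; these two CDFs are at KS distance exactly $\kappa$. Define two laws on the data $\{Y_{t,i}\}$: under $P_0$ the change point is $\eta_0=\delta$, with pre-change law $F_0$ and post-change law $F_1$; under $P_1$ the change point is $\eta_1=\delta+d$, with the same pre- and post-change laws. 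Both $P_0,P_1$ lie in $\mathcal{Q}$ by construction.

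The two product measures agree at every time point except $t\in(\delta,\delta+d]$, where $P_0$ uses $F_1$ while $P_1$ uses $F_0$, each with $n$ i.i.d.\ observations. Hence
\[
\mathrm{KL}(P_0\|P_1) \;=\; dn\cdot\mathrm{KL}(F_1\|F_0)\;=\;dn\,\kappa\log\tfrac{1+\kappa}{1-\kappa}.
\]
Using $\log\frac{1+\kappa}{1-\kappa}\le 2\kappa/(1-\kappa^2)$ together with $\kappa<1/2$ and $d\le 1/(n\kappa^2)+1$, the aggregate KL is controlled by an absolute constant; in the regime in which the second term of the conclusion is nontrivial (namely $n\kappa^2$ small), the cleanest choice of $d$ yields $\mathrm{KL}(P_0\|P_1)\le 2$. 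Bretagnolle--Huber then gives, for the event $A=\{\hat\eta\ge (\eta_0+\eta_1)/2\}$,
\[
P_0(A)+P_1(A^c)\;\ge\;\tfrac12\exp\!\bigl(-\mathrm{KL}(P_0\|P_1)\bigr)\;\ge\;\tfrac12 e^{-2}.
\]
Since $A\subseteq\{|\hat\eta-\eta_0|\ge d/2\}$ and $A^c\subseteq\{|\hat\eta-\eta_1|\ge d/2\}$, Markov yields
\[
\sum_{i=0,1}\E_{P_i}|\hat\eta-\eta_i|\;\ge\;\tfrac{d}{2}\bigl(P_0(A)+P_1(A^c)\bigr)\;\ge\;\tfrac{d}{4}e^{-2},
\]
and hence $\sup_{P\in\mathcal{Q}}\E_P|\hat\eta-\eta(P)|\gtrsim d\,e^{-2}=\lceil 1/(n\kappa^2)\rceil e^{-2}$; explicit book-keeping of the constants (or a slightly sharper Le~Cam-type inequality) yields the stated factor $\tfrac12$. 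The trivial lower bound of $1$ follows because $|\hat\eta-\eta(P)|$ takes nonnegative integer values and the two-point family above precludes any estimator from being exact on both $P_0$ and $P_1$.

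The main obstacle is the KL budgeting: one needs $\mathrm{KL}(F_1\|F_0)\le C\kappa^2$ with a small explicit $C$ uniformly in $\kappa<1/2$, and then $d$ must be chosen so that $dn\cdot\mathrm{KL}(F_1\|F_0)$ stays below the constant that appears in $e^{-2}$; the symmetric Bernoulli construction is exactly what makes this budgeting clean. Once the scalar KL estimate is in hand, the probabilistic part is a routine application of Le~Cam/Bretagnolle--Huber, and the membership $P_0,P_1\in\mathcal{Q}$ is immediate from $\kappa<1/2$, $\delta<T/2$, and the SNR condition $\kappa\sqrt{\delta n}\ge\zeta_T$, all of which hold simultaneously for $T$ large.
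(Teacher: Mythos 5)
Your argument is correct and is essentially the paper's: a two--point Le~Cam reduction with separation $\lceil 1/(n\kappa^2)\rceil$ between the two change points, an explicit pair of CDFs at Kolmogorov--Smirnov distance exactly $\kappa$, a product-KL computation over the $d$ differing time blocks of size $n$, and the Bretagnolle--Huber bound (the paper invokes it as Lemma~2.6 of Tsybakov). The only difference is cosmetic --- you use the symmetric Bernoulli pair $\mathrm{Bern}(1/2\pm\kappa/2)$, whereas the paper uses the uniform distribution on $[0,1]$ against a piecewise-linear CDF; both give per-observation KL of order $2\kappa^2$, and both (yours and the paper's) have the same small bookkeeping slack in that $dn\kappa^2\ge 1$ forces $\mathrm{KL}(P_0,P_1)$ slightly above $2$ rather than at most $2$, which you rightly flag as absorbed into the constant.
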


The above lower bound matches, saving for a logarithm factor, the localization rate for  NWBS we have established in  \Cref{thm-wbs}, thus showing that NWBS is nearly minimax rate-optimal.

\subsection{Comparisons}\label{sec-comp}

The comparisons between Algorithms~\ref{alg:BS} and \ref{algorithm:WBS} follow  the same lines as those in other change point detection problems.  Both  algorithms can be conducted in polynomial time. Also, under suitable conditions, both algorithms provide consistent change point estimators.  On one hand, Algorithm~\ref{alg:BS} is computationally cheaper than Algorithm~\ref{algorithm:WBS}, and contains fewer tuning parameters; but on the other hand, Algorithm~\ref{algorithm:WBS} requires a weaker signal-to-noise ratio and achieves smaller localisation error rates.  In fact, as we have explained in Section \ref{sec-phase}, Algorithm~\ref{algorithm:WBS} is optimal in both senses, off by a logarithm factor.

We can also compare our rates with those in the univariate mean change point detection problem, which assumes sub-Gaussian data \citep[e.g.][]{wang2018univariate}.  On one hand, this comparison inherits the main arguments when comparing parametric and nonparametric modeling methods in general.  Especially with the general model assumption we impose on the underlying distribution functions, we enjoy  risk-free from model mis-specification.  On the other hand, seemingly surprisingly, we achieve the same rates of those in the univariate mean change point detection case, even though sub-Gaussianity is assumed thereof.  In fact, this is expected.  Note that we are using the empirical distribution function in our CUSUM KS statistic, which is essentially a weighted Bernoulli random variable at every $z \in \mathbb{R}$.  Due to the fact that Bernoulli random variables are sub-Gaussian, and that the empirical distribution functions are step functions with knots only at the sample points, we are indeed to expect the same rates.

Furthermore,  the H-SMUCE procedure from \cite{pein2017heterogeneous}   can also be compared  to NWBS. Assuming Gaussian errors, $\delta \gtrsim T $, and  $K = O(1)$, Theorems 3.7 and 3.9  from \cite{pein2017heterogeneous}  proved that H-SMUCE  can
 consistently estimate  the number of change points,  and that $\epsilon  \lesssim  r(T)$,  for  any $r(T)$ sequence such that $r(T)/\log (T)  \,\rightarrow \,   \infty$.  This is weaker than our upper bound  that guarantees  $\epsilon    \lesssim\log T$. In addition, NWBS can handle  changes in variance  when  the mean remains constant, a setting where it is unknown if H-SMUCE  is consistent. 
 
 Another interesting contrast can be made between NWBS and the multiscale quantile segmentation (MQS) method recently introduced by \cite{vanegas2019multiscale}. Both of these algorithms make no assumption about the distributional form of the CDFs $F_t$. However, MQS is designed to identify changes in known quantiles. This is not a requirement  for NWBS which can detect any type of changes without previous knowledge of their  nature. As for statistical  guarantees, translating to our notation,  provided that  $\delta \gtrsim \log  (T) $, MQS can consistently estimate  the number of change points and  have  $\epsilon  \lesssim \log  (T)$. This matches our theoretical guarantees in Theorem \ref{thm-wbs}.%
 
To conclude this subsection, we would like to provide a thorough comparison between the guarantees of the NWBS algorithms, described in Theorem~\ref{thm-wbs}, and the properties of the methodology of \cite{zou2014nonparametric} for nonparametric change point detection. 
\begin{itemize}
	\item The approach from \cite{zou2014nonparametric} is based  on a Binomial likelihood function integrated over the whole support with a properly chosen weight function. In contrast,  our  algorithms exploit the CUSUM-KS statistics and does not require specifying a weight function nor integration calculations.  
	\item The conditions assumed in \cite{zou2014nonparametric} are, in general,  more stringent  than ours.  For instance, they required the CDF functions to be continuous (see Assumption (A1) thereof), while our results hold with arbitrary distribution  function.  Furthermore,  the empirical CDF over the whole data is assume to converge, almost surely, converge to the true CDF, uniformly on the support (see Assumption (A3) thereof).  This condition holds automatically if the number of change points is fixed by the renown Glivenko-Canteli theorem.  We do not need this condition, and we allow for the number of change points to diverge as the number of total time points grows unbounded.  Finally, in \cite{zou2014nonparametric}, the size of the distributional change at the change points is measured in a sophisticated way involving an integration over an appropriately chosen Kullback--Leibler divergence (see Assumption (A4) thereof). In contrast, we take this quantity to be the KS distance between the corresponding CDFs, see (\ref{eq-as1-kappa}).  Arguably, our characterization is more interpretable.
	\item NWSB is shown to enjoy stronger properties  under weaker conditions.  In detail:
	\begin{itemize}
		\item Algorithm  \ref{algorithm:WBS}  can handle the case when $\delta \gtrsim \log(T)$ and $K \lesssim T/\log(T)$ with a guaranteed localization rate  $\epsilon/T \lesssim \log(T)/T$.  On the contrary, this case does not satisfy the conditions in the Theorem~1 in \cite{zou2014nonparametric}. 
		\item If we let the number of change points to be of order $O(1)$ in  Theorem~2  of \cite{zou2014nonparametric}, then, again translating into our notation, their method yields  $\epsilon \gtrsim \log^{2+c}(T)$, for $c > 0$; while we have $\epsilon \lesssim \log(T)$.
		\item Our results and conditions involve the magnitude of the jump sizes, allowing for  the minimum jump size to decay to $0$. In contrast, \cite{zou2014nonparametric}  constraints  the jump sizes to be bounded away from $0$, and the result localization rate does not explicitly involve the jump sizes.  
		\item There appears to be somewhat of a conflict between the Corollary~1 and Theorem~1 in \cite{zou2014nonparametric}.  To be specific, translating their notation to ours, if $\epsilon = O_p(1)$ as guaranteed by Corollary~1, then the number of change points has to be of order $o(1)$ to ensure that the conditions in Theorem~1 are met.  The latter implies that their results only hold when there is no change point. 
	\end{itemize}
	
\end{itemize}

\subsection{Choice of tuning parameter}
\label{sec:tuning_parameter}

The key tuning parameter in Algorithms  \ref{alg:BS} and \ref{algorithm:WBS} is $\tau$, a parameter that in essence determines whether a candidate change point should be included in their output.  In this subsection, we focus on how to pick $\tau $   in Algorithm \ref{algorithm:WBS}, although the discussion below remains  valid for  NBS by replacing  Algorithm \ref{algorithm:WBS} with Algorithm \ref{alg:BS}.

Notice that, in NWBS, if we vary $\tau$ along the real line from $\infty$ to $0$, then we start collecting more and more change points. In particular, if all the other inputs - namely, $\{Y_{t,i}\}, \, \{(\alpha_m, \beta_m)\}$ - are kept fixed, then $\mathcal{B}(\tau_1) \subseteq \mathcal{B}(\tau_2)$, for  $\tau_1 \geq \tau_2$, where $\mathcal{B}(\tau)$ is the output of Algorithm~\ref{algorithm:WBS} with input $\tau$.




Next we proceed to  construct  two algorithms  that can be used for model comparison.  The first of these, Algorithm~\ref{algorithm:compare}, is a generic procedure  that can be used  for merging  two collections of estimated   change points $\mathcal{B}_1 $  and $\mathcal{B}_2$. Algorithm \ref{algorithm:compare}  deletes  from $\mathcal{B}_1 \cup \mathcal{B}_2$ potential  false  positives  by  testing their validity one by one using test statistics  based on the CUSUM KS.  However,  Algorithm \ref{algorithm:compare}  does not scan for potential false positives in the set $\mathcal{B}_1  \cap \mathcal{B}_2$.

\begin{algorithm}[h!]
	\begin{algorithmic}
		\INPUT $\{Y_{t, i}, \, t = 1, \ldots, n, \, i = 1, \ldots, n_t \}\subset \mathbb{R}$, $\mathcal{B}_1 \neq \mathcal{B}_2$, $\lambda$
		\State $\mathcal{C} \leftarrow (\mathcal{B}_2 \setminus \mathcal{B}_1) \cup (\mathcal{B}_1 \setminus \mathcal{B}_2)$, $n_c \leftarrow |\mathcal{C}|$.	
		\State $\widehat{\mathcal{B}} \leftarrow \mathcal{B}_1 \cap \mathcal{B}_2$.
		\For{$i \in \{1, \ldots, n_c\}$}
		\State $\eta \leftarrow \eta_i \in \mathcal{C}$
		\If{$\eta \in \mathcal{B}_2 \setminus \mathcal{B}_1$}
		\State $k \leftarrow $ the integer satisfying $\eta \in (\hat{\eta}_k, \hat{\eta}_{k+1})$, where $\{\hat{\eta}_k, \hat{\eta}_{k+1}\}\subset \mathcal{B}_1$
		\Else
		\State $k \leftarrow $ the integer satisfying $\eta \in (\hat{\eta}_k, \hat{\eta}_{k+1})$, where $\{\hat{\eta}_k, \hat{\eta}_{k+1}\}\subset \mathcal{B}_2$			
		\EndIf		
		\State $\hat{z} \leftarrow \min \argmax_{z \in \{Y_{t, i}\}} \left|D^{\eta}_{\hat{\eta}_k, \hat{\eta}_{k+1}}(z)\right|$
		\If{
			\begin{align*}
			& \sum_{t = \hat{\eta}_k + 1}^{\eta} \sum_{i = 1}^{n_t}\left(\mathbbm{1}_{\{Y_{t, i} \leq \hat{z}\}} - \widehat{F}_{(\hat{\eta}_k + 1): \eta}(\hat{z})\right)^2 + \sum_{t = \eta + 1}^{\hat{\eta}_{k+1}} \sum_{i = 1}^{n_t}\left(\mathbbm{1}_{\{Y_{t, i} \leq \hat{z}\}} - \widehat{F}_{(\eta + 1): \hat{\eta}_{k+1}}(\hat{z})\right)^2 + \lambda \\
			& \hspace{6cm} <   \sum_{t = \hat{\eta}_k + 1}^{\hat{\eta}_{k+1}} \sum_{i = 1}^{n_t}\left(\mathbbm{1}_{\{Y_{t, i} \leq \hat{z}\}} - \widehat{F}_{(\hat{\eta}_k + 1): \hat{\eta}_{k+1}}(\hat{z})\right)^2
			\end{align*}
		}{ $\widehat{\mathcal{B}} \leftarrow \widehat{\mathcal{B}} \cup \{\eta\}$}
		\EndIf
		\EndFor	
		\OUTPUT $\widehat{\mathcal{B}}$	
		\caption{Update}
		\label{algorithm:compare}
	\end{algorithmic}
\end{algorithm}

In   order to have a practical  scheme for selecting $\tau$,  we propose  Algorithm \ref{alg:full}, which is a  full change point detection procedure with tuning parameter selection. To present Algorithm~\ref{alg:full}, we slightly  abuse the notation.  In particular, in order to emphasize that the NWBS procedure is conducted on the sample $\{W_{t, i}\}$, we include $\{W_{t, i}\}$ as a formal input to  NWBS.  In addition, since the CUSUM KS statistics are  based on $\{Y_{t,i}\}$, we now use the notation $D^{\eta}_{\hat{\eta}_k, \hat{\eta}_{k+1}}(z, \{Y_{t, i}\})$ instead of $D^{\eta}_{\hat{\eta}_k, \hat{\eta}_{k+1}}(z)$.  Finally, the superscript $Y$ in $\widehat{F}^Y_{s:e}(z)$ is included to indicate that the empirical distribution functions are constructed using the observations $\{Y_{t, i}\}$.

\begin{algorithm}[h!]
	\begin{algorithmic}
		\INPUT $\{Y_{t, i}, W_{t, i} \, t = 1, \ldots, n, \, i = 1, \ldots, n_t \} \subset \mathbb{R}$, $\tau_1 < \cdots < \tau_M$, $\{(\alpha_s, \beta_s)\}_{s = 1}^S \subset\{1, \ldots, T\}$, $\lambda > 0$.
		\For{$m \in \{1, \ldots, M\}$}
		\State $\mathcal{B}_m \leftarrow \mathrm{NWBS}((0, T), \{(\alpha_s, \beta_s)\}_{s = 1}^S, \tau_m, \{W_{t, i}\})$
		\EndFor
		\State $\mathcal{O} \leftarrow \mathcal{B}_M$
		\For{$m \in \{1, \ldots, M-1\}$}
		\If{$\mathcal{B}_{m+1} \neq \mathcal{O}$}
		\State $\eta \leftarrow \min  \mathcal{O}\setminus \mathcal{B}_{m+1}$
		\State $k \leftarrow $ the integer satisfying $\eta \in (\hat{\eta}_k, \hat{\eta}_{k+1})$, where $\{\hat{\eta}_k, \hat{\eta}_{k+1}\}\subset \mathcal{B}_{m+1}$
		\State $\hat{z} \leftarrow \min \argmax_{z \in \{Y_{t, i}\}} \left|D^{\eta}_{\hat{\eta}_k, \hat{\eta}_{k+1}}(z, \{Y_{t, i}\})\right|$
		\If{
			\begin{equation}
			\label{eqn:full_condition}
			\begin{array}{ll}
			& \displaystyle  \sum_{t = \hat{\eta}_k + 1}^{\eta} \sum_{i = 1}^{n_t}\left(\mathbbm{1}_{\{Y_{t, i} \leq \hat{z}\}} - \widehat{F}^Y_{(\hat{\eta}_k + 1): \eta}(\hat{z})\right)^2 + \sum_{t = \eta + 1}^{\hat{\eta}_{k+1}} \sum_{i = 1}^{n_t}\left(\mathbbm{1}_{\{Y_{t, i} \leq \hat{z}\}} - \widehat{F}^Y_{(\eta + 1): \hat{\eta}_{k+1}}(\hat{z})\right)^2 + \lambda \\
			& \hspace{6cm} >   \displaystyle  \sum_{t = \hat{\eta}_k + 1}^{\hat{\eta}_{k+1}} \sum_{i = 1}^{n_t}\left(\mathbbm{1}_{\{Y_{t, i} \leq \hat{z}\}} - \widehat{F}^Y_{(\hat{\eta}_k + 1): \hat{\eta}_{k+1}}(\hat{z})\right)^2
			\end{array}
			\end{equation}}
		\State    $\mathcal{O} \leftarrow \mathcal{B}_{m+1}$
		\Else {\quad Terminate the algorithm}				
		\EndIf
		\Else {\quad $\mathcal{O} \leftarrow \mathcal{B}_{m+1}$}	
		\EndIf
		\EndFor
		\OUTPUT $\mathcal{O}$	
		\caption{NWBS with tuning parameter selection}
		\label{alg:full}
	\end{algorithmic}
\end{algorithm}

Algorithm~\ref{alg:full} requires two subsamples: $\{Y_{t, i}\}$ and $\{W_{t, i}\}$.  In practice, this is not a problem, as one can  perform  sample splitting, or if $n_t \geq 2$, for all $t$, then one can partition the data at every time point.  In fact, there is no need to keep both subsamples having exactly the same number of sample points $n_t$ for  all $t$. Our theoretical guarantees in  
 Theorem~\ref{thm-full}   still hold    as  long as the the number of observations  have the same scaling  at each time point  in the two samples  $\{Y_{t, i}\}$ and $\{W_{t, i}\}$.
 
 As for the implementation of Algorithm~\ref{alg:full},   we arrange all the candidate sets in increasing order of their corresponding $\tau$ values.  This ensures a decreasing nesting of all these collections.  We begin with the set corresponding to the smallest $\tau$, and, in sequence, compare consecutive sets. However, unlike in Algorithm~\ref{algorithm:compare},  we pick a single element from the difference sets, and decide to move on or to terminate the procedure. Theorem~\ref{thm-full} provides suitable conditions that guarantee that this procedure results in a consistent estimator of the change points.
 

\begin{theorem}\label{thm-full}
Suppose  that the following holds:
	\begin{itemize}
	\item the sequences ${\{Y_{t, i}\}_{i = 1}^{n_t}}_{t = 1}^T, {\{W_{t, i}\}_{i = 1}^{n_t}}_{t = 1}^T$ are independent and satisfy Assumptions~\ref{as1} and \ref{assumption-3};
	\item the collection of intervals $\{(\alpha_s, \beta_s)\}_{s = 1}^S \subset \{1, \ldots, T\}$, whose endpoints are drawn independently and uniformly from $\{1, \ldots, T\}$, satisfy $\max_{s = 1, \ldots, S}(\beta_s - \alpha_s) \leq C_S\delta$, almost surely, for an absolute constant $C_S > 1$; and 
	\item the tuning parameters $\{\tau_j\}_{j = 1}^J$ satisfy
		\begin{equation}\label{eq-tau-candidates}
			\tau_J > \ldots > c_{\tau, 2}  \kappa \delta^{1/2} \frac{n^{3/2}_{\min}}{n_{\max}} > \ldots > \tau_{j^*} > \ldots > c_{\tau, 1} \sqrt{\log(n_{1:T})} > \ldots > \tau_1,
		\end{equation}
		where $c_{\tau, 1}, c_{\tau, 2} > 0$ are constants.
	\end{itemize}

Let $\widehat{\mathcal{B}} = \{\hat{\eta}_1, \ldots, \hat{\eta}_{\hat{K}}\}$ be the output of Algorithm~\ref{alg:full} with inputs satisfying the conditions above.  If $\lambda = C\log(n_{1:T})$, with a large enough constant $C > 0$, then
	\begin{align}
		& \mathbb{P}\left\{\widehat{K} = K \quad \mbox{and} \quad \epsilon_k \leq C_{\epsilon}\kappa^{-2}_k\log (n_{1:T}) n^{9}_{\max}n^{-10}_{\min}, \, \forall k = 1, \ldots, K\right\} \nonumber\\
		\geq & 1 - \frac{48 \log(n_{1:T})}{T^3 n_{1:T}} - \frac{96 T}{n_{1:T} \log(n_{1:T}) \delta} - \exp\left\{\log\left(\frac{T}{\delta}\right) - \frac{S\delta^2}{16T^2}\right\}. \nonumber
	\end{align}
\end{theorem}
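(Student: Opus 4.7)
The overall strategy is to reduce the analysis of Algorithm~\ref{alg:full} to Theorem~\ref{thm-wbs} applied to the independent sample $\{W_{t,i}\}$, together with a concentration argument for the screening statistic in \eqref{eqn:full_condition} based on the independent sample $\{Y_{t,i}\}$. Because the ordering in \eqref{eq-tau-candidates} guarantees that at least one index $j^*$ sits inside the valid range of $\tau$ required by Theorem~\ref{thm-wbs}, the plan is to show that, on a high-probability event, the output $\mathcal{O}$ of Algorithm~\ref{alg:full} coincides with $\mathcal{B}_{j^*}$ and then invoke Theorem~\ref{thm-wbs} for $\mathcal{B}_{j^*}$.

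First, I would define two good events. Let $\mathcal{A}_W$ be the event from Theorem~\ref{thm-wbs} applied to NWBS with input $\{W_{t,i}\}$ and tuning parameter $\tau_{j^*}$. On $\mathcal{A}_W$, the output $\mathcal{B}^*:=\mathcal{B}_{j^*}$ satisfies $|\mathcal{B}^*|=K$ with the localization rate stated in Theorem~\ref{thm-wbs}. Since the candidate changepoints inside NWBS are argmaxes over random intervals and do not depend on $\tau$, the same set $\mathcal{B}^*$ is produced by every $\tau$ in the range $(c_{\tau,1}\sqrt{\log n_{1:T}},\, c_{\tau,2}\kappa\delta^{1/2}n_{\min}^{3/2}/n_{\max})$. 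Let $\mathcal{A}_Y$ be the analogous concentration event that, uniformly over all $1\le s<t<e\le T$ and thresholds $z$ drawn from the observed $\{Y_{t,i}\}$, one has $\bigl|D_{s,e}^t(z,\{Y_{t,i}\}) - \E D_{s,e}^t(z,\{Y_{t,i}\})\bigr| \lesssim \sqrt{\log n_{1:T}}$; this is the same DKW-and-union-bound step used inside the proof of Theorem~\ref{thm-wbs}. By the independence of the $Y$- and $W$-samples and a union bound, $\mathbb{P}(\mathcal{A}_W^c\cup\mathcal{A}_Y^c)$ is at most twice the error probability appearing in \eqref{eq-thm4-result}, which matches the bound claimed for Theorem~\ref{thm-full}.

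Next, I would rewrite the screening test as a squared CUSUM--KS inequality. Using the elementary identity
\[
\sum_i (X_i - \bar X)^2 - \sum_{i\in I_1}(X_i-\bar X_1)^2 - \sum_{i\in I_2}(X_i-\bar X_2)^2 \,=\, \frac{|I_1|\,|I_2|}{|I_1|+|I_2|}\bigl(\bar X_1 - \bar X_2\bigr)^2
\]
applied with $X_{t,i}=\mathbbm{1}_{\{Y_{t,i}\le \hat z\}}$, condition \eqref{eqn:full_condition} is equivalent to $\bigl(D_{\hat\eta_k+1,\hat\eta_{k+1}}^{\eta}(\hat z,\{Y_{t,i}\})\bigr)^2 < \lambda$. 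Two regimes then need to be distinguished. Regime (a): the segment $(\hat\eta_k,\hat\eta_{k+1})$ of $\mathcal{B}_{m+1}$ contains no true change point; then the population CUSUM vanishes and $\mathcal{A}_Y$ yields a squared statistic of order at most $\log n_{1:T}$, so $\lambda = C\log n_{1:T}$ with $C$ large enough makes the inequality hold and the algorithm proceeds. Regime (b): the segment contains a true change point $\eta_\ell$ and $\eta$ lies within the localization rate of $\eta_\ell$; as in the proof of Theorem~\ref{thm-wbs}, the population CUSUM at the argmax $\hat z$ is of order $\kappa_\ell\sqrt{\delta n_{\min}}$, so the squared statistic is of order $\kappa_\ell^2 \delta n_{\min}$, which dominates $\lambda$ under Assumption~\ref{assumption-3}; hence the test fails and the algorithm terminates.

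Finally, I would couple this dichotomy with the nesting $\mathcal{B}_1\supseteq\cdots\supseteq\mathcal{B}_M$ to show $\mathcal{O}=\mathcal{B}^*$ at termination. For $m<j^*$, $\mathcal{A}_W$ gives $\mathcal{B}_{m+1}\supseteq\mathcal{B}^*$, so the candidate $\eta=\min(\mathcal{O}\setminus\mathcal{B}_{m+1})$ falls inside a segment of $\mathcal{B}_{m+1}$ that is a sub-segment of some segment of $\mathcal{B}^*$ and therefore (up to localization spill-over) contains no true change point, placing us in Regime~(a); the algorithm advances. Once $\mathcal{O}=\mathcal{B}^*$, the algorithm either exhausts the loop without encountering a strictly smaller $\mathcal{B}_{m+1}$, or the first such $\mathcal{B}_{m+1}$ drops a genuine change point, firing Regime~(b) and terminating with $\mathcal{O}=\mathcal{B}^*$. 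The claims on $\widehat K$ and on $\epsilon_k$ then transfer verbatim from Theorem~\ref{thm-wbs}. The main obstacle will be Regime~(a) when the $\mathcal{B}_{m+1}$-segment containing $\eta$ is a sub-segment of a $\mathcal{B}^*$-segment whose endpoint differs from the adjacent $\eta_k$ by up to the localization error: one has to check that this short spill-over contributes a population KS deviation of order smaller than $\sqrt{\log n_{1:T}}/\sqrt{n_{\min}\delta}$ so that the squared screen stays below $\lambda$; this uses the sup-norm nature of the KS metric and the quantitative bound on $\max_k|\hat\eta_k-\eta_k|$ from Theorem~\ref{thm-wbs}.
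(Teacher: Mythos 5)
Your proposal follows essentially the same route as the paper's proof: both reduce to Theorem~\ref{thm-wbs} applied to the $W$-sample (to characterize $\mathcal{B}_j$ for $\tau_j$ in each of the three regimes above, inside, and below the good band of \eqref{eq-tau-candidates}), both rewrite the screening inequality \eqref{eqn:full_condition} via the between-groups/within-groups sum-of-squares identity as a threshold test on $\bigl(D^{\eta}_{\hat\eta_k,\hat\eta_{k+1}}(\hat z)\bigr)^2$, and both conclude by showing that $\lambda=C\log(n_{1:T})$ falls between the $O(\log n_{1:T})$ scale of spurious splits and the $\gtrsim \kappa^2\delta\, n_{\min}^{3}/n_{\max}^{2}$ scale attained at a true change point.

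Two small things are worth correcting. First, your blanket claim that $\mathbb{P}(\mathcal{A}_W^c\cup\mathcal{A}_Y^c)$ is ``at most twice the error probability'' of \eqref{eq-thm4-result} over-counts: the exponential term there is the probability of the random-interval event, which concerns only the NWBS pass on $\{W_{t,i}\}$ and is not duplicated on the $Y$-side screening; the paper's bound doubles only the first two terms and keeps the exponential term once, and this is what your union bound should also yield if done carefully. Second, the assertion that the same $\mathcal{B}^*$ is produced by every $\tau$ in the good band is more delicate than stated, since NWBS recurses and later decisions do depend on $\tau$; the paper sidesteps this by defining the event $\mathcal{A}$ in terms of the three $\tau$-regimes (each output has the right cardinality and localization, rather than literal equality of the sets), and your argument would be cleaner phrased the same way. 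Finally, the ``spill-over'' issue you flag at the end is real but your target order $\sqrt{\log n_{1:T}}/\sqrt{n_{\min}\delta}$ is far more stringent than necessary: by Lemma~\ref{lem4} and the localization bound $|\hat\eta_k-\eta_\ell|\lesssim \kappa^{-2}\log(n_{1:T})/n$, the residual population CUSUM is $O(\kappa\sqrt{n_{\max}\,\epsilon_k})=O(\sqrt{\log n_{1:T}})$, which is exactly of the same order as the stochastic fluctuation and thus absorbed by choosing $C$ large in $\lambda$. With these repairs the proposal matches the paper's argument.
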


The proof  of Theorem \ref{thm-full}  can be found in Appendix  \ref{sec-full}. It implicitly  assumes that  the nested sets  $\{ \mathcal{B}_m \}$  in Algorithm \ref{alg:full}    satisfy 
$\vert \mathcal{B}_m   \setminus \mathcal{B}_{m+1} \vert \,=\,  1$, for  $m=  1,\ldots, M$. If this conditions is not met, then the conclusion of Theorem \ref{thm-full}  still holds provided  that we replace  (\ref{eqn:full_condition}) in Algorithm \ref{alg:full}, with the inequality
\[
         \lambda   \,>\,        \underset{s = 1,\ldots, S}{\max}  \,\, \underset{z \in \mathbb{R}  }{\sup}\,\left\vert D_{a_s,b_s}^{\eta}(z,  \{ Y_{t,i}\} )   \right\vert^2,  
\]
where  $(a_s,b_s)  \,=\,  (\hat{\eta}_k,\hat{\eta}_{k+1})\cap (\alpha_s, \beta_s)  $  for  $s= 1,\ldots, S$.

It is worth pointing out that similar results on tuning parameter selection can be found in Theorem~3.3 in \cite{fryzlewicz2014wild}.  Unfortunately,  the proof of Theorem~3.3 is wrong.  It can be seen from the proof  of \Cref{thm-full} presented in Appendix~\ref{sec-full} that there are a number of  technical issues that require a careful analysis.

\section{Numerical experiments}\label{sec-num}	


In this section we present the results of various simulation experiments  aimed  at assessing the performance of our methods under a wide range of scenarios and in relation to other competing methods. 

We measure the performance of an estimator  $\hat{K}$ of the true number of change points by the absolute error $\vert  K - \hat{K}\vert$. In all our examples,  we report the average   absolute  error  over  100 Monte Carlo  simulations. 
Furthermore, denoting by $\mathcal{C}  = \{ \eta_1, \ldots, \eta_K  \}$  the set of true  change points, the performance of an estimator $\hat{\mathcal{C}}$ of $\mathcal{C}$ is measured by the one-sided Hausdorff distance
\[
d(\hat{\mathcal{C}}| \mathcal{C}) \,:=  \,  \underset{\eta \in \mathcal{C} }{\max }\,\underset{ x\in \hat{\mathcal{C}}   }{\min}\,\vert x - \eta\vert.
\]

By convention we set its value to be infinity when $\hat{\mathcal{C}}$ is the empty set. Note that  if $\hat{\mathcal{C}} \,=\, \{ 1,\ldots,T \}$ then  $d(\hat{\mathcal{C}}| \mathcal{C})  \,=\,0$. Thus,  $d(\hat{\mathcal{C}}| \mathcal{C}) $  can be insensitive to overestimation. To over come  this, we also calculate $d( \mathcal{C}|\hat{\mathcal{C}}) $.   In all of our simulations, for a method that produces an estimate $\hat{\mathcal{C}}$, we report  the median of both  $d(\hat{\mathcal{C}}| \mathcal{C}) $ and $d( \mathcal{C}|\hat{\mathcal{C}})$  over 100 Monte Carlo  simulations.

\begin{figure}[t!]
	\begin{center}
		\includegraphics[width=3.2in,height= 2.7in]{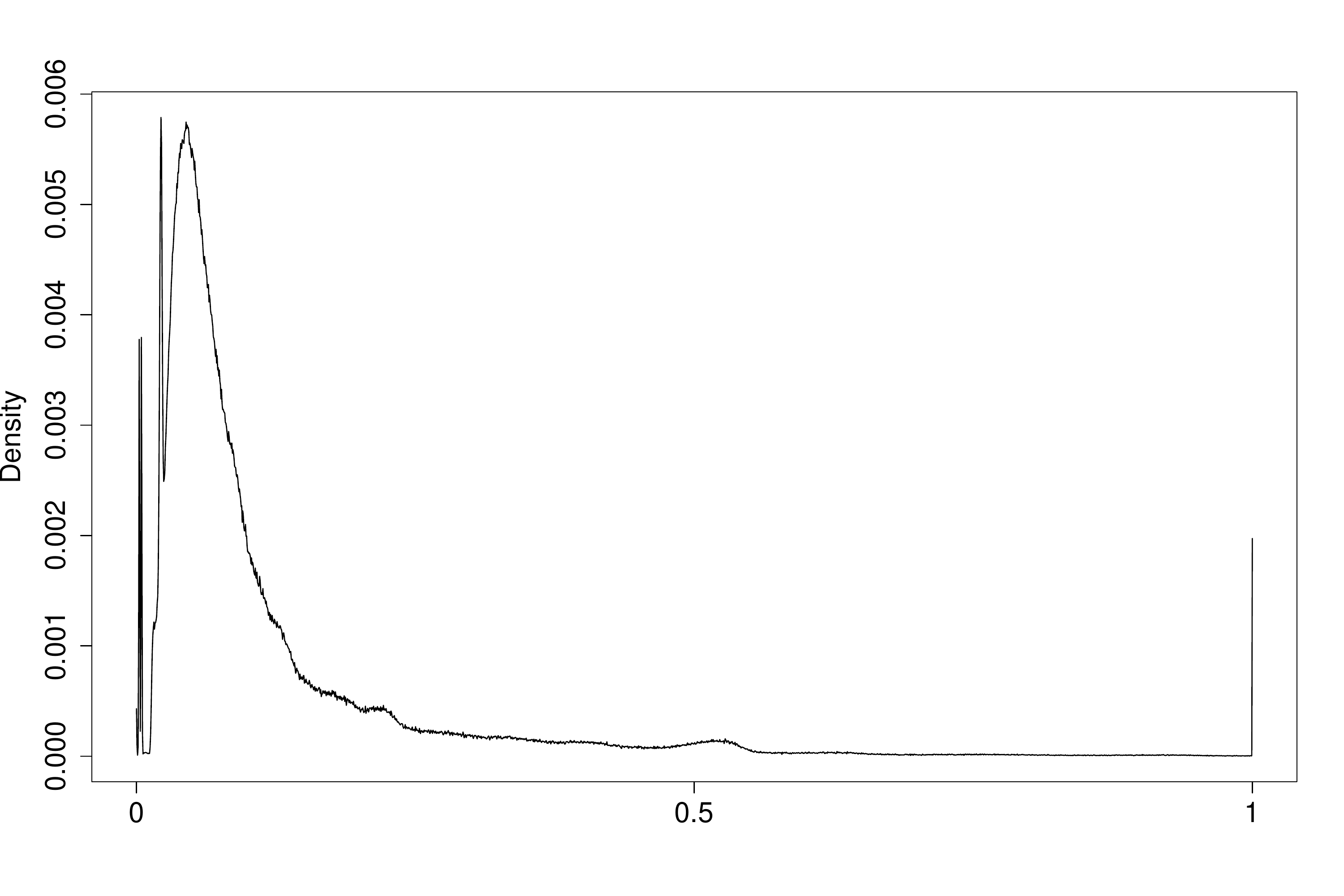} 
		\includegraphics[width=3.2in,height= 2.7in]{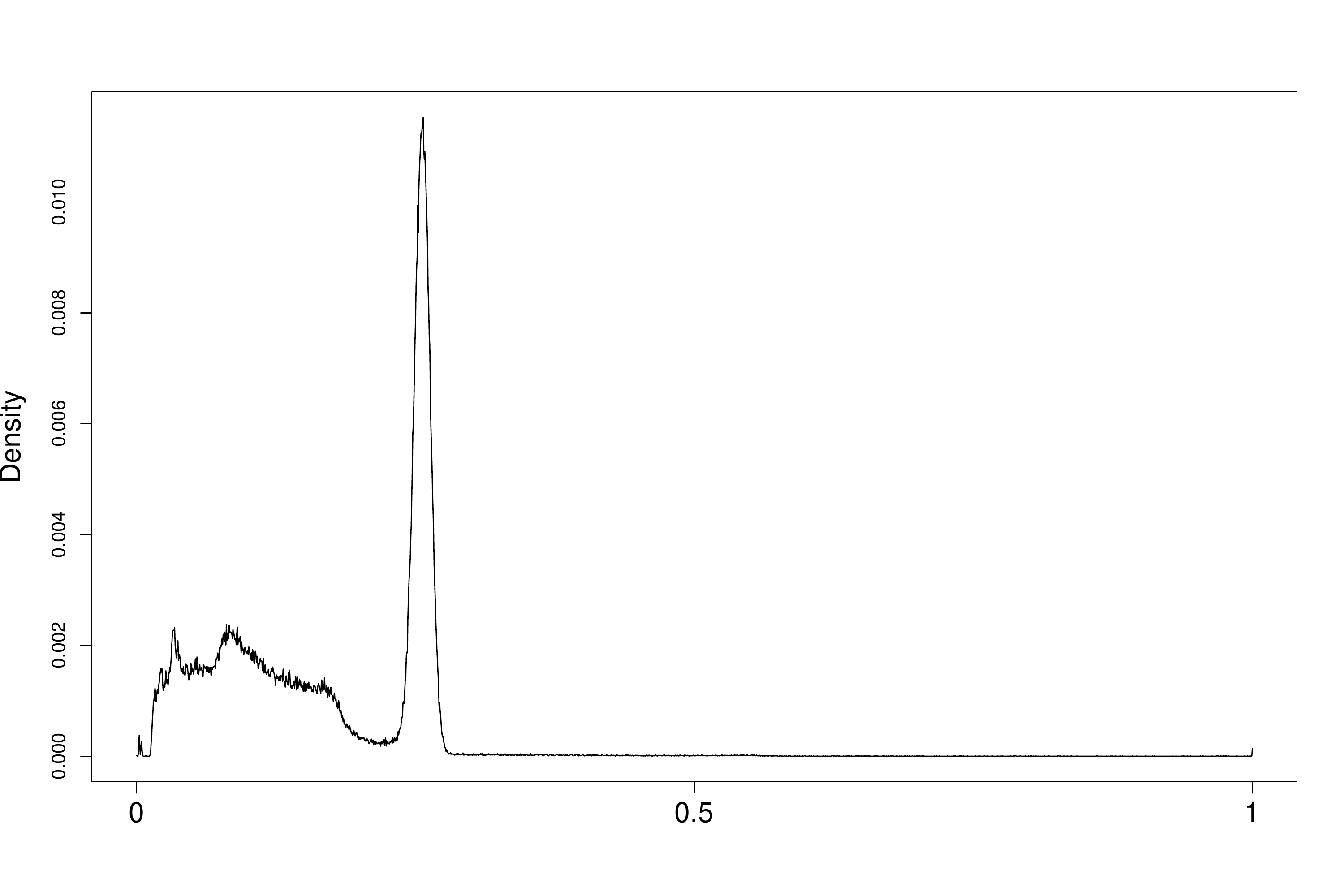}
		\caption{\label{fig1}  Densities  taken from \cite{padilla2018sequential}.  These are   used  in our generative models for the distributions between change points.  }
	\end{center}
\end{figure}

\begin{figure}[t!]
	\begin{center}
		\includegraphics[width=3.2in,height= 2.7in]{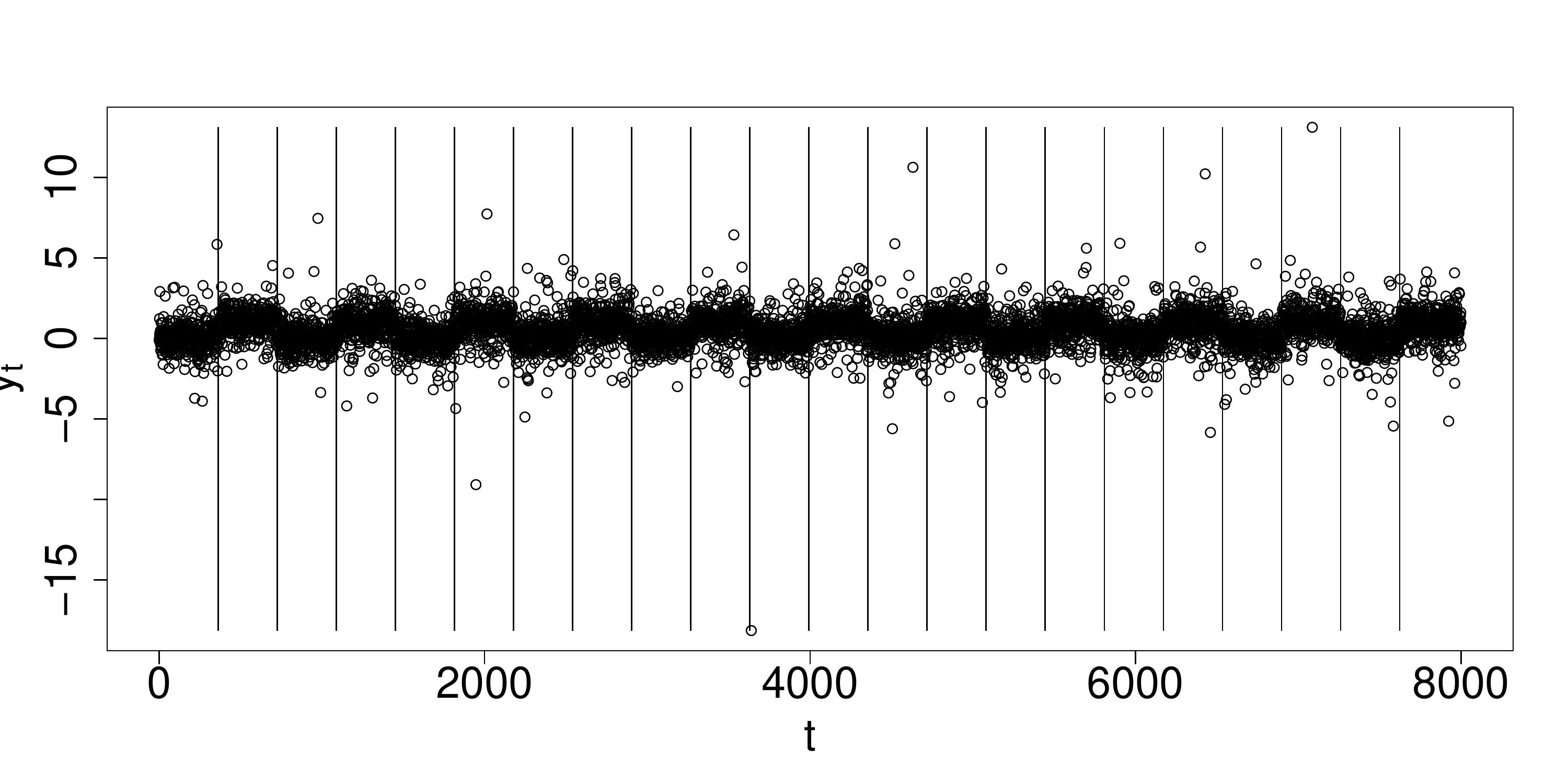} 
		\includegraphics[width=3.2in,height= 2.7in]{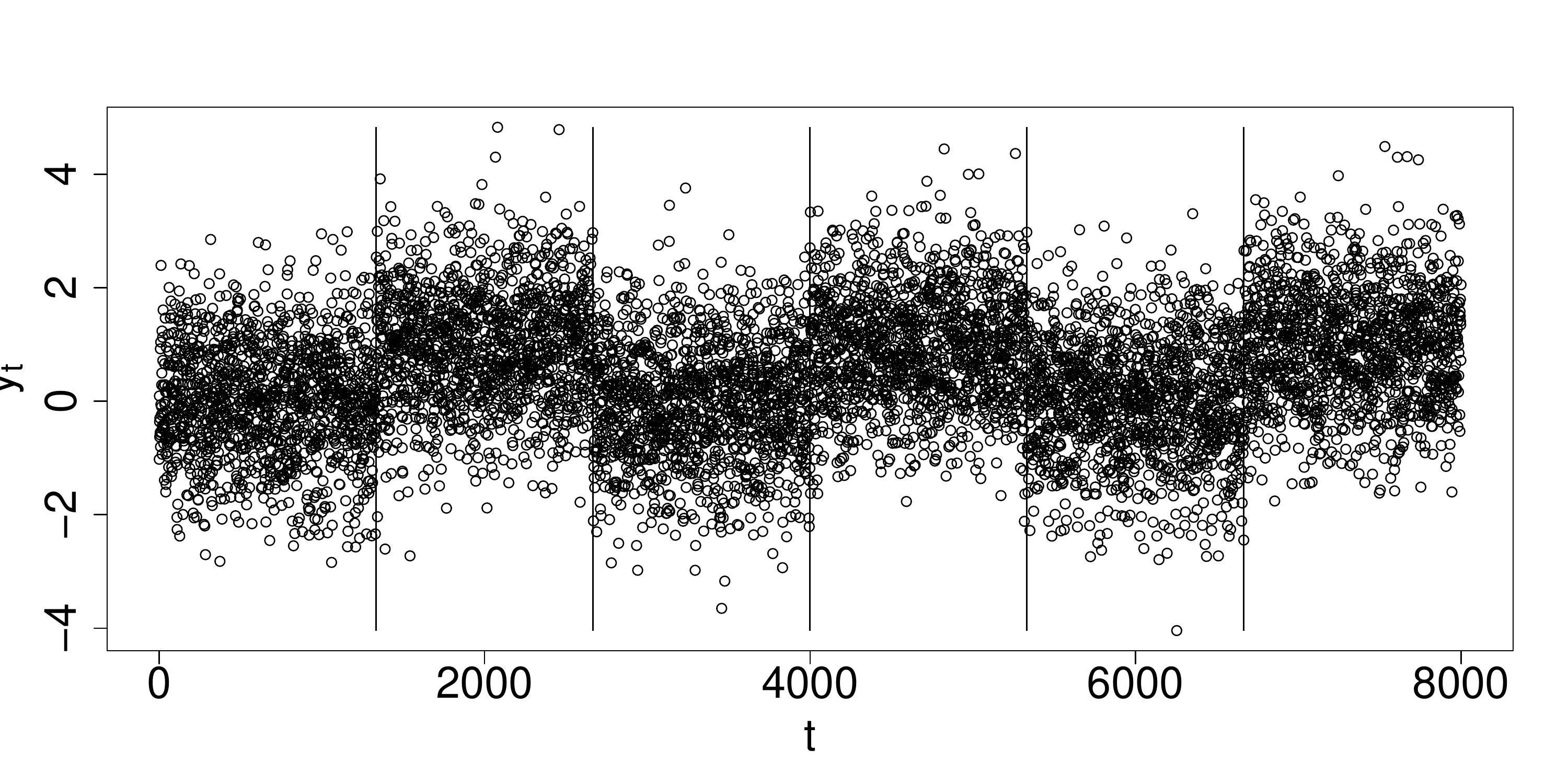}
		\includegraphics[width=3.2in,height= 2.7in]{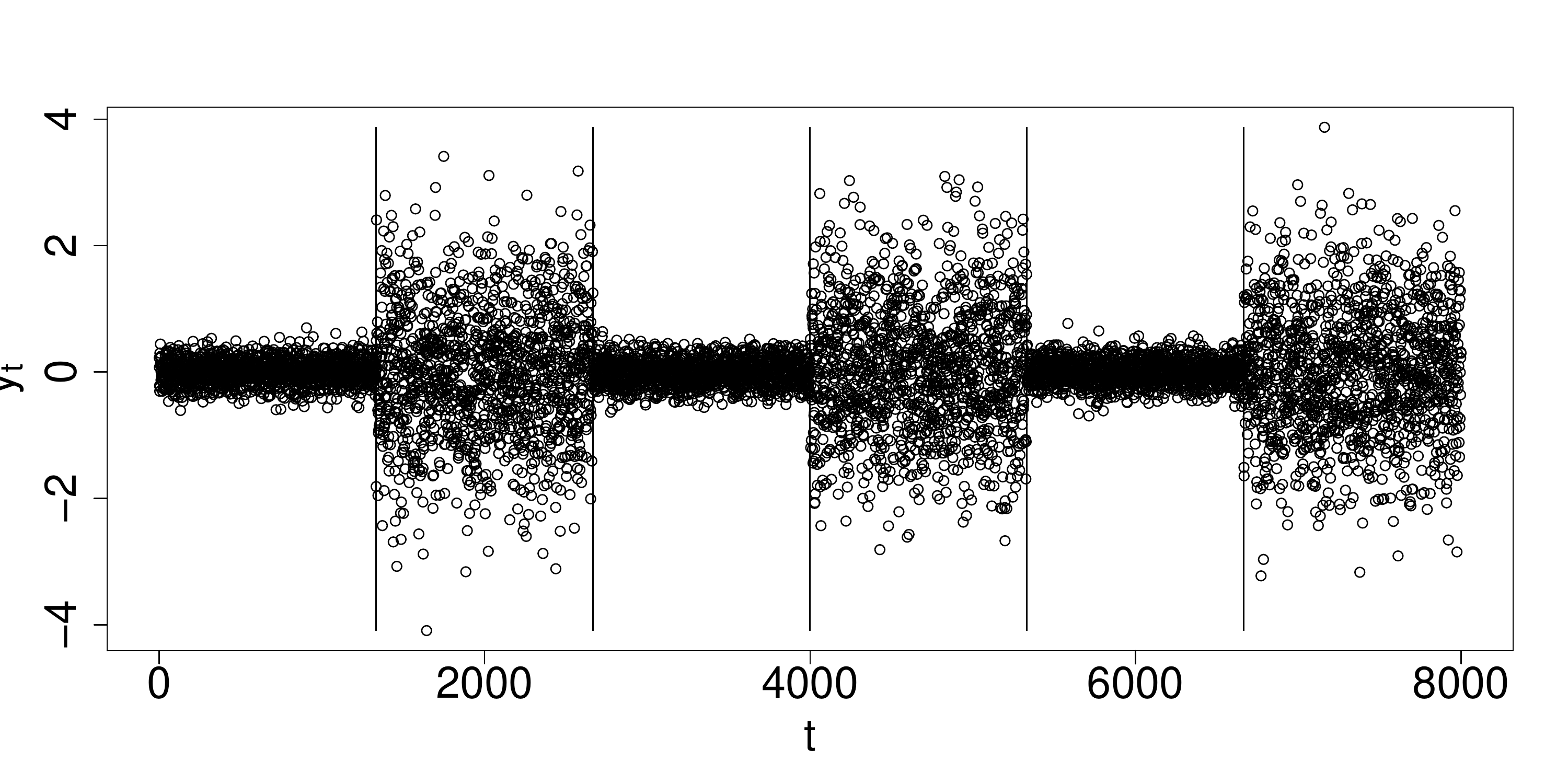}
				\includegraphics[width=3.2in,height= 2.7in]{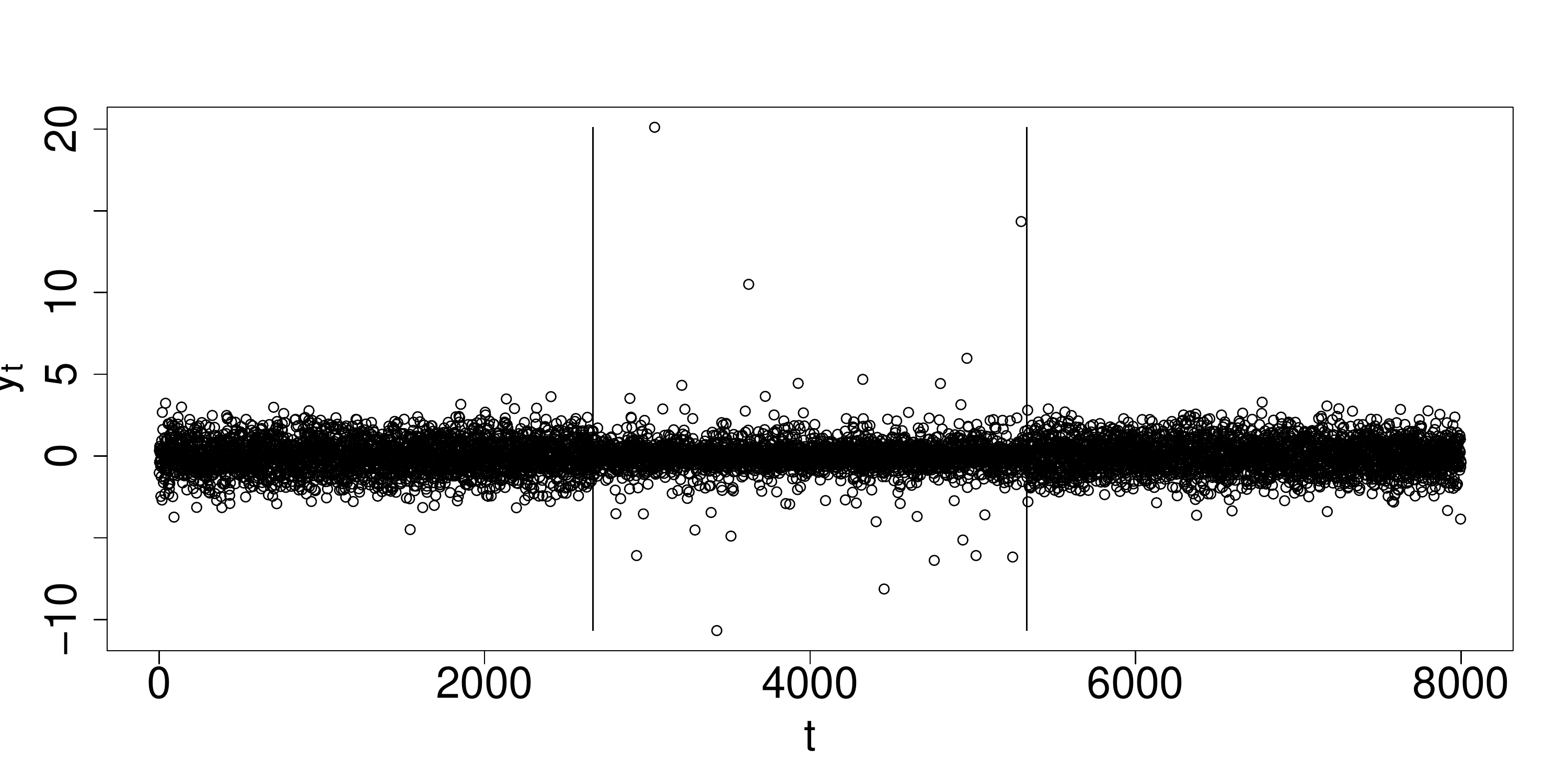}
		\caption{\label{fig2} Examples of  data generated from different scenarios with $T = 8000$. From left to right and from top to bottom  the plots correspond to data  generated from Scenarios 2, 3, 4, and 5, respectively.}
	\end{center}
\end{figure}

		


\subsection{Case $n_t =1$}
\label{sec-nt=1}

We start by focusing on the case in which $n_t = 1$ for all $t = 1,\dots, T$. This allows us to compare our proposed  methods with state of the art algorithms  for change point detection. The methods  that  we benchmark against are:
\begin{itemize}
	\item WBS: The original wild binary segmentation algorithm from \cite{fryzlewicz2014wild}, implemented  in the R package  ``wbs".
	\item B$\&$P: The method from \cite{bai2003computation}, using the R package ``strucchange".
	\item S3IB, as introduced in \cite{rigaill2010pruned} and  with the R package ``Segmentor3IsBack".
	\item  PELT, the estimator from \cite{killick2012optimal}. We obtain this  via the R package ``changepoint".
	\item SMUCE, the method  proposed by \cite{FrickEtal2014}. For this procedure we obtain the estimates using the R package ``stepR".
	\item  The H-SMUCE procedure from \cite{pein2017heterogeneous}. Estimates are calculated with the function ``stepFit" from  the R package ``stepR".
	
	\item NMCD: this is the nonparametric  maximum likelihood  approached  developed by \cite{zou2014nonparametric}. We  compute this estimator  using the  R package  ``changepoint.np". 
\end{itemize}

For all the competing methods,  we select the respective tuning parameters  by  using the  default choices in their respective R packages,  following similar guidelines to \cite{fryzlewicz2014wild}.

As for our approaches, we consider NBS (Algorithm \ref{alg:BS}) and  NWBS (Algorithm \ref{algorithm:WBS}) choosing the tuning parameter $\tau$ as described in Section \ref{sec:tuning_parameter}. Specifically,
we  use  Algorithm \ref{alg:full} with  $\tau    \,=\,   2(\log n_{1:T}) /3$, a choice  that  is guided by Theorem  \ref{thm-full} and that we find  reasonable in practice. Moreover, we construct  the samples $\{Y_{t,i}\}$  and $\{W_{t,i}\}$  by splitting the data into two time series of  roughly the same  size,  according to  odd and even values of $t$. As for $S$, the number of random intervals, we set its value to $120$.

\begin{table}[t!]
	\centering
	\caption{\label{tab1} Scenario 1.}
	\medskip
	\setlength{\tabcolsep}{7pt}
	\begin{small}
		\begin{tabular}{ rrrrrrrrrrr }
			\hline
			$T$ & Metric                   & NWBS           & NBS             &WBS    &PELT    &S3IB       & NMCD     & SMUCE    & B$\&$P & H-SMUCE   \\  
			\hline	
			1000 &$\vert K - \hat{K}\vert$  &\textbf{1.5}   &4.3              &11.0    &4.35     &6.3       &1.8        &4.35       &6.85       &1.6\\			
   1000 & $d(\hat{\mathcal{C}}|\mathcal{C})$&32.0            &53.0             &63.0   &64.5     &$\infty$  &\textbf{23.0}&43.5       &$\infty$    &75.5 \\
  1000 & $d(\mathcal{C} |\hat{\mathcal{C}})$&52.0           &28.5              &82.0   &90.0     &$-\infty$ &\textbf{36.0}&93.5        &$-\infty$  &56.5\\
			4000& $\vert K - \hat{K}\vert$  &\textbf{0.2 }  &0.5               &26.3   &21.4      &6.9      &2.2        &52.0       &4.9      &8.3\\                      
   4000&  $d(\hat{\mathcal{C}}|\mathcal{C})$&31.0           &55                &43.5   &46.5     &$\infty$ &\textbf{13.5} &448.0      &1386.0 &70.0\\
 4000 & $d(\mathcal{C} |\hat{\mathcal{C}})$  &\textbf{31.0 }&62.6              & 364.5 & 448     &$-\infty$&69.5          &32.7    &101.5       &319.0\\
  8000& $\vert K - \hat{K}\vert$            &\textbf{0.0 }  &0.9               &28.8   &41.2     &5.35   &3.25       &62.7        &3.7       &17.6\\                      
8000&  $d(\hat{\mathcal{C}}|\mathcal{C})$    &38.0          &47.0              &29.5   &55.0     &$\infty$ &\textbf{10.0}&70.0     &1831       &84.5\\
  8000 & $d(\mathcal{C} |\hat{\mathcal{C}})$&\textbf{38.0}  &89.0             &818.5  &955.0    &$-\infty$ &227.0     &958.0      &206.5    &910.0\\

			\hline  
		\end{tabular}
	\end{small}
\end{table}

\begin{table}[t!]
	\centering
	\caption{\label{tab2} Scenario 2.}
	\medskip
	\setlength{\tabcolsep}{7pt}
	\begin{small}
		\begin{tabular}{ rrrrrrrrrrr }
			\hline
			$T$ & Metric                              & NWBS           & NBS             &WBS    &PELT    &S3IB       & NMCD     & SMUCE    & B$\&$P  & H-SMUCE \\  
			\hline	
  			1000 &$\vert K - \hat{K}\vert$             &1.3             &1.7              &9.9    &2.7     &1.85          &1.7      &4.6            &7.05       &\textbf{0.45}\\			
  1000 & $d(\hat{\mathcal{C}}|\mathcal{C})$            &        11.0    &14.5             &8.5    &9.0    &15.5         &\textbf{7.0}&10.5            & 738.5    &23.0 \\
  1000 & $d(\mathcal{C} |\hat{\mathcal{C}})$           &\textbf{13.0}            &20.5             &54.5   &45.5   &29.5         &28.0       &53.5          &16.0     &22.0\\
  4000& $\vert K - \hat{K}\vert$                        &\textbf{0.0}  &1.0              &15.4   &10.05   &14.7     & 3.35      &14.8       &11.95  &\textbf{0.0} \\                      
  4000&  $d(\hat{\mathcal{C}}|\mathcal{C})$             &16.0          &22.0              &9.5    &12.0     &$\infty$   &\textbf{9.0}    &35.0     &1007   &16.0 \\
  4000 & $d(\mathcal{C} |\hat{\mathcal{C}})$            &\textbf{16.0} &34.0              &176    &163.0    &$-\infty$  &107.5    &164     &20.0     &\textbf{16.0} \\
  8000& $\vert K - \hat{K}\vert$                        &1.3   &8.4    &17.4              &18.45  &20.8     &4.75      & 28.5            &18.4       &\textbf{0.1}\\                      
  8000&  $d(\hat{\mathcal{C}}|\mathcal{C})$             &363.0         &1470.5             &15.5  &\textbf{9.5}&$\infty$  &13.5       &40.0  &2179       &20.0\\	
  8000 & $d(\mathcal{C} |\hat{\mathcal{C}})$            &\textbf{18.0} &28.0             &254.5   & 2179  &$-\infty$  &129.5      &257       &19.5       &20.0\\ 
 \hline		
		\end{tabular}
	\end{small}
\end{table}

Next we  explain the different generative models that are deployed in our simulations. For all scenarios, we consider three values for $T$: $1000$, $4000$ or $8000$. Moreover, we  consider the partition $\mathcal{P}$ of  $[1,T]$  induced  by the change points $\eta_1,\ldots,\eta_K$   which we always take to be  evenly spaced in $[1,T]$. Specifically, the elements of $\mathcal{P}$ are $A_1,\ldots,A_{K+1}$, with $A_j \,=\, [\eta_{j-1}+1,\eta_j]$. Using this partition,  we now describe the explicit  scenarios  considered in our simulations.


\paragraph*{Scenario 1.}

We construct  examples where $K = 7$ for each instance of $T$. Then we define $F_t$ to have   probability density function   as in the left panel of Figure \ref{fig1} for $t \in A_j$ with $j$  odd, and as in the right panel of Figure \ref{fig1} for $t \in A_j$ with $j$  even.

\paragraph*{Scenario 2.}

We construct a piecewise constant signal for each value of $T$, and then proceed to add   noise. Thus, letting $K  \,=\, \floor{  \sqrt{T/( 2\log T)}  }$, we define  $\theta \in \mathbb{R}^T$ as
\begin{equation}
	\label{eqn:piecewise_signal}
	\theta_t \,=\,  \begin{cases}
	1     & \text{if } \,\,\,  t  \in A_j, \,\,\,\,\text{with}\,\, j\,\, \text{odd},\\
	0  &  \text{otherwise.} 
	\end{cases}
\end{equation}
Then the data are generated as
\begin{equation}
\label{eqn:m1}
y_t \,= \, \theta_t  + \frac{\epsilon_t}{\sqrt{3}},\,\,\,\,\,\,  t = 1,\ldots, T,
\end{equation} 
where the $\epsilon_t$'s  are i.i.d., and follow a  $t$-distribution with $3$ degrees of freedom. The scaling of the errors  by  $\sqrt{3}$, in (\ref{eqn:m1}), is incorporated to make the variance  of the noise equals to 1.

\begin{table}[t!]
	\centering
	\caption{\label{tab3} Scenario 3.}
	\medskip
	\setlength{\tabcolsep}{7pt}
	\begin{small}
		\begin{tabular}{ rrrrrrrrrrr }
			\hline
			$T$ & Metric                                 & NWBS           & NBS           &WBS           &PELT           &S3IB               & NMCD     & SMUCE    & B$\&$P & H-SMUCE  \\  
			\hline	
						1000 &$\vert K - \hat{K}\vert$  &0.8              &  1.8            &\textbf{0.0} &\textbf{0.0}  &0.1             &0.8          &\textbf{0.0}   &\textbf{0.0}       &0.2\\			
			1000 & $d(\hat{\mathcal{C}}|\mathcal{C})$   &  16.0            & 24.0            &9.0        & \textbf{8.5}  &\textbf{8.5}    & 9.5         &\textbf{8.5}  &\textbf{8.5} & 9.5\\
			1000 & $d(\mathcal{C} |\hat{\mathcal{C}})$   &19.0             &25.0             & 9.5       &  \textbf{8.5} &0.9              &10.5         &\textbf{8.5}  &\textbf{8.5} &9.5\\
			4000& $\vert K - \hat{K}\vert$               &0.1            &0.3                &\textbf{0.0}&\textbf{0.0}& \textbf{0.0} &1.8         &\textbf{0.0}   & \textbf{0.0} &\textbf{0.0}\\                      
			4000&  $d(\hat{\mathcal{C}}|\mathcal{C})$    &22.0           &28.0               &\textbf{8.0} &       9.0 &\textbf{8.0}  &11.5       & \textbf{8.0} &\textbf{8.0}&9.5\\
			4000 & $d(\mathcal{C} |\hat{\mathcal{C}})$  & 20.0           &47.0                &\textbf{8.0} &      9.0 & \textbf{8.0} & 136.5      &\textbf{8.0} &\textbf{8.0}&\textbf{8.0}\\
			8000& $\vert K - \hat{K}\vert$               &0.2            &0.2               &\textbf{0.0}&\textbf{0.0} &0.1             &1.7     &\textbf{0.0} &\textbf{0.0}&0.2\\                      
			8000&  $d(\hat{\mathcal{C}}|\mathcal{C})$    &11.5           &20.5              &\textbf{6.0} &\textbf{6.0} &\textbf{6.0}   &8.5       & \textbf{6.0} &\textbf{6.0}  &\textbf{6.0}\\
			8000 & $d(\mathcal{C} |\hat{\mathcal{C}})$  &11.5            & 23.5             &\textbf{6.0}&\textbf{6.0}&\textbf{6.0}     &353      &\textbf{6.0}    &\textbf{6.0}  &6.5\\			
			\hline  
		\end{tabular}
	\end{small}
\end{table}

\begin{table}[t!]
	\centering
	\caption{\label{tab4} Scenario 4.}
	\medskip
	\setlength{\tabcolsep}{7pt}
	\begin{small}
		\begin{tabular}{ rrrrrrrrrrrr }
			\hline
			$T$ & Metric                    & NWBS           & NBS             &WBS    &PELT    &S3IB     & NMCD     & SMUCE    & B$\&$P  & H-SMUCE\\  
			\hline	
			1000 &$\vert K - \hat{K}\vert$  & \textbf{0.9}  &3.9               &4.0  &   9.8     &4.9          &2.45         &  27.75   &5.0   &4.7\\			
1000 & $d(\hat{\mathcal{C}}|\mathcal{C})$   & 36.0          &$\infty$          &$\infty$&40.0        &$\infty$   & \textbf{4.0} & 24.5       &$\infty$& $\infty$ \\
1000 & $d(\mathcal{C} |\hat{\mathcal{C}})$  &  \textbf{32.0 }   &$-\infty$         &$-\infty$&153.5        &$-\infty$ &67.0          &157  &$-\infty$&$-\infty$\\
4000& $\vert K - \hat{K}\vert$              &\textbf{0.0}     &0.1               &3.8   &36.3    &5.0      &2.7       &  71.1           &5.0       & 4.5\\                      
4000&  $d(\hat{\mathcal{C}}|\mathcal{C})$   &19.0            & 30.0            &$\infty$ &106.5  &$\infty$  &\textbf{ 4.5} & 46.0        & $\infty$     &$\infty$ \\
4000 & $d(\mathcal{C} |\hat{\mathcal{C}})$  &  \textbf{19.0}     &30.0             & $-\infty$&644.5    & $-\infty$ &66.0    &  651.5      &  $-\infty$     &$-\infty$ \\
8000& $\vert K - \hat{K}\vert$              &\textbf{0.1}   &\textbf{0.1}       &3.5      & 60.3   & 5.0      &4.0              &109.3       &5.0       &4.5\\                      
8000&  $d(\hat{\mathcal{C}}|\mathcal{C})$    &23.0            &29.5             &6301.5   & 115.0   &$\infty$ &\textbf{2.5}    &47.5        & $\infty$      &$\infty$\\
8000 & $d(\mathcal{C} |\hat{\mathcal{C}})$  &\textbf{28.0}    &29.5             &238      &  1300.5  & $-\infty$& 566.5         &1316      & $-\infty$&$-\infty$\\
			\hline  
		\end{tabular}
	\end{small}
\end{table}
\begin{table}[t!]
	\centering
	\caption{\label{tab5} Scenario 5.}
	\medskip
	\setlength{\tabcolsep}{7pt}
	\begin{small}
		\begin{tabular}{ rrrrrrrrrrr }
			\hline
			$T$ & Metric                   & NWBS           & NBS             &WBS         &PELT    &S3IB       & NMCD     & SMUCE    & B$\&$P  & H-SMUCE\\  
			\hline	
			1000 &$\vert K - \hat{K}\vert$  &\textbf{0.4}   &  3.75           &3.6          & 7.2  & 5.0       &1.5       &  26.95    & 5.0      &4.8\\			
1000 & $d(\hat{\mathcal{C}}|\mathcal{C})$   &   27.0        & 665              &$\infty$     &70.0    & $\infty$ &\textbf{4.5}& 21.0        &$\infty$ &$\infty$ \\
1000 & $d(\mathcal{C} |\hat{\mathcal{C}})$ &  29.0          & \textbf{1.0}     &$-\infty$   & 147.0  &$-\infty$&32        &  159.0     &$-\infty$& $-\infty$\\
4000& $\vert K - \hat{K}\vert$             &\textbf{0.1}   & 0.32            &3.5      &38.2   & 5.0    & 3.35    &  72.05      &5.0  &4.5\\                      
4000&  $d(\hat{\mathcal{C}}|\mathcal{C})$  &  24.0             & 24.0          &$\infty$&82.0      &$\infty$ & \textbf{3.0}& 39.5   &$\infty$ &$\infty$\\
4000 & $d(\mathcal{C} |\hat{\mathcal{C}})$ &\textbf{ 25.0}             &  42.0          &$-\infty$&629.5   &$-\infty$ &       275& 640.5    &$-\infty$&$-\infty$\\
8000& $\vert K - \hat{K}\vert$              &\textbf{0.0}    & 0.3             &4.2    &63.9     &5.0   &   4.4     &114.0        &5.0 &4.6\\                      
8000&  $d(\hat{\mathcal{C}}|\mathcal{C})$    &    37.0       & 38.9           &$\infty$&107.5     &$\infty$& \textbf{2.5} &55.0        & $\infty$&$\infty$\\
8000 & $d(\mathcal{C} |\hat{\mathcal{C}})$  &   \textbf{37.0}        &45.0              &$-\infty$& 1309   &$-\infty$& 552   & 1310.5     &$-\infty$&$-\infty$\\
			\hline  
		\end{tabular}
	\end{small}
\end{table}

\paragraph*{Scenario 3.}

Here,  we also construct a piecewise constant parameter. Letting  $K = 5$,  we define  $\theta \in \mathbb{R}^T$ as in (\ref{eqn:piecewise_signal}).
Then we consider data generated as $ y_t \,= \, \theta_t  + \epsilon_t$, with $\epsilon_t   \overset{\text{ind}}{\sim} N(0,1)$, for   $t = 1,\ldots, T$.
 
\paragraph*{Scenario 4.}

In this class of examples the  distributional changes involve changes in the variance only. Specifically, $K = 5$ and  there is a parameter   $\theta \in \mathbb{R}^T$   satisfying 
\[
\theta_t \,=\,  \begin{cases}
0.2     & \text{if } \,\,\,  t  \in A_j, \,\,\,\,\text{with}  \, \,j \,\,\text{odd},\\
1  &  \text{otherwise.} 
\end{cases}
\]
Then
\[
	 y_t \,= \, \theta_t  \epsilon_t,\,\,\,\,\,\,  t = 1,\ldots, T,
\]
where $\epsilon_t  \overset{\text{ind}}{\sim} N(0,1)$.

\begin{table}[t!]
	\centering
	\caption{\label{tab6} }
	\medskip
	\setlength{\tabcolsep}{10.5pt}
	\begin{small}
		\begin{tabular}{rrrrrrrr}
			&	                 &             &               &     Scenario 1             &                                &                                & \\
			\hline
			Method    & Metric                 &  $n_t = 5$ &  $n_t =  15$ &  $n_t =  30$    & $n_t \sim \text{Pois}(5)$   & $n_t \sim \text{Pois}(15)$  & $n_t \sim\text{Pois}(30)$  \\                      
			\hline	
			WNBS     &$\vert K - \hat{K}\vert$&\textbf{0.1} &\textbf{0.2}  &\textbf{0.0}   &0.6                           &\textbf{0.3}                &\textbf{0.0}\\ 
			NBS     &$\vert K - \hat{K}\vert$&1.45         &0.3           &0.2            &\textbf{0.4}                  &\textbf{0.3}                & 0.3\\ 
			WNBS     &$d(\hat{\mathcal{C}}|\mathcal{C})$&\textbf{6.5} &\textbf{2.5}  &\textbf{2.0}   &\textbf{6.0}                  &\textbf{2.0}                &\textbf{1.5} \\ 
			NBS     &$d(\hat{\mathcal{C}}|\mathcal{C})$&9.0          &3.0           &\textbf{2.0}   &7.5                           &  4.0                       & 3.5\\ 
			
			\hline  
			&	               &             &             &               &                                &                                & \\
			&       &             &                &               &                                &                                & \\
			&	        &                    &               &     Scenario 2             &                                &                                & \\
			\hline
			Method    & Metric                 &  $n_t = 5$   &    $n_t =  15$   &  $n_t =  30$  & $n_t \sim \text{Pois}(5)$   & $n_t \sim \text{Pois}(15)$  & $n_t \sim\text{Pois}(30)$  \\                      
			\hline	
			WNBS     &$\vert K - \hat{K}\vert$&\textbf{0.1} &\textbf{0.0} &\textbf{0.0}    &\textbf{0.4}                  &\textbf{0.0}               &\textbf{0.0}\\ 
			NBS     &$\vert K - \hat{K}\vert$&0.9          &0.2          &\textbf{0.0}    &\textbf{0.4}                  &\textbf{0.0}                & 0.1\\ 
			WNBS     &$d(\hat{\mathcal{C}}|\mathcal{C})$&\textbf{3.0} &\textbf{1.0} &\textbf{0.0}    &\textbf{3.0}                  &\textbf{1.0}                 & \textbf{0.0}\\ 
			NBS     &$d(\hat{\mathcal{C}}|\mathcal{C})$&6.0          &2.0          &1.0             &5.5                            &2.0                       & 1.0\\ 
			\hline  
			&	        &       &             &             &               &                                &                                 \\
			&	        &                    &                &               &                                &                                & \\
			&	        &                    &               &     Scenario 3             &                                &                                & \\
			\hline
			Method    & Metric       &  $n_t = 5$   &    $n_t =  15$   &  $n_t =  30$  & $n_t \sim \text{Pois}(5)$   & $n_t \sim \text{Pois}(15)$  & $n_t \sim\text{Pois}(30)$  \\                      
			\hline
			WNBS     &$\vert K - \hat{K}\vert$&\textbf{0.3} &0.3          &\textbf{0.0}   & \textbf{0.4}                  & \textbf{0.0}              &\textbf{0.0}\\ 
			NBS     &$\vert K - \hat{K}\vert$&0.9          &\textbf{0.2} &\textbf{0.0}    &   0.8                         &    0.1                        &\textbf{0.0} \\ 
 WNBS     &$d(\hat{\mathcal{C}}|\mathcal{C})$&\textbf{6.5} & \textbf{1.0}&\textbf{0.5}    &  \textbf{5.0}                 &   \textbf{2.0}            & \textbf{1.0}\\ 
  NBS     &$d(\hat{\mathcal{C}}|\mathcal{C})$&7.0          &2.0          &1.0             &     8.0                         &  \textbf{2.0}                            & \textbf{1.0}\\ 
		
			\hline 
			&	        &       &             &             &               &                                &                                 \\
			&	        &                    &                &               &                                &                                & \\
			&	        &                    &               &     Scenario 4             &                                &                                & \\
			\hline
			Method    & Metric                &  $n_t = 5$   &    $n_t =  15$   &  $n_t =  30$  & $n_t \sim \text{Pois}(5)$   & $n_t \sim \text{Pois}(15)$  & $n_t \sim\text{Pois}(30)$  \\                      
			\hline	
			WNBS     &$\vert K - \hat{K}\vert$&\textbf{0.2}  &\textbf{0.0} & \textbf{0.0}    &     \textbf{0.0}             &\textbf{0.0}                &\textbf{0.0}  \\ 
			NBS     &$\vert K - \hat{K}\vert$& 0.4           &     0.1     &  \textbf{0.0}    &         0.3                  & 0.2                        & 0.1\\ 
 WNBS     &$d(\hat{\mathcal{C}}|\mathcal{C})$&\textbf{6.0}  &\textbf{2.0}   &\textbf{0.0}      &          5.0                & \textbf{1.0}                  &\textbf{1.0}   \\  
  NBS     &$d(\hat{\mathcal{C}}|\mathcal{C})$& 6.5          & 3.0           & \textbf{1.0}     &   \textbf{4.5}               &    2.5                     &\textbf{1.0}   \\ 
			\hline 
			&	        &       &             &             &               &                                &                                 \\
			
			&	        &       &             &             &               &                                &                                 \\
			&	        &                    &               &     Scenario 5             &                                &                                & \\
			\hline
			Method    & Metric       &  $n_t = 5$   &    $n_t =  15$   &  $n_t =  30$  & $n_t \sim \text{Pois}(5)$   & $n_t \sim \text{Pois}(15)$  & $n_t \sim\text{Pois}(30)$  \\                      
			\hline	
			WNBS     &$\vert K - \hat{K}\vert$&\textbf{0.1} &\textbf{0.0}  &\textbf{0.0}   &\textbf{0.0}                  &\textbf{0.0}                    &\textbf{0.0} \\ 
			NBS     &$\vert K - \hat{K}\vert$&1.4          &0.2           &0.2            &0.3                           &0.3                             &0.3 \\ 
			WNBS     &$d(\hat{\mathcal{C}}|\mathcal{C})$&9.5          &\textbf{3.0}  &\textbf{2.0}   &\textbf{6.0}                  &\textbf{5.5}                    & \textbf{6.0}\\  
			NBS     &$d(\hat{\mathcal{C}}|\mathcal{C})$&$\infty$     &6.0           &4.0            &11.5                          &9.5                             &7.0 \\ 
			\hline 
			
		\end{tabular}
	\end{small}
\end{table}

\paragraph*{Scenario 5.}

We consider situations  where the  mean and variance remain constant but the distribution of the data changes. Setting $K = 2$,  we assume that  the data are generated as
\[
  y_t   \,=\,    \begin{cases}
  \epsilon_t   & \text{if } \,\,\,  t  \in A_j, \,\,\,\,\text{with} \,\, j \,\, \text{ odd,}\\
  \frac{\xi_t}{\sqrt{  \frac{2.5}{2.5 - 2}    }}  &  \text{otherwise.} 
  \end{cases}
\]
where   $\epsilon_t  \overset{ \text{ind}  }{\sim} N(0,1)$, and the $\xi_t$'s are  i.i.d. and follow a  $t$ distribution with $2.5$  degrees of freedom.

A visualization of the different scenarios is given in  Figures \ref{fig1} and \ref{fig2}. There, we can see that  indeed  the five scenarios  capture a broad range of models  that can  allow us to  asses the quality of different methods. 

Based on the results in Tables  \ref{tab1}--\ref{tab5},  we can see that, generally,  the  best  performance is attained by  the NWBS method. This is  seen in Scenario 1, where,  as the sample size  grows,  both NWBS and NBS  provide the best estimates  of $K$. This is not surprising  as  Scenario 1 presents a situation where the  distributions are not members of usual parametric families. 

In Scenario 2, H-SMUCE  attains the best performance  with NWBS  as a close competitor.   This scenario poses a challenge for most methods  due  to the heavy tails nature of the $t$-distribution. For instance, WBS is known to enjoy consistency properties for detecting mean changes but  when the errors are sub-Gaussian.

 Interestingly, in Scenario 3, we can see that   NBS and NWBS are  not the best methods. Their performance  is overshadowed  by methods like  WBS, S3IB, SMUCE, and   B$\&$P. However,  NWBS is still competitive in estimating $K$, although the aforementioned  methods  give better  change point estimates. This should not come as a surprise, since all these methods are designed to work well in this particular change point model.

 In Scenario 4  we  see  a very  clear  advantage for using one of the nonparametric  approaches NWBS, NBS, and NMCD. As it  can be seen in Table \ref{tab4}, methods  like WBS, PELT,  B$\&$P, SMUCE, and H-SMUCE  suffer greatly in this scenario. This is particularly  more interesting for H-SMUCE, as  such method  is known to detect  changes in variance but  when there are  also changes in mean. However,  Scenario  4 only  includes  changes in variance with the mean remaining constant. 

Finally, Scenario 5 seems to be the most challenging  for all methods. In fact, NWBS seems to be the only method capable of estimating  correctly the number of change points.

\subsection{ Case  $n_t > 1$}

We conclude the experiments  section  with simulations in the case where the number of  data points collected at any time instance can be more than one. To construct our examples,  once again we consider the five  scenarios from  Section \ref{sec-nt=1}. The only difference now is that we fix $T =  1000$,  and allow $n_t$  to vary.

 For values of $n_t$, we consider two  generic situations. The first consists of having  $n_t =  c$,  for all $t \in [1,T]$, and for  a constant  $c$. The different values of $c$ that we consider are $5$, $15$, and $30$. The second situation that we construct is  $n_t \sim \text{Pos}(c)$, where  $c$ is fixed to one of the values   $5$, $15$, or $30$.

With regards to benchmarks, we  only compare the NBS and NWBS methods. The reason for this is  that for all other methods, handling the case $n_t>1$ would require to make  significant adjustments to their algorithms/code. This  could lead to unfair comparisons. 

As for performance  metrics,  we use the same criteria from Section \ref{sec-nt=1}, omitting the  measure $d(\mathcal{C}|\hat{\mathcal{C}})$ as it does not provide additional information in the comparisons  between  NWBS and NBS.

The choice of tuning parameter for both WNBS and NBS  is  done in  the same way as in Section \ref{sec-nt=1}. Thus, we use Algorithm \ref{alg:full}  with $\tau  =  2(\log n_{1:T})/3$.	
	
Based on the results in Table \ref{tab6},  we see that, generally, NWBS  outperforms NBS. This tends to be more  evident as the number of samples grows, and  it goes in agreement with our results in  Section \ref{sec-nt=1}, and with our theory in Section \ref{sec-theory}.


%
%
%

\appendix


\section{Main proofs of Theorems \ref{thm-bs} and  \ref{thm-wbs} }\label{sec-appa}

\subsection{Notation}
\begin{definition}
	\label{def:pop_cusum}
	Denote the population version of the CUSUM Kolmogorov--Smirnov statistic as
	\[
	\Delta_{s,e}^t = \underset{ z \in \mathbb{R}  }{\sup }\bigl|  \Delta_{s,e}^t(z)  \bigr|,  
	\]
	where
	\begin{equation}\label{eq-pop-KS-z}
	\Delta_{s,e}^t(z) = \sqrt{\frac{  n_{s:t} \, n_{(t+1):e} }{n_{s:e}} } \left\{ F_{s:t}(z) -   F_{ (t+1):e }(z)  \right\},  
	\end{equation}
	and
	\[
	F_{s:e}(z) \,=\, \frac{1}{n_{s:e}}\sum_{t=s}^e   \, n_t  F_t(z).
	\]
\end{definition}

\subsection{Proof of Theorem \ref{thm-bs}}

\begin{proof}
	We  start by noticing that Assumption \ref{as1}--\ref{as2} imply that	
	\begin{equation}
		\label{eqn:as2_lower-bound}
		  \frac{ n_{\max}^{9/2}}{ n_{\min}^{5}  } \sqrt{\log n_{1:T}}\,\leq \, c_{\alpha}T^{5\Theta - 9/2}, 
	\end{equation}
	for a sufficiently small $c_{\alpha}$. To see this, notice that for some positive constants $c_1$ and $c_2$,  we have  $c_1 n \leq  n_{\min} \leq n_{\max}  \leq c_2 n$,  and so 
	\[
	 \frac{ n_{\max}^{9/2}}{ n_{\min}^{5}  } \sqrt{\log n_{1:T}}   \,\leq \,  \frac{ c_2^{9/2}  }{  c_1^5  \sqrt{n}  } \sqrt{ \log(c_2 T)  + \log n  } \,\leq \, \frac{c_2^{9/2}  }{ c_1^5 } \sqrt{  \log(c_2 T)  + 1  }.
	\]
	
\noindent Next,	let  $[s,e] \,\subset\,[1,T]$  an interval  satisfying
	\[
	\eta_{i_0} \leq s < \eta_{i_0+1} <\ldots  < e \leq \eta_{i_0+l}, \\
	\]
	for some $l \,\geq \, 0$. Suppose also that 
	\begin{equation}
	\label{eqn:eq7}
	\max\{ \min\{ \eta_{i_0+1} - s, s-  \eta_{i_0}  \},  \min\{ \eta_{i_0+l}- e, e- \eta_{i_0+l-1}\} \} \,= \, \epsilon. 
	\end{equation}
	Then  we notice  that by Assumption \ref{as2} and (\ref{eqn:as2_lower-bound}),
	\begin{equation}
	\label{eqn:e8}
	\begin{array}{lll}
	\frac{\epsilon  }{ \delta/4  }   & \leq &\frac{ C_2 \, \log( n_{1:T} )  \,\delta^{-3} \kappa^{-1} (e-s)^{7/2}\frac{n_{\max}^{5/2} }{ n_{\min}^3 }}{\delta/4}\\
	& \leq& \frac{ 4C_2  \kappa^{-1} T^{7/2} \left[c_{\alpha}  T^{5\Theta - 9/2}  \right] }{  C_{\alpha}^4 \kappa^{-4}  T^{4\Theta}    }\\
	& \leq&  \left[ 4C_2 c_{\alpha}   C_{\alpha}^{-4} \right] \kappa^3 T^{\Theta -1} \\
	& \leq& 1.
	\end{array}
	\end{equation}
	Hence, for any change point $\eta_{i}$,   it holds that 
	\[
	\vert \eta_{i} -s \vert   \,\leq  \,   \epsilon, \,\,\,\,\,\,\,\,\,\,\,\,\text{or}\,\,\,\,\,\,\,\,\,\,\,\, \vert \eta_{i} -s \vert \,\geq \,  \delta  -  \epsilon \,\geq \,  \frac{3\delta}{4}.
	\]
	Moreover, the same  holds for $e$. 
	
	To continue  the proof, we proceed as in the proof of Theorem  1 \cite{wang2017optimal}. To that end, we assume that the event  
	\begin{equation}
	\label{eqn:concentration2}
	\underset{ 1\leq  s\leq b <e \leq   T}{\max }\, \underset{z \in \mathbb{R}}{\sup}  \,\,   \vert  D_{s,e}^b(z) - \Delta_{s,e}^b(z)\vert\,\leq \,   C\sqrt{\log n_{1:T}} ,
	\end{equation}
	holds.
	
	\paragraph{Step 1.} Suppose that there exists $\eta_i \in (s,e)$  such that $\min\{ e - \eta_i, \eta_i - s  \} \,\geq \, 3\delta/4$.  Then set 
	\[
	\gamma \,:=\, \frac{ n_{\max}^5 }{ n_{\min}^{6}  } \log n_{1:T}\,\leq \, c_{\alpha}^2 T^{10\Theta - 9}\,\leq \,  c_{\alpha}^2 T^{\Theta}\,\leq \,  c_{\alpha }^2  C_{\alpha}^{-1} \kappa\delta \,\leq \, \frac{3 \delta}{32}.
	\]
	Therefore,   by Lemma \ref{lem3}
	\begin{equation}
	\label{eqn:aux1}
	\underset{  k =   \ceil{s + \gamma},\ldots, \floor{e - \gamma}  }{\max} \,\Delta_{s+1,e}^{k} \,\geq \,  \frac{3 \kappa  \delta n_{\min}  }{8   \sqrt{(e-s)n_{\max}} }.
	\end{equation}
	The latter inequality combined with (\ref{eqn:concentration2}) and the proof of Lemma \ref{lem:concetration} imply 
	\begin{equation}
	\label{eqn:aux2}
	\begin{array}{lll}
	\underset{  k =   \ceil{s + \gamma},\ldots, \floor{e - \gamma}  }{\max} \,D_{s+1,e}^{k} &\geq &  	\underset{  k =   \ceil{s + \gamma},\ldots, \floor{e - \gamma}  }{\max} \, \Delta_{s+1,e}^{k}   -  C\sqrt{ \log n_{1:T}}\\
	& \geq &  \frac{3 \kappa  \delta n_{\min}  }{8   \sqrt{(e-s)n_{\max}} } - C\sqrt{ \log n_{1:T}}\\
	& \geq & \frac{ \kappa  \delta n_{\min}  }{8   \sqrt{(e-s)n_{\max}} },
	\end{array}
	\end{equation}
	where the last inequality follows by noticing that from Assumption \ref{as2}
	\begin{equation}
	\label{eqn:e7}C\sqrt{\log n_{1:T}}\,\leq \,   C\, c_{\alpha} T^{5\Theta - 9/2}  \frac{n_{\min}^{3}  }{ n_{\max}^{5/2}  } \,\leq \, Cc_{\alpha} T^{\Theta} T^{-1/2}\frac{ n_{\min}  }{ \sqrt{n_{\max}}  } \,\leq \, \frac{Cc_{\alpha}}{C_{\alpha}} \kappa\delta T^{-1/2}\frac{ n_{\min}  }{ \sqrt{n_{\max}}  } \,\leq \, \frac{ \kappa  \delta n_{\min}  }{64   \sqrt{(e-s)n_{\max}} },
	\end{equation}
	and also
	\begin{equation}
	\label{eqn:e7.2}
	C \sqrt{\log n_{1:T}} \,\leq \,   C\, c_{\alpha} T^{5\Theta - 9/2}  \frac{n_{\min}^{5}  }{ n_{\max}^{9/2}  } \,\leq \,  C\, c_{\alpha} T^{4\Theta - 7/2}  \frac{n_{\min}^{5}  }{ n_{\max}^{9/2}  }  \,\leq \,   \frac{ C\, c_{\alpha} \kappa \delta^{4}}{C_{\alpha}^4 (e-s)^{7/2}   }\frac{n_{\min}^{5}  }{ n_{\max}^{9/2}}.
	\end{equation}
	
	Therefore from (\ref{eqn:e7}), a change point will be detected provided that (\ref{eqn:tuning}) holds.
	
	If there is no undetected  change point in $(s,e) $ then one of the following cases must hold.
	
	\paragraph{(Case  1). }  There is no change point in $(s,e)$. Then 
	\[
	\underset{  k =   \ceil{s + \gamma},\ldots, \floor{e - \gamma}  }{\max} \, D_{s+1,e}^{k} \,\leq \,  	\underset{  k =   \ceil{s + \gamma},\ldots, \floor{e - \gamma}  }{\max} \, \Delta_{s+1,e}^{k} \,+\,  C \sqrt{\log n_{1:T}}\,=\,C\sqrt{\log n_{1:T}}\,<\, \tau,
	\]
	so no change points are detected.
	
	\paragraph{(Case  2). }There exists two change points  $\eta_{i}, \eta_{i+1}$  within $(s,e) $  and $\max\{ \eta_i - s, e -\eta_{i+1}  \}\,\leq \,\epsilon$. Then, by Lemmas \ref{lem2} and \ref{lem4} 
	\[
	\begin{array}{lll}
	\underset{  k =   \ceil{s + \gamma},\ldots, \floor{s - \gamma}  }{\max} \, D_{s+1,e}^{k} &\leq &  	\underset{  k =   \ceil{s + \gamma},\ldots, \floor{e - \gamma}  }{\max} \, \Delta_{s+1,e}^{k} \,+\,  C\sqrt{\log n_{1:T}}\\
	& \leq& \max\left\{ D_{s+1,e}^{\eta_{i} }, D_{s+1,e}^{\eta_{i+1} }\right\}\\
	&\leq &2\sqrt{n_{\max}\epsilon}     \,+\, C \sqrt{\log n_{1:T}} \\
	& <& \tau,
	\end{array}
	\]
	so no change points are detected.
	
	\paragraph{(Case  3). }  If there is only one  change point  $\eta_i$ in $(s,e)$ such that $\min\{  \eta_i- s, e - \eta_i  \} \,\leq \, \epsilon$, then 
	as in the previous  case, we can show that no change points are detected.
	
	\paragraph{Let us now  assume that there exists a change point  $\eta_i \in (s,e)$  such that   $\min\{    \eta_i - s,  e - \eta_i \}   \,\geq \, \epsilon$.  }
	Then let
	\[
	b \in          	\underset{  k =   \ceil{s + \gamma},\ldots, \floor{e - \gamma}  }{\arg \max} \, D_{s+1,e}^{k}
	\]
	and 
	\[
	z_0 \in  \underset{ z\in \mathbb{R} }{\arg \max}\, \vert    D_{s+1,e}^{b}(z) \vert.
	\]
	Next notice that  condition (\ref{eqn:eq2}) in  Lemma \ref{lem:rate}  holds  combining (\ref{eqn:e7}) with (\ref{eqn:e7.2}). Now observe that (\ref{eqn:aux2}) implies
	\[
	\underset{  k =   \ceil{s + \gamma},\ldots, \floor{e - \gamma}  }{\arg \max} \,\vert D_{s+1,e}^{k}(z_0)\vert \,=\,   	\underset{  k =   \ceil{s + \gamma},\ldots, \floor{e - \gamma}  }{\arg \max} \, D_{s+1,e}^{k}  \,\geq \,\frac{ \kappa  \delta n_{\min}  }{8   \sqrt{(e-s)n_{\max}} }.   
	\]
	Moreover,  $ k \in  (s,  e+ \gamma) \cup  (s - \gamma , e)  $and $z\in \mathbb{R}$ implies
	\[
	\begin{array}{lll}
	\vert D_{s+1,e}^{k}(z)\vert  &\leq &     \vert \Delta_{s+1,e}^{k}(z_0)\vert    \,+\,C \sqrt{\log n_{1:T}}\\
	& \leq&  2\sqrt{ n_{\max} \gamma  }  \,+\,  \frac{ \kappa  \delta n_{\min}  }{64   \sqrt{(e-s)n_{\max}} }\\
	& \leq & 2\sqrt{ \frac{n_{\max}^{6}}{n_{\min}^{6} } \log n_{1:T }}\,+\, \frac{ \kappa  \delta n_{\min}  }{64   \sqrt{(e-s)n_{\max}} }\\
	& =  & 2\frac{ n_{\min} }{\sqrt{n_{\max}}  }  \frac{n_{\max}^{7/2}}{n_{\min}^{4} } \sqrt{\log n_{1:T}}\,+\, \frac{ \kappa  \delta n_{\min}  }{64   \sqrt{(e-s)n_{\max}} }\\
	& \leq &  2\frac{ n_{\min} }{\sqrt{n_{\max}}  }  c_{\alpha} T^{5\Theta  -   9/2}\,+\, \frac{ \kappa  \delta n_{\min}  }{64   \sqrt{(e-s)n_{\max}} }\\
	& \leq &  2\frac{ n_{\min} }{\sqrt{n_{\max}}  }  c_{\alpha} C_{\alpha}^{-1} \kappa \delta (e- s)^{-1/2}\,+\, \frac{ \kappa  \delta n_{\min}  }{64   \sqrt{(e-s)n_{\max}} }\\
	& \leq &  \frac{ \kappa  \delta n_{\min}  }{32   \sqrt{(e-s)n_{\max}}},
	\end{array}
	\]
	where the second inequality follows  from (\ref{eqn:e7}) and Lemma \ref{lem4},  the third from the definition of $\gamma$,  and the last   three  by Assumption \ref{as2}.
	
	Therefore,
	\[
	b \in          	\underset{ s < k < e }{\arg \max} \,\vert D_{s+1,e}^{k}\vert.
	\]
	The proof concludes by Lemma \ref{lem:rate}.
	
\end{proof}

\subsection{ Proof of Theorem~\ref{thm-wbs} }

\begin{proof}

Let $\epsilon = C_{\epsilon}\kappa^{-2}\log(n_{1:T}) n^{7}_{\max}n^{-8}_{\min}$.  Since $\epsilon$ is the upper bound of the localisation error, by induction, it suffices to consider any interval $(s, e) \subset (1, T)$	 that satisfies
	\[
		\eta_{k-1} \leq s \leq \eta_k \leq \ldots \leq \eta_{k+q} \leq e \leq \eta_{k+q+1}, \quad q \geq -1,
	\]
	and
	\[
		\max\bigl\{\min\{\eta_k - s, \, s - \eta_{k-1}\}, \, \min \{\eta_{k+q+1} - e, \, e - \eta_{k+q}\} \bigr\} \leq \epsilon,
	\]
	where $q = -1$ indicates that there is no change point contained in $(s, e)$.
	
By Assumption~\ref{assumption-3}, it holds that 
	\[
		\epsilon \leq c\frac{n^7_{\max}}{n^{7}_{\min}} \delta \leq \delta/4.
	\]	
	It has to be the case that for any change point $\eta_k \in (0, T)$, either $|\eta_k - s| \leq \epsilon$ or $|\eta_k - s| \geq \delta - \epsilon \geq 3\delta/4$.  This means that $\min\{|\eta_k - s|, \, |\eta_k - e|\}\leq \epsilon$ indicates that $\eta_k$ is a detected change point in the previous induction step, even if $\eta_k \in (s, e)$.  We refer to $\eta_k \in (s, e)$ an undetected change point if $\min\{|\eta_k - s|, \, |\eta_k - e|\} \geq 3\delta/4$.
	
In order to complete the induction step, it suffices to show that we (i) will not detect any new change point in $(s, e)$ if all the change points in that interval have been previous detected, and (ii) will find a point $b \in (s, e)$ such that $|\eta_k - b| \leq \epsilon$ if there exists at least one undetected change point in $(s, e)$.

For $j = 1, 2$, define the events
	\[
		\mathcal{A}_j(\gamma) = \left\{\max_{1\leq s < b < e \leq T}\sup_{z \in \mathbb{R}}\left|\sum_{k = s}^e w^{(j)}_k\sum_{i = 1}^{n_k} \left\{\mathbbm{1}_{\{Y_{k, i} \leq z\}} - \mathbb{E}\left(\mathbbm{1}_{\{Y_{k, i} \leq z\}}\right) \right\}\right| \leq \gamma\right\},
	\]
	where
	\[
		w_k^{(1)} = \begin{cases}
 				\sqrt{\frac{n_{(b+1):e}}{n_{s:b}n_{s:e}}}, & k = s, \ldots, b, \\
 				-\sqrt{\frac{n_{s:b}}{n_{(b+1):e} n_{s:e}}}, & k = b+1, \ldots, e,
			\end{cases}\, \mbox{and }
		w_k^{(2)} = \frac{1}{\sqrt{n_{s:e}}}	.
	\]
	Define
	\[
		\mathcal{S} = \bigcap_{k = 1}^K\left\{\alpha_s \in [\eta_k - 3\delta/4, \eta_k - \delta/2], \, \beta_s \in [\eta_k + \delta/2, \eta_k + 3\delta/4], \mbox{ for some } s = 1, \ldots, S\right\}.
	\] 
	Set $\gamma$ to be $C_{\gamma} \sqrt{\log(n_{1:T})}$, with a sufficiently large constant $C_{\gamma} > 0$.  The rest of the proof assumes the the event $\mathcal{A}_1(\gamma) \cap \mathcal{A}_2(\gamma) \cap \mathcal{S}$, the probability of which can be lower bounded using Lemma~\ref{lem:concetration} and also Lemma~13 in \cite{wang2018univariate}. \\

\noindent \textbf{Step 1.}  In this step, we will show that we will consistently detect or reject the existence of undetected change points within $(s, e)$.  Let $a_m$, $b_m$ and $m^*$ be defined as in Algorithm~\ref{algorithm:WBS}.   Suppose there exists a change point $\eta_k \in (s, e)$ such that $\min \{\eta_k - s, \, e - \eta_k\} \geq 3\delta/4$.  In the event $\mathcal{S}$, there exists an interval $(\alpha_m, \beta_m)$ selected such that $\alpha_m \in [\eta_k - 3\delta/4, \eta_k - \delta/2]$ and $\beta_m \in [\eta_k + \delta/2, \eta_k + 3\delta/4]$.  

Following Algorithm~\ref{algorithm:WBS}, $(s_m, e_m) = (\alpha_m, \beta_m) \cap (s, e)$.  We have that $\min \{\eta_k - s_m, e_m - \eta_k\} \ge (1/4)\delta$ and $(s_m, e_m)$ contains at most one true change point. 

It follows from Lemma~\ref{lem3}, with $c_1$ there chosen to be $1/4$, that
	\[
		\max_{s_m < t < e_m} \bigl|\Delta^{t}_{s_m, e_m}\bigr|  \geq \frac{\kappa  \delta n_{\min}^{3/2}  }{8    \sqrt{(e_m - s_m)}n_{\max}},
	\]
	Therefore
	\begin{align*}
		 a_m  = \max_{s_m < t < e_m} D^{t}_{s_m, e_m} \geq \max_{s_m < t < e_m} \Delta^{t}_{s_m, e_m} - \gamma  \geq \frac{1}{8\sqrt{C_S}} \kappa \delta^{1/2} \frac{n^{3/2}_{\min}}{n_{\max}} - \gamma.
	\end{align*}
	Thus for any undetected change point $\eta_k \in (s, e)$, it holds that
		\begin{equation}\label{eq:wbsrp size of population}
			a_{m^*} = \sup_{1\le m\le  S} a_m \geq \frac{1}{8\sqrt{C_S}} \kappa \delta^{1/2} \frac{n^{3/2}_{\min}}{n_{\max}} - \gamma \geq c_{\tau, 2}  \kappa \delta^{1/2} \frac{n^{3/2}_{\min}}{n_{\max}},   
		\end{equation}
		where the last inequality is from the choice of $\gamma$ and $c_{\tau, 2} > 0$ is achievable with a sufficiently large $C_{\mathrm{SNR}}$ in Assumption~\ref{assumption-3}.  This means we accept the existence of undetected change points.

Suppose that there are no undetected change points within $(s, e)$, then for any $(s_m, e_m)$, one of the following situations must hold.
	\begin{itemize}
		\item [(a)]	There is no change point within $(s_m, e_m)$;
		\item [(b)] there exists only one change point $\eta_k \in (s_m, e_m)$ and $\min\{\eta_k - s_m, e_m - \eta_k\} \le \epsilon_k$; or
		\item [(c)] there exist two change points $\eta_k, \eta_{k+1} \in (s_m, e_m)$ and $\eta_k - s_m \leq \epsilon_k$, $e_m - \eta_{k+1} \leq \epsilon_{k+1}$.
	\end{itemize}

Observe that if (a) holds, then we have
	\[
		\max_{s_m < t < e_m} D^{t}_{s_m, e_m}  \leq \max_{s_m < t < e_m}  \Delta^{t}_{s_m, e_m}+ \gamma = \gamma  <   \tau,
	\]
	so no change points  are detected.
	
	Cases (b) and (c) are similar, and case (b) is simpler than (c), so we will only focus on case (c).  It follows from Lemma~\ref{lem4} that
	\[
		\max_{s_m < t < e_m} \Delta^t_{s_m, e_m} \leq \sqrt{n_{\max}}\sqrt{e_m - \eta_{k+1}}\kappa_{k+1} + \sqrt{n_{\max}} \sqrt{\eta_k - s_m} \kappa_k \leq 2C_{\epsilon} \sqrt{\log (n_{1:T})},
	\]
	therefore
	\[
		\max_{s_m < t < e_m} D^t_{s_m, e_m} \leq \max_{s_m < t < e_m} \Delta^t_{s_m, e_m} + \gamma \leq 2C_{\epsilon} \sqrt{\log (n_{1:T})} + C_{\gamma} \sqrt{\log (n_{1:T})} < \tau.
	\]
	Under \eqref{eq-thm4-tau}, we will always correctly reject the existence of undetected change points. \\

\noindent \textbf{Step 2.}  Assume that there exists a change point $\eta_k \in (s, e)$ such that $\min\{\eta_k - s, \eta_k - e\} \ge 3\delta/4$.  Let $s_m$, $e_m$ and $m^*$ be defined as in Algorithm~\ref{algorithm:WBS}.  To complete the proof it suffices to show that, there exists a change point $\eta_k \in (s_{m*}, e_{m*})$ such that $\min\{\eta_k - s_{m*}, \eta_k - e_{m*}\} \geq \delta/4$ and $|b_{m*} - \eta_k| \leq \epsilon$.

To this end, we are to ensure that the assumptions of Lemma~\ref{lem-21} are verified.  Note that \eqref{eq-lem13-2} follows from \eqref{eq:wbsrp size of population}, \eqref{eq-lem13-1} and \eqref{eq-lem13-4} follow from the definitions of events $\mathcal{A}_1(\gamma)$ and $\mathcal{A}_2(\gamma)$, and \eqref{eq-lem13-3} follows from Assumption~\ref{assumption-3}.

Thus, all the conditions in Lemma~\ref{lem-21} are met. Therefore, we conclude that there exists a change point $\eta_{k}$, satisfying
	\begin{equation}
		\min \{e_{m^*}-\eta_k,\eta_k-s_{m^*}\} > \delta /4 \label{eq:coro wbsrp 1d re1}
	\end{equation}
	and
	\[
		| b_{m*}-\eta_{k}| \leq C\frac{n^9_{\max}}{n^{10}_{\min}}\kappa^{-2}\gamma^{2} \leq \epsilon,
	\]
	where the last inequality holds from the choice of $\gamma$ and Assumption~\ref{assumption-3}.

The proof is completed by noticing  that \eqref{eq:coro wbsrp 1d re1} and  $(s_{m^*}, e_{m^*}) \subset (s, e)$ imply that
	\[
		\min \{e-\eta_k,\eta_k-s\}  >  \delta /4 > \epsilon.
	\]
	As discussed in the argument before {\bf Step 1}, this implies that $\eta_k $ must be an undetected change point.	
\end{proof}

\section{Auxiliary lemmas and proofs}

\begin{lemma}	
\label{lem:rate}
	Let  $[s,e] \,\subset\,[1,T]$  an interval  satisfying
	\[
	\eta_{i_0} \leq s < \eta_{i_0+1} <\ldots  < e \leq \eta_{i_0+l}, \\
	\]
	for some $l \,\geq \, 1$. Let   
	$$b \in  \underset{  s <  k < e }{\arg  \max}  \,   D_{ s+1,e}^{k} $$
	and
	\[
	z_0 \in  \underset{z \in \mathbb{R}}{\arg \min}\,\vert D_{ s+1,e}^{b}(z)\vert .
	\]
	Suppose that for   some  $c_1 > 0$  and $\epsilon >0$ we have  that
	\begin{equation}
	\label{eqn:eq1}
	\max\{ \min\{ \eta_{i_0+1} - s, s -  \eta_{i_0}  \},  \min\{ \eta_{i_0+l}- e, e- \eta_{i_0+l-1}\} \} \,= \, \epsilon
	\end{equation}
	with 
	\begin{equation}
	\label{eqn:eq2}
	\epsilon\,<\, \min\left\{  \frac{1}{8}\left( \frac{3 c_1}{4}  \right)^2 \frac{\kappa^2 \delta^2  }{(e- s)}\frac{ n_{\min} }{n_{\max}  } , \frac{\delta}{4}  \right\},
	\end{equation}
	and
	\begin{equation}
	\label{eqn:e3}
	D_{ s+1,e}^{b}(z_0)\,\geq \,  \frac{c_1 \kappa  \delta n_{\min}  }{ \sqrt{(s-e)n_{\max}} }.
	\end{equation}
	In addition, assume that there  exists  a constant  $c_3 > 0$ such that 
	\begin{equation}
	\label{eqn:e4}
	\underset{ s < k < e }{\max }\,\vert   D_{s+1,e}^k(z_0) -  \Delta_{s+1,e}^k(z_0) \vert \,\leq \, C\sqrt{\log n_{1:T} }\,<\, \min\left\{ c_3 \frac{\kappa \delta^4 n_{\min}^5}{(e-s)^{7/2} n_{\max}^{9/2}} ,\frac{c_1 \kappa  \delta n_{\min}  }{ 4\sqrt{(s-e)n_{\max}} } \right\},
	\end{equation}
	where $C>0$ is an appropriate constant.
	Then  there exists   a change point $\eta_k \in [s,e] $  such that
	\begin{equation}
	\label{eqn:conclusion}
	\begin{array}{l}
	\min\{  \eta_{k} -s, e -\eta_{k}  \} \,>\,  \frac{1}{8} \left( \frac{3 c_1}{4}  \right)^2 \frac{\kappa^2 \delta^2  }{(e - s)}  \frac{ n_{\min } }{n_{\max}  },\\
	\vert \eta_k -b  \vert \,\leq \,
	C_1 \sqrt{\log n_{1:T} } \,\delta^{-3} \kappa^{-1} (e-s)^{7/2}\frac{n_{\max}^{5/2} }{ n_{\min}^3 } ,  \\
	\vert \Delta_{s+1,e}^{ \eta_k }(z_0) \vert \,\geq \,  \vert	D_{ s+1,e}^{b}(z_0) \vert   -  C\sqrt{\log n_{1:T} } \,\geq \,
	\underset{s < k < s}{\max} \, \vert	\Delta_{ s+1,e}^{k}(z_0) \vert  -  2 C\sqrt{\log n_{1:T} } ,
	\end{array}
	\end{equation}
	for some  positive constant  $C_1$.
\end{lemma}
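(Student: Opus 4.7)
The plan is to follow the Venkatraman-type blueprint for CUSUM maximizers, adapted to the fixed-level slice $t\mapsto \Delta^{t}_{s+1,e}(z_0)$ of the KS-CUSUM process. Three consecutive reductions will be needed: (i) localize $b$ strictly in the interior of $(s,e)$; (ii) identify an undetected change point $\eta_k$ at which $|\Delta^{t}_{s+1,e}(z_0)|$ attains, up to noise, its maximum over $(s,e)$; and (iii) convert the inequality $D^{b}_{s+1,e}(z_0)\ge D^{\eta_k}_{s+1,e}(z_0)$, together with the concentration bound (\ref{eqn:e4}), into a rate for $|b-\eta_k|$ by exploiting the piecewise slope of $\Delta^{t}_{s+1,e}(z_0)$ around $\eta_k$.

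For the interior step I would use the trivial envelope $|\Delta^{t}_{s+1,e}(z_0)|\le \sqrt{n_{\max}\min(t-s,e-t)}$, which follows from $|F_{s+1:t}(z_0)-F_{(t+1):e}(z_0)|\le 1$ together with $n_{s+1:t}n_{(t+1):e}/n_{s+1:e}\le \min(n_{s+1:t},n_{(t+1):e})\le n_{\max}\min(t-s,e-t)$. Combined with (\ref{eqn:e3}) and (\ref{eqn:e4}) this produces a lower bound on $\min(b-s,e-b)$ of the order $\kappa^2\delta^2 n_{\min}/\{(e-s)n_{\max}\}$, which by (\ref{eqn:eq2}) comfortably exceeds $\epsilon$. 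In particular $b$ cannot be adjacent to any change point that has already been localized at distance at most $\epsilon$ from the boundary, so only undetected change points, i.e.\ those with $\min(\eta-s,e-\eta)\ge 3\delta/4$, can lie close to $b$; this is exactly the first line of (\ref{eqn:conclusion}).

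For the structural step I would establish a Venkatraman-type monotonicity of the fixed-$z_0$ population CUSUM: on any subinterval of $(s,e)$ on which $F_t(z_0)$ is constant, the map $t\mapsto \Delta^{t}_{s+1,e}(z_0)$ is monotone. This follows from a direct sign analysis of the one-step difference in our discrete, weighted setting, since on such a segment the factor $F_{s+1:t}(z_0)-F_{(t+1):e}(z_0)$ is affine in the cumulative weight $n_{s+1:t}$ while the prefactor $\sqrt{n_{s+1:t}n_{(t+1):e}/n_{s+1:e}}$ is unimodal in the same variable. Consequently the extrema of $|\Delta^{t}_{s+1,e}(z_0)|$ on $(s,e)$ can only occur at the endpoints $s,e$, or at a change point inside $(s,e)$; the endpoint values are controlled by the Step 1 envelope and sit below the noise level. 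Hence there is an undetected change point $\eta_k$ with $|\Delta^{\eta_k}_{s+1,e}(z_0)|\ge \max_{s<k<e}|\Delta^{k}_{s+1,e}(z_0)|-2C\sqrt{\log n_{1:T}}$, which is the third line of (\ref{eqn:conclusion}).

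The final step produces the rate. Near $\eta_k$ the slope of $t\mapsto \Delta^{t}_{s+1,e}(z_0)$, computed via the discrete weighted derivative, is bounded below by a quantity proportional to $\kappa\,n_{\min}^{3}/\{(e-s)^{7/2}n_{\max}^{5/2}\delta^{-3}\}$ after absorbing the weights and the offset of $\eta_k$ from the midpoint of $(s,e)$; these are exactly the factors that appear in the conclusion. Pairing $D^{b}_{s+1,e}(z_0)\ge D^{\eta_k}_{s+1,e}(z_0)$ with (\ref{eqn:e4}) gives $\Delta^{\eta_k}_{s+1,e}(z_0)-\Delta^{b}_{s+1,e}(z_0)\le 2C\sqrt{\log n_{1:T}}$, and plugging in the slope bound yields $|b-\eta_k|\le C_1\sqrt{\log n_{1:T}}\,\delta^{-3}\kappa^{-1}(e-s)^{7/2}n_{\max}^{5/2}/n_{\min}^{3}$. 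The main obstacle I anticipate is precisely this slope estimate: establishing a weighted Venkatraman-style per-unit-$t$ decay of $\Delta^{t}(z_0)$ with a quantitatively correct dependence on $\kappa$, $\delta$, $e-s$, and the ratio $n_{\max}/n_{\min}$, and doing so in a way that is robust to which side of $\eta_k$ the point $b$ falls on, is significantly more delicate than in the homogeneous $n_t\equiv 1$ setting of \cite{fryzlewicz2014wild} and \cite{venkatraman1992consistency}, and is what drives the exact exponents advertised in (\ref{eqn:conclusion}).
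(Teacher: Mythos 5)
Your roadmap is, in substance, the one the paper follows: a Venkatraman-type piecewise-monotonicity reduction to locate a change point $\eta_k$ at which $|\Delta^{\cdot}_{s+1,e}(z_0)|$ is essentially maximal, an envelope bound $|\Delta^{t}_{s+1,e}(z_0)|\le 2\sqrt{n_{\max}\min(t-s,e-t)}$ to push this change point away from the endpoints, and a quantitative decay estimate for $t\mapsto\Delta^{t}_{s+1,e}(z_0)$ near $\eta_k$ to convert the noise bound (\ref{eqn:e4}) into a rate for $|b-\eta_k|$. That decay estimate is exactly what the paper isolates as Lemma~\ref{lem:local}, and you correctly flag it as the hard step; the paper establishes it by adapting Lemma~2.6 of \cite{venkatraman1992consistency} to the weighted, heterogeneous-$n_t$ setting with separate treatment of the cases in which the adjacent change point does or does not lie inside $(s,e)$.

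Two caveats are worth noting, because as written they create small gaps. First, your order of operations in step (i) has a circularity: you derive a lower bound on $\min(b-s,e-b)$ from the envelope applied to $\Delta^{b}_{s+1,e}(z_0)$, and then assert that this ``is exactly the first line of (\ref{eqn:conclusion})''. But that line is about $\eta_k$, not $b$; to transfer you need $|b-\eta_k|$ small, which is the output of step (iii), and step (iii) in turn (via the Lemma~\ref{lem:local}-type argument) needs $\eta_k$ away from the boundary as an input. The paper breaks the loop by running Venkatraman first, extracting a change point $\eta_{i_0+i}$ adjacent to $b$ with $|\Delta^{\eta_{i_0+i}}_{s+1,e}(z_0)|\ge|\Delta^{b}_{s+1,e}(z_0)|$, and only then applying the envelope bound (Lemma~\ref{lem4}, together with $n_{\max}\asymp n_{\min}$) directly to that change point to get the first conclusion, with no appeal to $|b-\eta_k|$. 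Second, the map $t\mapsto\Delta^{t}_{s+1,e}(z_0)$ is not monotone on constant segments; the product of an affine function of the cumulative count and the unimodal prefactor $\sqrt{n_{s:t}n_{(t+1):e}/n_{s:e}}$ is in general ``monotone or decreasing-then-increasing'' (Lemma~2.2 of \cite{venkatraman1992consistency}, quoted as Lemma~\ref{lem2} here), and the proof must use that weaker structure. The downstream conclusion you draw — extrema over a segment occur at its endpoints — survives, but the stated justification via pure monotonicity does not.
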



\begin{proof}
	
	Let $\lambda_1 = C \sqrt{\log n_{1:T}}$, where is $C>0$ is a constant   such that $C \sqrt{\log n_{1:T}}$  is an upper bound on the left  hand side of the equation from  Remark \ref{rem:1}. 	Then note that
	\begin{equation}
	\label{eqn:first}
	\underset{s< k < e}{\max} \, \vert	\Delta_{ s+1,e}^{k}(z_0) \vert \,\leq \,\underset{s < k < e}{\max} \, \vert	D_{ s+1,e}^{k}(z_0) \vert\,+\, \lambda_1 \,=\,  
	\vert	D_{ s+1,e}^{b}(z_0) \vert\,+\, \lambda_1 \,\leq \, \vert	\Delta_{ s+1,e}^{b}(z_0) \vert\,+\,2 \lambda_1. 
	\end{equation}
	Let  us assume that  $\eta_{i_0 + i} \leq  b \leq  \eta_{i_0 + i+1}$  for some $i \in \{1,\ldots,l\}$.
	Then from (\ref{eqn:e3}) and (\ref{eqn:e4}) we have that 
	\begin{equation}
	\label{eqn:e5}
	\vert	\Delta_{ s+1,e}^{b}(z_0) \vert \,\geq \,  \vert	D_{ s+1,e}^{b}(z_0) \vert - \lambda_1 \,>\, \frac{(3c_1/4) \kappa  \delta n_{\min}  }{ \sqrt{(s-s)n_{\max}} } \,>\,0.
	\end{equation}
	Next,  without loss of generality let us assume that $\Delta_{ s+1,e}^{b}(z_0)  >0$. Then by the construction in the proof of Lemma \ref{lem2} and by Lemma  2.2  from \cite{venkatraman1992consistency} it follows that the function $k \rightarrow  \Delta_{ s+1,e}^{k}(z_0) $ is either monotonic  or  decreasing and then increasing on $[\eta_{i_0 + i},\eta_{i_0 + i+1} ]$. Therefore, 
	\[
	\max\left\{ 	\Delta_{s+1,e}^{  i_0+i}(z_0) ,	\Delta_{ s+1,e}^{  i_0+i+1  }(z_0)   \right\} \,\geq \, 	\Delta_{ s+1,e}^{b}(z_0).
	\]
	Let us assume that  $\Delta_{ s+1,e}^{k}(z_0)$ is locally decreasing at  $b$. Then
	\begin{equation}
	\label{eqn:e6}
	\Delta_{ s+1,e}^{  i_0+i}(z_0)  \,\geq \, 	\Delta_{ s+1,e}^{ b}(z_0) \,>\,\frac{(3c_1/4) \kappa  \delta n_{\min}  }{ \sqrt{(e-s)n_{\max}} },
	\end{equation}
	by (\ref{eqn:e5}). 
	
	Let us now suppose that 
	\begin{equation}
		\label{eqn:interval}
			\min\{ e- \eta_{i_0+i} , \eta_{i_0+i}-s \} \,\leq \,    \frac{1}{8}  \left( \frac{3 c_1}{4}  \right)^2 \frac{\kappa^2 \delta^2  }{(e -s)} \frac{n_{\min}}{ n_{\max}  }. 
	\end{equation}
	Then by  Lemma \ref{lem4}, and the fact that $  n_{\max } \,\leq \,  2 n_{\min}$,  we  arrive to a contradiction to 	(\ref{eqn:e6}). This proves the first part of the lemma.
	
	\paragraph{	We now claim that} 
	\begin{equation}
	\label{eqn:lower}
		\min\{ \eta_{i_0 + i} -s, e - \eta_{i_0 + i}  \} \,\geq \, (3/4)\delta.
	\end{equation}
	Let us proceed by contradiction.  If $\eta_{i_0 + i} - s \,\leq \, (3/4)\delta  $ then   $s -  \eta_{i_0 + i-1} \geq  \delta/4$  as  $\eta_{i_0 + i} - \eta_{i_0 + i-1} \geq \delta$.  Hence,  $i = 1$. Moreover, by hypothesis $\min\{\eta_{i_0 + 1}-s, s -\eta_{i_0}   \} \,\leq\,\epsilon \,<\,  \delta/4$. And so, it must be the case  that 
	$$ \eta_{i_0 + 1} - s \,\leq \,  \epsilon  \,\leq \,   \frac{1}{8}  \left( \frac{3 c_1}{4}  \right)^2 \frac{\kappa^2 \delta^2  }{(e -s)}\frac{ n_{\min} }{n_{\max}  },  $$
	which implies (\ref{eqn:interval}), and  as we saw before that leads to a contradiction.
	
	\paragraph{	To conclude the proof,}  we use Lemma  \ref{lem:local}  combined with  (\ref{eqn:e4})  to obtain that there exists a constant $C_1 >0$ and a $d$ such that
	\[
	d \in  \left[ \eta_{i_0 + i},  \eta_{i_0 + i} +   C_1 \lambda_1 \,\delta^{-3} \kappa^{-1} (e-s)^{7/2}\frac{n_{\max}^{5/2} }{ n_{\min}^3 }    \right]
	\]
	and 
	\[
	\begin{array}{lll}
	\Delta_{ s+1,e}^{v_{i_0+i} }(z_0) -  \Delta_{s+1,e}^{  d}(z_0)&>&c \left[\frac{  n_{\min}^2 }{ n_{\max}^2 }\right]\delta\,\Delta_{ s+1,e}^{\eta_{i_0+i} }(z_0)   \,\vert d-   \eta_{i_0+1} \vert (e-s)^{-2}\\
	&\geq  & c\left[\frac{(3c_1/4) \kappa  \delta n_{\min}  }{ \sqrt{(e-s)n_{\max}} }\right]\left[\frac{  n_{\min}^2 }{ n_{\max}^2 }\right] \,\delta\,\vert d-   \eta_{i_0+1} \vert (e-s)^{-2}\\
	& =& c \left[\frac{(3c_1/4) \kappa  \delta n_{\min}  }{ \sqrt{(e-s)n_{\max}} }\right] \left[\frac{  n_{\min}^2 }{ n_{\max}^2 }\right] \,\delta\, (e-s)^{-2}\left[  C_1 \lambda_1 \,\delta^{-3} \kappa^{-1} (e-s)^{7/2}\frac{n_{\max}^{5/2} }{ n_{\min}^3 }  \right]\\
	 &  \geq  &  c C_1    \frac{e-s}{\delta }  \lambda_1   \\
	& \geq& 2\lambda_1,
	\end{array}
	\]
	where the second inequality follows from (\ref{eqn:e6}), and the last  one  by noticing that $e-s \geq \delta$ because  of (\ref{eqn:lower}). Hence, if $b \,\geq \, d$ then we would have that $\Delta_{ s+1,e}^k(z_0)$ is locally decreasing on $[\eta_{i_0 + i},b]$ and $[\eta_{i_0 + i},d]$. Thus,
	\[
	\Delta_{ s+1,e}^b(z_0) \,\leq \,  \Delta_{ s+1,e}^d(z_0)\,\leq \,\Delta_{ s+1,e}^{v_{i_0+i} }(z_0) -2 \lambda_1 \,\leq\, \underset{s<k <e}{\max} \vert \Delta_{ s+1,e}^{k }(z_0)  \vert  \,-\,  2\lambda_1, 
	\]
	which contradicts (\ref{eqn:first}).
\end{proof}

 
\begin{lemma}
\label{lem2}
Under Assumption~\ref{as1}, for any pair $(s, e) \subset (1, T)$ satisfying
	\[
	\eta_{k-1} \leq s \leq \eta_k \leq \ldots \leq \eta_{k+q} \leq e \leq \eta_{k+q+1}, \quad q \geq 0,
	\]	
	let
	\[
	   b_1 \in \underset{b = s +1, \ldots, e-1}{\arg \max}\,\, \Delta_{s,e}^b.
	\]
	Then $ b_1 \in \{\eta_1,\ldots,\eta_K \} $.
	
Let $z \in \argmax_{x \in \mathbb{R}} |\Delta^b_{s, e}(x)|$.  If $\Delta^{t}_{s, e}(z) > 0$ for some $t \in (s, e)$, then $\Delta_{s, e}^t(z)$ is either monotonic or decreases and then increases within each of the interval $(s, \eta_k), (\eta_{k}, \eta_{k+1}), \ldots, (\eta_{k+q}, e)$.
\end{lemma}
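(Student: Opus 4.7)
My plan is to reduce both claims to a weighted version of Lemma~2.2 of \cite{venkatraman1992consistency}. First I would fix $z \in \mathbb{R}$ and observe that, as a function of $t$, the quantity $\Delta_{s,e}^t(z)$ is exactly the sample-size-weighted CUSUM statistic applied to the piecewise-constant ``signal'' $\{F_t(z)\}_{t=s}^{e}$, whose jumps can only occur at the true change points $\eta_k,\dots,\eta_{k+q}$. On each of the segments $(s,\eta_k),(\eta_k,\eta_{k+1}),\dots,(\eta_{k+q},e)$, the value $F_t(z)$ is therefore constant in $t$.

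For the second claim I would proceed by writing $\Delta_{s,e}^t(z)$ explicitly on a segment on which $F_t(z)\equiv\mu$ is constant. After collecting common factors, $t\mapsto\Delta_{s,e}^t(z)$ takes the form $\sqrt{n_{s:t}\,n_{(t+1):e}/n_{s:e}}\,(A_t-B_t)$, where $A_t-B_t$ depends only on partial sums of $n_j(F_j(z)-\mu)$ over indices $j$ lying outside the segment. A direct calculation mirroring the argument in \cite{venkatraman1992consistency}, but carrying the weights $n_t$ throughout, shows that the forward difference of this expression as $t$ varies inside the segment changes sign at most once, and when it does the change is from negative to positive. This yields the monotone-or-U-shape conclusion for $\Delta_{s,e}^t(z)$ under the orientation $\Delta_{s,e}^t(z)>0$ hypothesized in the statement; the sign hypothesis just selects that orientation, since for the negative orientation the same argument gives the mirrored (monotone or hill-shape) picture.

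For the first claim I would argue as follows. Pick any $b_1 \in \argmax_{t\in\{s+1,\dots,e-1\}} \Delta_{s,e}^t$ and choose $z_1 \in \argmax_{x\in\mathbb{R}} |\Delta_{s,e}^{b_1}(x)|$, so that $\Delta_{s,e}^{b_1} = |\Delta_{s,e}^{b_1}(z_1)|$. Applying the second claim at $z=z_1$, after flipping the sign of the signed CUSUM if needed to meet the positivity hypothesis, the function $t\mapsto|\Delta_{s,e}^t(z_1)|$ is either monotone or V/U-shaped on each of the segments $(s,\eta_k),\dots,(\eta_{k+q},e)$, so its maximum over $\{s+1,\dots,e-1\}$ is attained at some change point $\eta_{j^{\ast}}\in(s,e)$. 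Then
\[
\Delta_{s,e}^{\eta_{j^{\ast}}} \;=\; \sup_{x}|\Delta_{s,e}^{\eta_{j^{\ast}}}(x)| \;\ge\; |\Delta_{s,e}^{\eta_{j^{\ast}}}(z_1)| \;\ge\; |\Delta_{s,e}^{b_1}(z_1)| \;=\; \Delta_{s,e}^{b_1},
\]
and by the maximality of $b_1$ these inequalities are in fact equalities. Hence $\eta_{j^{\ast}}$ also lies in $\argmax_t \Delta_{s,e}^t$, so $b_1$ may be taken in $\{\eta_1,\dots,\eta_K\}$.

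The main obstacle will be the weighted adaptation of Venkatraman's lemma itself: the heterogeneous weights $n_t$ make the explicit algebra on a constant segment less transparent than in the unit-weight case, and one has to verify carefully that the key structural property, namely at most one interior sign change of the forward difference passing from negative to positive, survives. Beyond that, only minor attention is needed for the boundary segments $(s,\eta_k)$ and $(\eta_{k+q},e)$, whose outer endpoints are not change points; but the monotone-or-U-shape argument uses only constancy of $F_t(z)$ inside the segment and not the identity of the endpoints, so those cases are handled uniformly with the interior ones.
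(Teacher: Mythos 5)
Your strategy is sound in outline but takes a genuinely different, and harder, route than the paper. The paper does \emph{not} re-derive a weighted version of Venkatraman's Lemma~2.2. Instead it applies a time-expansion reduction: fix $z_0 \in \argmax_z |\Delta_{s,e}^{b_1}(z)|$, replace each time point $t$ by $n_t$ unit-weight copies carrying the value $F_t(z_0)$ to form the expanded series $\{r_l(z_0)\}_{l=1}^{n_{s:e}}$, and observe that the weighted population CUSUM at time $t$ is exactly the \emph{unweighted} CUSUM of $\{r_l\}$ evaluated at the index $l = n_{s:t}$, while the change points of $\{r_l\}$ sit exactly at $\{n_{s:\eta_j}\}$. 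Venkatraman's lemma is then invoked verbatim on $\{r_l\}$, yielding both the max-at-change-point claim (via a short contradiction) and the shape claim on segments, with no new algebra. Your proposal to ``carry the weights $n_t$ throughout'' Venkatraman's forward-difference calculation reproves a weighted lemma from scratch, which is avoidable.

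More substantively, your derivation of the first claim from the second has a gap that the time-expansion view closes automatically. The shape property on its own (monotone or decreasing-then-increasing on each segment) does not force the maximizer of $|\Delta_{s,e}^{t}(z_1)|$ over $\{s+1,\ldots,e-1\}$ to be a change point: if the function is, say, decreasing on the boundary segment $(s,\eta_k)$, its maximum there sits at $t=s+1$, which in general is \emph{not} a change point. In Venkatraman's setting this is ruled out because the continuous CUSUM vanishes at the two endpoints of the full interval, so boundary segments cannot host the global maximum; but the discrete weighted statistic $\Delta_{s,e}^{t}(z)$ does \emph{not} vanish at $t=s+1$ or $t=e-1$, and you cannot simply read the boundary-vanishing off the restricted domain $\{s+1,\ldots,e-1\}$. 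The paper's expansion supplies the missing endpoints: the expanded unweighted CUSUM is defined on $l \in \{1,\ldots,n_{s:e}-1\}$, vanishes in the limit at $l=0$ and $l=n_{s:e}$, and these boundary indices do not correspond to any original $t$. Its global maximizer is therefore a change-point index $n_{s:\eta_j}$, which does correspond to an original $t=\eta_j$, and the argument closes. You should either adopt the expansion reduction, or supplement your segment-shape argument with an explicit treatment of the boundary segments showing their local maxima cannot beat the value at the nearest change point.
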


\begin{proof}
	Let us proceed by contradiction assuming that $b_1 \notin \{\eta_1, \ldots, \eta_{k}\}$.  Now let  $z_0$  be such that 
	\[	
		z_0 \in \argmax_{z \in \mathbb{R}} \vert\Delta_{s,e}^{b_1}(z)\vert.
	\]
	Note that due to the fact for any CDF function $F: \mathbb{R} \to [0, 1]$, it holds that $F(-\infty) = 1 - F(\infty) = 0$, we have that $z_0 \in \mathbb{R}$ exists.
	
	Therefore, 
	\[
		b_1 \in \underset{b = s+1, \ldots, e-1}{\arg \max} \vert \Delta_{s,e}^b(z_0)\vert. 
	\]
	Next consider the time series $\{r_l(z_0)\}_{l= 1}^{n_{s:e}}$  defined as
	\[
		r_l(z_0) = \begin{cases}
			F_{s}(z_0) & l \in \{1, \ldots, n_s\} \\
			F_{s+1}(z_0) & l \in \{n_s + 1, \ldots, n_{s:(s+1)}\} \\
			\ldots \\
			F_e(z_0) & l  \in \{ n_{s:(e-1)} + 1, \ldots, n_{s:e} \},
		\end{cases}
	\]
	and for $1\leq  l< n_{s:e}$ define  
	\[
		\tilde{r}_{1, n_{s:e}}^l(z_0) = \sqrt{\frac{n_{s:e} - l}{n_{s:e} l}} \sum_{t=1}^l r_t(z_0) - \sqrt{\frac{l}{n_{s:e}(n_{s:e}-l)}} \sum_{t = l+1}^{n_{s:e}} r_t(z_0).
	\]
	We also notice that the set of change points of the time series $\{r_l(z_0)\}_{l=1}^{n_{s:e}}$ is 
	\[
		\left\{n_{s:\eta_k}, \ldots, n_{s: \eta_{k +q}} \right\}.
	\]

We then notice that by Lemma~2.2 from \cite{venkatraman1992consistency} applied to  $\{ r_l(z_0)\}_{l=1}^{n_{s:e}}$, we have that
	\[
		\Delta_{s,e}^{b_1} = \Delta_{s,e}^{b_1}(z_0) = \tilde{r}_{1,n_{s:e}}^{n_{s:b_1} }(z_0) < \underset{   j \in \{k,\ldots, k +q \} }{\max}\tilde{r}_{1,n_{s:e}}^{    n_{s:\eta_{j} }     }(z_0) = \underset{   j \in \{k, \ldots,k +q \} }{\max} \Delta_{s,e}^{ \eta_{j} }(z_0) \leq \underset{   j \in \{k, \ldots,k +q \} }{\max}  \Delta_{s,e}^{ \eta_{j} },
	\]
	which is a contradiction.

\end{proof}

\begin{lemma}\label{lem4}
Under Assumption~\ref{as1}, let $t \in (s, e)$.  It holds that 
	\begin{equation}\label{eq-lem4-1}
		\Delta_{s,e}^t \leq 2\sqrt{n_{\max}} \min\{\sqrt{s-t + 1},\, \sqrt{e-t}\}.
	\end{equation}
	
	If $\eta_k$ is the only change point in $(s, e)$, then
	\begin{equation}\label{eq-lem4-2}
		\Delta_{s, e}^{\eta_k} \leq \kappa_k \sqrt{n_{\max}} \min\{\sqrt{s-\eta_k+1}, \, \sqrt{e-\eta_k}\}.
	\end{equation}
	
	If $(s, e) \subset (1, T)$ contain two and only two change points $\eta_k$ and $\eta_{k+1}$, then we have
	\begin{equation}\label{eq-2cpt}
		\max_{t = s+1, \ldots, e-1} \Delta^t_{s, e} \leq \sqrt{n_{\max}}\sqrt{e - \eta_{k+1}} \kappa_{k+1} + \sqrt{n_{\max}} \sqrt{\eta_k - s} \kappa_k.
	\end{equation}
\end{lemma}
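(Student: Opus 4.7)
The lemma consists of three bounds of increasing sophistication, which I would prove in order using a common ``prefactor times difference of average CDFs'' structure, i.e.\ $\Delta_{s,e}^t(z) = \sqrt{n_{s:t}n_{(t+1):e}/n_{s:e}}\,[F_{s:t}(z)-F_{(t+1):e}(z)]$.

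For bound \eqref{eq-lem4-1}, the plan is to combine the elementary inequality $ab/(a+b)\le \min(a,b)$, which gives $\sqrt{n_{s:t}n_{(t+1):e}/n_{s:e}}\le \sqrt{\min(n_{s:t},n_{(t+1):e})}$, with the trivial fact that $|F_{s:t}(z)-F_{(t+1):e}(z)|\le 1$ (since both are CDFs). Then $n_{s:t}\le n_{\max}(t-s+1)$ and $n_{(t+1):e}\le n_{\max}(e-t)$ finish the proof with constant $1$; the stated factor $2$ is slack. I would read $\sqrt{s-t+1}$ in the statement as a typo for $\sqrt{t-s+1}$.

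For bound \eqref{eq-lem4-2}, I would exploit the assumption that $\eta_k$ is the only change point in $(s,e)$: then $F_{t'}\equiv F_{\eta_k}$ on $[s,\eta_k]$ and $F_{t'}\equiv F_{\eta_k+1}$ on $[\eta_k+1,e]$, so $F_{s:\eta_k}(z)=F_{\eta_k}(z)$ and $F_{(\eta_k+1):e}(z)=F_{\eta_k+1}(z)$. Taking the supremum over $z$ then gives $\sup_z|F_{\eta_k}(z)-F_{\eta_k+1}(z)|\le \kappa_k$ by definition \eqref{eq-as1-kappa}, which I substitute into the same prefactor bound as in part~(1).

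For bound \eqref{eq-2cpt} I would decompose the signal $t'\mapsto F_{t'}(z)$ as
\[
F_{t'}(z)=F_{\eta_k}(z)+\alpha_k\,\mathbf{1}\{t'>\eta_k\}+\alpha_{k+1}\,\mathbf{1}\{t'>\eta_{k+1}\},\qquad \alpha_j:=F_{\eta_j+1}(z)-F_{\eta_j}(z),
\]
so that $|\alpha_j|\le \kappa_j$. The map $f\mapsto \Delta_{s,e}^t(z)$ is linear in the signal $f$ and annihilates constants, since $F_{s:t}(z)-F_{(t+1):e}(z)$ does. Writing $L_j(t)$ for the CUSUM of the unit step at $\eta_j$, the triangle inequality gives
\[
|\Delta_{s,e}^t(z)|\le \kappa_k|L_k(t)|+\kappa_{k+1}|L_{k+1}(t)|.
\]
I would then bound $|L_k(t)|\le \sqrt{n_{s:\eta_k}}\le \sqrt{n_{\max}(\eta_k-s)}$ by anchoring on the left segment, and symmetrically $|L_{k+1}(t)|\le \sqrt{n_{(\eta_{k+1}+1):e}}\le \sqrt{n_{\max}(e-\eta_{k+1})}$ by anchoring on the right segment. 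Both estimates come from a short case split ($t\le \eta_j$ versus $t>\eta_j$) and the same $ab/(a+b)\le \min(a,b)$ inequality as in part~(1); in each case only one of $n_{s:\eta_j}$ or $n_{(\eta_j+1):e}$ needs to be retained. Taking the supremum over $z$ then yields \eqref{eq-2cpt}.

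The main obstacle is the bookkeeping for $|L_j(t)|$ in part~(3): one must carry out the $t\le\eta_j$ versus $t>\eta_j$ case analysis and deliberately choose which side of $\min(n_{s:\eta_j},n_{(\eta_j+1):e})$ to keep, so that the two change points contribute on \emph{disjoint} boundary segments $[s,\eta_k]$ and $[\eta_{k+1},e]$, matching the $\sqrt{\eta_k-s}$ and $\sqrt{e-\eta_{k+1}}$ on the right-hand side of \eqref{eq-2cpt} rather than the larger $\sqrt{e-\eta_k}$ or $\sqrt{\eta_{k+1}-s}$. Everything else reduces to elementary algebra.
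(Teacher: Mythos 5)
Your proposal is correct and follows essentially the same route as the paper: parts (1) and (2) are the same direct computations (the paper indeed uses the crude $|F_{s:t}-F_{(t+1):e}|\le 2$ for \eqref{eq-lem4-1}, so your observation that the factor $2$ is slack is right, and $\sqrt{s-t+1}$ is indeed a typo for $\sqrt{t-s+1}$). For \eqref{eq-2cpt} the paper merely writes ``follows similarly,'' so your explicit step-function decomposition $F_{t'}(z)=F_{\eta_k}(z)+\alpha_k\mathbf{1}\{t'>\eta_k\}+\alpha_{k+1}\mathbf{1}\{t'>\eta_{k+1}\}$ with the CUSUM's linearity and invariance under constants is a cleaner way to see it than the paper indicates; it also avoids any appeal to Lemma~\ref{lem2} (which would force you to evaluate $\Delta^t_{s,e}$ only at $t\in\{\eta_k,\eta_{k+1}\}$), since your bounds on $|L_j(t)|$ hold for all $t$, not just at change points. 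The small bookkeeping step you flag — choosing the $\min$ side so that $L_k$ is anchored on $[s,\eta_k]$ and $L_{k+1}$ on $[\eta_{k+1},e]$ — is exactly what makes the bound match \eqref{eq-2cpt}, and your case analysis ($t\le\eta_j$ vs.\ $t>\eta_j$) together with $n_{s:t}+n_{(\eta_j+1):e}\le n_{s:e}$ delivers it correctly.
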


\begin{proof}
	As for \eqref{eq-lem4-1}, it follows from that
		\[
			\Delta_{s,e}^b   \,\leq \,  2 \sqrt{  \frac{n_{s:b}   n_{(b+1):e} }{n_{s:e}}  }   \,\leq \,  2  \min\{  \sqrt{ n_{s:b} }  , \sqrt{ n_{(b+1):e} }    \} \,\leq \,  2\sqrt{n_{\max}} \,\min\{ \sqrt{s-b+1},  \sqrt{e-b}  \}.
		\]
		As for \eqref{eq-lem4-2}, it is due to that
		\begin{align*}
			\Delta_{s, e}^{\eta_k}  = \sqrt{\frac{n_{s:\eta_k}n_{(\eta_k + 1):e}}{n_{s:e}}} \sup_{z \in \mathbb{R}} \bigl|F_{s:\eta_k}(z) - F_{(\eta_k+1):e}\bigr|  \leq \kappa_k \sqrt{n_{\max}} \min\{\sqrt{s-\eta_k+1}, \, \sqrt{e-\eta_k}\}.
		\end{align*}
		Eq.~\eqref{eq-2cpt} follows similarly.
		
\end{proof}

\begin{lemma}
\label{lem:concetration}
Under Assumption~\ref{as1}, for any $1 \leq s < b < e \leq T$ and $z \in \mathbb{R}$, define 
		\[
			\Lambda_{s, e}^b(z) = D^b_{s, e}(z) - \Delta^b_{s, e}(z),
		\]
		where $D^b_{s, e}(z)$ and $\Delta^b_{s, e}(z)$ are the sample and population versions of the Kolmogorov--Smirnov statistic defined in \eqref{eq:ks_z} and \eqref{eq-pop-KS-z}, respectively.  It holds that
		\begin{align*}
		& \mathbb{P}\left\{\max_{1\leq s < b < e \leq T}\sup_{z \in \mathbb{R}}\left|\Lambda_{s, e}^b(z)\right| > \sqrt{\log\left(\frac{T^4}{12 \delta}\right) + \log(n_{1:T})} + 6\sqrt{\log(n_{1:T})} + \frac{48\log(n_{1:T})}{\sqrt{n_{1:T}}}\right\} \\
		\leq & \frac{12 \log(n_{1:T})}{T^3 n_{1:T}} + \frac{24 T}{n_{1:T} \log(n_{1:T}) \delta}.
		\end{align*}
		Moreover
		\begin{align}
			& \mathbb{P}\bigg\{\max_{1\leq s < e \leq T}\sup_{z \in \mathbb{R}}\left|\frac{1}{\sqrt{n_{s:e}}}\sum_{t = s}^e\sum_{i = 1}^{n_t} \left\{\mathbbm{1}_{Y_{t, i} \leq z} -  \mathbb{E}(\mathbbm{1}_{Y_{t, i} \leq z} ) \right\}\right| \nonumber \\
			& \hspace{3cm} > \sqrt{\log\left(\frac{T^4}{12 \delta}\right) + \log(n_{1:T})} + 6\sqrt{\log(n_{1:T})} + \frac{48\log(n_{1:T})}{\sqrt{n_{1:T}}} \bigg\} \nonumber \\
			\leq & \frac{12 \log(n_{1:T})}{T^3 n_{1:T}} + \frac{24 T}{n_{1:T} \log(n_{1:T}) \delta}. \label{eq-lem7-other}
		\end{align}
\end{lemma}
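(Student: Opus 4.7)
The plan is a standard three-step route: pointwise concentration, reduction of the supremum over $z\in\mathbb{R}$ to a finite maximum, and a union bound over the triples $(s,b,e)$, tuned so that the second probability term carries a factor $T/\delta$ rather than $T^{3}$.

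First, fix $(s,b,e,z)$. I would write
\[
\Lambda^{b}_{s,e}(z)=\sum_{k=s}^{b}\sum_{i=1}^{n_{k}}w^{+}_{k}\bigl(\mathbbm{1}_{\{Y_{k,i}\le z\}}-F_{k}(z)\bigr)+\sum_{k=b+1}^{e}\sum_{i=1}^{n_{k}}w^{-}_{k}\bigl(\mathbbm{1}_{\{Y_{k,i}\le z\}}-F_{k}(z)\bigr),
\]
with $w^{+}_{k}=\sqrt{n_{(b+1):e}/(n_{s:b}n_{s:e})}$ and $w^{-}_{k}=-\sqrt{n_{s:b}/(n_{(b+1):e}n_{s:e})}$. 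A direct computation shows $\sum_{k}n_{k}(w^{\pm}_{k})^{2}=1$, and each summand is a bounded, centered random variable of modulus at most $\max\{|w^{+}_{k}|,|w^{-}_{k}|\}\le 1/\sqrt{\min(n_{s:b},n_{(b+1):e})}$ with variance at most $(w^{\pm}_{k})^{2}/4$. Bernstein's inequality therefore gives
\[
\mathbb{P}\bigl(|\Lambda^{b}_{s,e}(z)|>t\bigr)\le 2\exp\!\left(-\frac{t^{2}/2}{\,1/4+\max_{k}|w^{\pm}_{k}|\,t/3\,}\right),
\]
and, inverting, the deviation is at most $\tfrac{1}{\sqrt 2}\sqrt{\log(2/\alpha)}$ plus a Bernstein correction of order $\log(2/\alpha)/\sqrt{n_{1:T}}$ with probability $\ge 1-\alpha$. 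The factor $48\log(n_{1:T})/\sqrt{n_{1:T}}$ in the stated bound is precisely this sub-exponential correction.

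Second, since $z\mapsto\mathbbm{1}_{\{Y_{k,i}\le z\}}$ is piecewise constant with jumps only at the $Y_{k,i}$, the map $z\mapsto\Lambda^{b}_{s,e}(z)$ is piecewise constant on intervals determined by the observed sample values, and moreover the population CDFs $F_{k}$ enter only through a finite linear combination inside each piece. Hence $\sup_{z\in\mathbb{R}}|\Lambda^{b}_{s,e}(z)|$ equals the maximum over at most $2n_{s:e}\le 2n_{1:T}$ candidate values of $z$, which costs only an extra $\sqrt{\log n_{1:T}}$ in the deviation bound after a union bound. This is what produces the $6\sqrt{\log n_{1:T}}$ term in the statement.

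Third, one union-bounds over the triples $1\le s<b<e\le T$. The clean way to get the sharp probability bound is to peel by interval length: split the triples into those with $e-s\ge \delta/2$ and those with $e-s<\delta/2$. For the long intervals, $\min(n_{s:b},n_{(b+1):e})$ is at least of order $n\delta$, so the Bernstein boundedness improves, and a crude union bound over at most $T^{3}$ triples together with the pointwise tail yields the first probability term $12\log(n_{1:T})/(T^{3}n_{1:T})$ at the desired deviation level. For the short intervals, although Bernstein is less sharp, the number of triples with $e-s<\delta/2$ is only of order $T\delta\cdot T=T^{2}\delta$, which, once combined with the same pointwise tail bound and the $n_{1:T}$ sample-point factor from the previous step, yields the second probability term $24T/(n_{1:T}\log(n_{1:T})\delta)$; reorganising the logarithmic term that this union bound produces gives the $\log(T^{4}/(12\delta))$ factor in the statement. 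The second display \eqref{eq-lem7-other} is proved identically, with the simpler constant weights $w_{k}=1/\sqrt{n_{s:e}}$ (still satisfying $\sum_{k}n_{k}w_{k}^{2}=1$) and a two-index rather than three-index union bound, so the same constants carry through.

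The main technical obstacle is the third step: tracking constants through the peeling so that the interplay between the Bernstein tail (which requires the sub-exponential correction $\log(n_{1:T})/\sqrt{n_{1:T}}$) and the differing sizes of the short- versus long-interval collections delivers exactly the $T^{4}/(12\delta)$ logarithm and the two additive terms in the failure probability. The pointwise concentration and the reduction of $\sup_{z}$ to a discrete max are routine; it is the bookkeeping in the length-peeling union bound that does the real work.
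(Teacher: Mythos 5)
Your framework (pointwise concentration, discretization of $z$, union bound over $(s,b,e)$) is broadly in the right spirit, and your identification of the weights $w_k^\pm$ with $\sum_k n_k (w_k^\pm)^2=1$ is correct. However, the key technical step — the reduction of $\sup_{z\in\mathbb{R}}$ to a finite maximum — is wrong as stated, and the accounting of where the three deviation terms and the two tail-probability terms come from is also off.

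You claim $\Lambda^b_{s,e}(z)$ is piecewise constant with jumps only at the sample points, so the supremum reduces to a maximum over $O(n_{1:T})$ data-dependent $z$-values to which you apply a union bound. Two problems. First, $\Lambda^b_{s,e}=D^b_{s,e}-\Delta^b_{s,e}$, and while $D^b_{s,e}(\cdot)$ is piecewise constant in $z$, $\Delta^b_{s,e}(\cdot)$ involves the population CDFs $F_k$ and is generally continuous, so $\Lambda^b_{s,e}(\cdot)$ is not piecewise constant; the supremum is only achieved in a limiting sense at jump points, and one must separately control the drift of $\Delta$ between candidate points. Second, and more fundamentally, the candidate $z$-values are the observations $Y_{k,i}$ themselves — exactly the random variables whose tails you are trying to control. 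A naive union bound over a data-dependent index set is not licensed by Hoeffding or Bernstein. The paper avoids this by constructing a \emph{deterministic} grid $\{s_j^k=F_k^{-1}(j/m)\}$, forming the common refinement $\mathcal{I}=\cap_k\mathcal{I}_k$ of the per-distribution partitions, and crucially observing that there are at most $T/\delta$ distinct distributions, so the refinement has at most $Tm/\delta$ pieces. On this deterministic grid Hoeffding plus a union bound over the $T^3\cdot Tm/\delta$ events gives the $\sqrt{\log(T^4/(12\delta))+\log n_{1:T}}$ term and the first tail $12\log(n_{1:T})/(T^3 n_{1:T})$. The oscillation of $D$ between consecutive grid points is then controlled by a multiplicative Chernoff bound on the cell counts $u_j$, which produces the $6\sqrt{\log n_{1:T}}$ term (from $\sqrt{3n_{1:T}/(2m)}$ with $m=n_{1:T}/(24\log n_{1:T})$) and the second tail $24T/(n_{1:T}\log(n_{1:T})\delta)$; the oscillation of the population part $\Delta$ between grid points, bounded by $2\sqrt{n_{1:T}}/m$, gives the $48\log(n_{1:T})/\sqrt{n_{1:T}}$ term.

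Consequently your proposed "peeling by interval length" (long intervals producing the first tail, short intervals the second) is not the mechanism in the paper, and you have not argued that it works: the two tail terms in the stated bound are not a dichotomy over $(s,b,e)$ but rather come from two logically distinct events (Hoeffding failure on the fixed grid versus an atypically dense cell in the data). Likewise, attributing the $48\log(n_{1:T})/\sqrt{n_{1:T}}$ term to a Bernstein correction and the $6\sqrt{\log n_{1:T}}$ term to a data-point union bound misdiagnoses both. The pointwise Hoeffding/Bernstein step is fine, but the heart of the lemma is the deterministic population-quantile discretization with the Chernoff control of cell occupancies, and your sketch does not contain that idea.
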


\begin{remark}
	\label{rem:1}
Lemma~\ref{lem:concetration} shows that as $T$ diverges unbounded, it holds that
	\[
		\max_{1\leq s < b < e \leq T}\sup_{z \in \mathbb{R}}\left|\Lambda_{s, e}^b(z)\right| = O_p\left(\sqrt{\log(n_{1:T})}\right).
	\]	
\end{remark}

\begin{proof}[Proof of Lemma~\ref{lem:concetration}]
	For any $1 \leq s < b < e \leq T$ and $z \in \mathbb{R}$, let
	\begin{align*}
   \sqrt{\frac{  n_{s:b}\,   n_{(b+1):e} }{n_{s:e}} }  \left[	\widehat{F}_{s:b}(z) - \widehat{F}_{(b+1):e}(z)  \right]& = \sum_{k=s}^b\sum_{i = 1}^{n_k} \sqrt{\frac{n_{(b+1):e}}{n_{s:b}n_{s:e}}} \mathbbm{1}_{\{Y_{k, i} \leq z\}} - \sum_{k=b+1}^e\sum_{i = 1}^{n_k} \sqrt{\frac{n_{s:b}}{n_{(b+1):e}n_{s:e}}} \mathbbm{1}_{\{Y_{k, i} \leq z\}} \\
	& = \sum_{k = s}^e w_{k} \sum_{i = 1}^{n_k} \mathbbm{1}_{\{Y_{k, i} \leq z\}},
	\end{align*}
	where 
	\begin{equation}\label{eq-lem7-pf-wk}
	w_{k} = \begin{cases}
	\sqrt{\frac{n_{(b+1):e}}{n_{s:b}n_{s:e}}}, & k = s, \ldots, b; \\
	- \sqrt{\frac{n_{s:b}}{n_{(b+1):e}n_{s:e}}}, & k = b+1, \ldots, e.
	\end{cases}
	\end{equation}	
	Therefore, we have
	\begin{align*}
	&  \sqrt{\frac{  n_{s:b}\,   n_{(b+1):e} }{n_{s:e}} }\left[F_{s:b}(z) - F_{(b+1):e}(z)\right] = \sum_{k = s}^e w_{k} \sum_{i = 1}^{n_k}  \mathbb{E}\bigl(\mathbbm{1}_{\{Y_{k, i} \leq z\}}\bigr), \\
	& D^b_{s, e} = \sup_{z \in \mathbb{R}}\left|\sum_{k = s}^e w_{k}\sum_{i = 1}^{n_k}  \mathbbm{1}_{\{Y_{k, i} \leq z\}}\right|, \quad \mbox{and}\quad \Delta^b_{s, e} = \sup_{z \in \mathbb{R}}\left|\sum_{k = s}^e w_{k} \sum_{i = 1}^{n_k}  \mathbb{E}\bigl(\mathbbm{1}_{\{Y_{k, i} \leq z\}}\bigr)\right|.
	\end{align*}
	
	Since
	\begin{align*}
	D^b_{s, e} & = \sup_{z \in \mathbb{R}}\left|\sum_{k = s}^e  w_{k}\sum_{i = 1}^{n_k}\bigl\{ \mathbb{E}\bigl(\mathbbm{1}_{\{Y_{k, i} \leq z\}}\bigr) + \mathbbm{1}_{\{Y_{k, i} \leq z\}} - \mathbb{E}\bigl(\mathbbm{1}_{\{Y_{k, i} \leq z\}}\bigr)\bigr\}\right| \\
	& \leq \sup_{z \in \mathbb{R}}\left|\sum_{k = s}^e w_{k} \sum_{i = 1}^{n_k}\mathbb{E}\bigl(\mathbbm{1}_{\{Y_{k, i} \leq z\}}\bigr)\right| + \sup_{z\in \mathbb{R}}\left|\sum_{k = s}^e w_k \sum_{i = 1}^{n_k} \bigl\{\mathbbm{1}_{\{Y_{k, i} \leq z\}} - \mathbb{E}\bigl(\mathbbm{1}_{\{Y_{k, i} \leq z\}}\bigr)\bigr\}\right| \\
	& = \Delta^b_{s, e}  + \sup_{z\in \mathbb{R}}\left|\sum_{k = s}^e w_k \sum_{i = 1}^{n_k} \bigl\{\mathbbm{1}_{\{Y_{k, i} \leq z\}} - \mathbb{E}\bigl(\mathbbm{1}_{\{Y_{k, i} \leq z\}}\bigr)\bigr\}\right|,
	\end{align*}
	we have
	\begin{equation}\label{eq-lemma1-1}
	\bigl|D^b_{s, e} - \Delta^b_{s, e}\bigr| \leq \sup_{z\in \mathbb{R}}\left|\sum_{k = s}^e w_{k} \sum_{i = 1}^{n_k}\bigl\{\mathbbm{1}_{\{Y_{k, i} \leq z\}} - \mathbb{E}\bigl(\mathbbm{1}_{\{Y_{k, i} \leq z\}}\bigr)\bigr\}\right|.
	\end{equation}
	
	Next  for $z \in \mathbb{R}$ define
	\[
	\Lambda_{s,e}^b(z) = \sum_{k = s}^e w_k \sum_{i = 1}^{n_k} \bigl\{\mathbbm{1}_{\{Y_{k, i} \leq z\}} - \mathbb{E}\bigl(\mathbbm{1}_{\{Y_{k, i} \leq z\}}\bigr)\bigr\},
	\]
	and let $\{s^k_1, \ldots, s^k_{m-1}\} \subset \mathbb{R}$ satisfy  
	\[
	s_j^k = F^{-1}_k(j/m),
	\]
	where $m$ is a positive integer to be specified.  Let $I^k_1 = (-\infty, s^k_1]$, $I^k_j = (s^k_{j-1}, s^k_j]$, $j = 2, \ldots, m-1$, and $I^k_m = (s_{m-1}^k, \infty)$.  With this notation, for any $k \in \{1, \ldots, n\}$, we get a partition of $\mathbb{R}$, namely $\mathcal{I}_k = \{I^k_1, \ldots, I^k_m\}$.  Let $\mathcal{I}  = \cap_{k = 1}^T \mathcal{I}_k = \{ I_1, \ldots, I_M   \}$.  Note that there are at most $T/\delta$ distinct $\mathcal{I}_k$'s, and therefore $M \leq Tm/\delta$.
	
	Let also  $z_j $  be an interior point of  $I_j$  for all $j \in \{1,\ldots,M\}$.  Then
	\begin{equation}\label{eq-Lambda-decomp}
	\underset{z \in \mathbb{R}}{\sup} \vert \Lambda_{s,e}^b(z)  \vert \leq \underset{ j = 1,\ldots, M}{\max } \left[   \vert  	\Lambda_{s,e}^b(z_j)  \vert   \,+\,  \underset{z \in I_j}{\sup} \vert 	\Lambda_{s,e}^b(z_j) -  	\Lambda_{s,e}^b(z)    \vert  \right].
	\end{equation}
	By the Hoeffding's inequality and a union bound argument, we have for any $\varepsilon > 0$
	\begin{equation}\label{eq-Lambda-1st}
		\mathbb{P}\left\{\underset{1 \leq s < b < e \leq T  }{\max } \,\, \underset{ j = 1,\ldots, M }{\max } \,\,  \vert  	\Lambda_{s,e}^b(z_j)  \vert > \varepsilon \right\} \leq \frac{2T^4 m}{\delta}\exp\bigl(-2\varepsilon^2\bigr),
	\end{equation}
	since 
	\[
	\displaystyle \sum_{k=s}^{e} \sum_{i=1}^{n_k}   w_k^2 \,=\, 1.
	\]
	
	On the other hand,  for $j \in \{ 1,\ldots, M\}$ let  $z \in I_j$  and without loss of generality let us assume that $z_j < z  $.   Let 
	\[
	u_j \,=\,  \left\vert  \{ (i,k) \,:\, k \in \{1,\ldots,T\},i\in \{1,\ldots,n_k\},  \text{and}\,   y_{k,i} \in I_j  \}  \right\vert.
	\]
	Let us also  write   $ r(t) $ if  $\eta_{r(t)-1} +1 \leq  t \leq \eta_{ r(t)}$  for  $t \in \{1,\ldots,T\}$,  and for $I_j  \in \mathcal{I}$  let $q(j,k)$ be such that $I_j \subset I^{r(k)}_{ q(j,k) } $. With this notation  set 
	\[
	   v_j \,=\, \left\vert  \{ (i,k) \,:\, k \in \{1,\ldots,T\},i\in \{1,\ldots,n_k\},  \text{and}\,   y_{k,i} \in I^{r(k)}_{ v(j,k) }  \}  \right\vert.
	\]
	Clearly, $u_j \leq  v_j$  and  $\mathbb{E}(v_j) \,=\,  n_{1:T}/m$.

	We have,
	\begin{equation}
	\label{eqn:approx}
	\begin{array}{lll}
	\vert 	\Lambda_{s,e}^b(z_j) -  	\Lambda_{s,e}^b(z)    \vert    & \leq& \displaystyle  \left\vert \sum_{k=s}^e \sum_{i=1}^{n_k}     w_{k}\bigl\{\mathbbm{1}_{\{y_{k, i} \leq z_j\}} - \mathbbm{1}_{\{y_{k, i} \leq z\}} \bigr\}   \right\vert \,+\, \left\vert \sum_{k=s}^e \sum_{i=1}^{n_k}     w_{k}\bigl\{F_k(z_j) - F_k(z)  \}   \right\vert\\
	& \leq &  \displaystyle  \left\vert \sum_{k=s}^e \sum_{i=1}^{n_k}     \mathbbm{1}_{  \{z_j <  y_{k, i} \leq z\}}  \right\vert^{  \frac{1}{2} } \,\, \,+\,\,  \left[  \sum_{k=s}^e \sum_{i=1}^{n_k}   \vert  w_{k}\vert \right]\, \underset{k=s,\ldots,e}{\max} \vert F_k(z) - F_k(z_j)\vert\\
	& \leq&  \underset{1\leq  j\leq M }{\max}  \,\,u_j^{ \frac{1}{2} }  \,+\,   \frac{2}{m}\sqrt{\frac{n_{(b+1):e} n_{s:b}}{n_{s:e}}}\\
	& \leq&  \underset{1\leq  j\leq M}{\max}  \,\,u_j^{ \frac{1}{2} }  \,+\,  \frac{2\sqrt{n_{1:T}}}{m}.
	\end{array}
	\end{equation}
	
	However, from  the multiplicative Chernoff bound 
	\begin{equation}
	\label{eqn:binom_concentration}
	\begin{array}{lll}
	\mathbb{P}\left(  \underset{1\leq  j\leq M}{\max}  \,\,u_j \geq \frac{3n_{1:T}}{2m} \right)&\leq &  \displaystyle  M \mathbb{P}\left( \,\,u_j \geq  \frac{3n_{1:T}}{2m} \right)\\
	 & \leq  &   \displaystyle  M \mathbb{P}\left( \,\,v_j \geq  \frac{3n_{1:T}}{2m} \right)\\
	& < &\displaystyle \frac{Tm}{\delta}  \exp\left( -\frac{n_{1:T}}{12 m}  \right)\\
	&  \leq  & \exp\left( -\frac{n_{1:T}}{12 m} + \log(T) + \log(m) - \log(\delta)  \right).
	\end{array}
	\end{equation}

Combining \eqref{eq-Lambda-decomp}, \eqref{eq-Lambda-1st}, \eqref{eqn:approx} and \eqref{eqn:binom_concentration}, we have
\begin{align}
& \mathbb{P}\left\{\max_{1\leq s < b < e \leq T}\sup_{z \in \mathbb{R}}\left|\Lambda_{s, e}^b(z)\right| > \epsilon + \sqrt{\frac{3n_{1:T}}{2m}} + \frac{2\sqrt{n_{1:T}}}{m}\right\} \nonumber\\
\leq & \frac{2T^4m}{\delta}\exp(-2\varepsilon^2)  + \exp\left( -\frac{n_{1:T}}{12 m} + \log(T) + \log(m) - \log(\delta)  \right).\label{eq-combine}
\end{align}

Choosing
\[
	\varepsilon = \sqrt{\log\left(\frac{2T^4 m}{\delta}\right)}, \quad \mbox{and }\quad m = \frac{n_{1:T}}{24 \log(n_{1:T})},
\]
\eqref{eq-combine} results in
\begin{align*}
	& \mathbb{P}\left\{\max_{1\leq s < b < e \leq T}\sup_{z \in \mathbb{R}}\left|\Lambda_{s, e}^b(z)\right| > \sqrt{\log\left(\frac{T^4}{12 \delta}\right) + \log(n_{1:T})} + 6\sqrt{\log(n_{1:T})} + \frac{48\log(n_{1:T})}{\sqrt{n_{1:T}}}\right\} \\
	\leq & \frac{12 \log(n_{1:T})}{T^3 n_{1:T}} + \frac{24 T}{n_{1:T} \log(n_{1:T}) \delta}.
\end{align*}

As for the result \eqref{eq-lem7-other}, we only need to change \eqref{eq-lem7-pf-wk} to $w_k = (n_{s:e})^{-1/2}$.
\end{proof}

\begin{lemma}
\label{lem3}
Under Assumption~\ref{as1}, let $1 \leq s < \eta_k < e \leq T$ be any interval satisfying 
	\[
	   \min\{\eta_k - s, \, e - \eta_k\} \geq c_1 \delta,
	\]
	with $c_1 > 0$.  Then we have that 
	\[
	\underset{t = s + 1, \ldots, e - 1}{\max} \, \Delta_{s, e}^{t} \geq  \frac{c_1 \kappa  \delta n_{\min} }{2   \sqrt{(e - s)n_{\max}}} \geq \frac{c_1 \kappa  \delta n_{\min}^{3/2}  }{2   \sqrt{(e - s)}n_{\max}}.
	\]
\end{lemma}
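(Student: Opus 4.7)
The plan is to attach the Kolmogorov--Smirnov CUSUM to a clever scalar test function, thereby reducing the bound to a one-dimensional CUSUM bound for a piecewise-constant signal. Concretely, I would choose $z_0 \in \mathbb{R}$ with $|F_{\eta_k}(z_0) - F_{\eta_k+1}(z_0)| = \kappa_k \geq \kappa$ (the supremum in the definition of $\kappa_k$ is attained, or approached to within an arbitrarily small error), and work with the scalar signal $h_r = F_r(z_0)$ for $r \in [s,e]$. Since $\Delta_{s,e}^t \geq |\Delta_{s,e}^t(z_0)|$ for every $t$, it suffices to lower bound $\max_{s<t<e}|\Delta_{s,e}^t(z_0)|$.

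The prefactor in
\[
\Delta_{s,e}^{\eta_k}(z_0) = \sqrt{\tfrac{n_{s:\eta_k}\, n_{(\eta_k+1):e}}{n_{s:e}}}\bigl[F_{s:\eta_k}(z_0) - F_{(\eta_k+1):e}(z_0)\bigr]
\]
is handled by routine algebra: the spacing hypothesis forces $n_{s:\eta_k},\, n_{(\eta_k+1):e} \geq c_1\delta\, n_{\min}$, while $n_{s:e} \leq (e-s)\, n_{\max}$, so
\[
\sqrt{\tfrac{n_{s:\eta_k}\, n_{(\eta_k+1):e}}{n_{s:e}}} \;\geq\; \tfrac{c_1\,\delta\, n_{\min}}{\sqrt{(e-s)\, n_{\max}}}.
\]
When $\eta_k$ is the only change point inside $(s,e)$ (that is, $\eta_{k-1}\leq s$ and $\eta_{k+1}\geq e$), the piecewise-constant structure imposed by Assumption~\ref{as1} collapses both mixtures: $F_{s:\eta_k}(z_0) = F_{\eta_k}(z_0)$ and $F_{(\eta_k+1):e}(z_0) = F_{\eta_k+1}(z_0)$, so the bracket equals $\kappa_k \geq \kappa$ and multiplying by the prefactor completes the argument. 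The second stated inequality is then immediate from $n_{\min}\leq n_{\max}$, which gives $n_{\min}/\sqrt{n_{\max}} \geq n_{\min}^{3/2}/n_{\max}$.

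For the general multi-change-point case, evaluating at $t=\eta_k$ alone need not produce a large value because interior segments can cancel the jump at $z_0$. The plan is to invoke the Venkatraman-style monotonicity already established in the scalar version of Lemma~\ref{lem2}, applied to the deterministic piecewise-constant sequence $h_r$: the map $t \mapsto |\Delta_{s,e}^t(z_0)|$ is piecewise monotone between consecutive change points of $h$, so its maximum over $t \in (s,e)$ is attained at one of these change points. A telescoping decomposition of $h_r$ across the change points inside $(s,e)$ then lets me track the contribution of the $\kappa_k$-sized jump at $\eta_k$ through the CUSUM and conclude that the maximum inherits a lower bound of order $\kappa$ times the prefactor.

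The main obstacle will be exactly this last step: converting a guaranteed \emph{local} jump of magnitude $\kappa_k$ at $\eta_k$ (in the $z_0$-slice of the CDFs) into a \emph{global} CUSUM lower bound, uniformly over all configurations of other change points in $(s,e)$ whose jump directions at $z_0$ are unknown and can have either sign. The minimum-spacing hypothesis is what prevents adjacent jumps from fully cancelling, and the $c_1\delta$ buffer between $\eta_k$ and the endpoints $s,e$ is what keeps the prefactor from collapsing; the delicate combinatorics is in packaging these two facts so that the absolute constant in front of $\kappa$ is exactly $c_1/2$ as stated.
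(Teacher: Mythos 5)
Your prefactor bound and your treatment of the single–change–point case are correct, and indeed give the stated bound directly (even without the loss of a factor of $2$); that part lines up with the paper. The genuine gap is exactly where you flag it, namely the multi–change–point case, and the route you sketch (Venkatraman monotonicity plus a telescoping over all jumps inside $(s,e)$) is both harder than necessary and unfinished: tracking the signs of every other jump at $z_0$ is precisely what makes the problem look intractable, and you never say how the cancellation is controlled. It is worth pointing out that the difficulty is real, in the sense that the quantity the paper's own proof asserts to be large, $\widetilde F_{\eta_k}(z_0) := F_{\eta_k}(z_0) - n_{s:e}^{-1}\sum_{r=s}^e n_r F_r(z_0) > \kappa/2$, is in fact \emph{not} guaranteed, even with a single change point (it can be arbitrarily small if $e-\eta_k$ is much shorter than $\eta_k - s$). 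So the lemma needs an argument that does not pin down any single $\widetilde F_{\eta_j}(z_0)$.

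The idea you are missing is to center at $z_0$ and then compare two adjacent partial sums rather than evaluate one CUSUM. Write $\Delta_{s,e}^t(z_0) = \sqrt{n_{s:e}/(n_{s:t}n_{(t+1):e})}\,P(t)$ with $P(t) = \sum_{l=s}^t n_l \widetilde F_l(z_0)$. Because the CDFs are constant on the segments $(\eta_{k-1},\eta_k]$ and $(\eta_k,\eta_{k+1}]$ and the spacing hypothesis gives room on both sides of $\eta_k$, the increments $P(\eta_k)-P(\eta_k - \lfloor c_1\delta\rfloor) = n_{(\eta_k - \lfloor c_1\delta\rfloor+1):\eta_k}\,\widetilde F_{\eta_k}(z_0)$ and $P(\eta_k+\lfloor c_1\delta\rfloor)-P(\eta_k) = n_{(\eta_k+1):(\eta_k+\lfloor c_1\delta\rfloor)}\,\widetilde F_{\eta_k+1}(z_0)$ involve only the two segments adjoining $\eta_k$ and are completely insensitive to what any other change point does at $z_0$. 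The centering constant cancels in $\widetilde F_{\eta_k}(z_0) - \widetilde F_{\eta_k+1}(z_0) = F_{\eta_k}(z_0) - F_{\eta_k+1}(z_0)$, which has magnitude $\kappa_k \geq \kappa$; hence $\max\{|\widetilde F_{\eta_k}(z_0)|, |\widetilde F_{\eta_k+1}(z_0)|\} \geq \kappa/2$, and therefore one of the three partial sums $P(\eta_k-\lfloor c_1\delta\rfloor)$, $P(\eta_k)$, $P(\eta_k+\lfloor c_1\delta\rfloor)$ has magnitude at least $c_1\delta n_{\min}\kappa/4$. Combined with the uniform prefactor bound $\sqrt{n_{s:e}/(n_{s:t}n_{(t+1):e})} \geq 2/\sqrt{n_{s:e}} \geq 2/\sqrt{(e-s)n_{\max}}$, this gives exactly $\max_t\Delta_{s,e}^t \geq c_1\kappa\delta n_{\min}/\bigl(2\sqrt{(e-s)n_{\max}}\bigr)$ with the stated constant. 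No Venkatraman monotonicity and no global telescoping over the unrelated jumps is needed.
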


\begin{proof}
Let
	\[
		z_0 \in \underset{z\in \mathbb{R}}{\argmax} \,\,\vert  F_{\eta_{k}}(z)  -  F_{\eta_{k+1}}(z)   \vert.
	\]
	Without loss of generality, assume that $F_{\eta_k}(z_0) > F_{\eta_{k+1}}(z_0)$.  For $s < t < e$, note that 
	\begin{align*}
	\Delta_{s, e}^{t}(z_0)  = \sqrt{\frac{n_{s:e}n_{s:t}}{n_{(t+1):e}}} \left\{\frac{1}{n_{s:t}} \sum_{l = s}^t n_l F_l(z_0) - \frac{1}{n_{s:e}} \sum_{l=s}^e n_l F_l(z_0)\right\}  = \sqrt{\frac{n_{s:e}}{n_{s:t}n_{(t+1):e}}} \sum_{l = s}^t n_l \widetilde{F}_l(z_0),
	\end{align*}
	where $\widetilde{F}_l(z_0) = F_l(z_0) - (n_{s:e})^{-1}\sum_{l = s}^e n_l F_l(z_0)$.
	
Due to Assumption~\ref{as1}, it holds that $\widetilde{F}_{\eta_k}(z_0) > \kappa/2$.  Therefore
	\[
		\sum_{l = s}^{\eta_k} n_l \widetilde{F}_l(z_0) \geq (c_1/2) \kappa n_{\min} \delta, \quad \mbox{and}\quad \sqrt{\frac{n_{s:e}}{n_{s:t}n_{(t+1):e}}} \geq   \frac{1}{\sqrt{(e-s) n_{\max}}}  \geq     \frac{\sqrt{n_{\min}}}{\sqrt{(e-s)}n_{\max}}.
	\]
	Then
	\[
	\underset{t = s + 1, \ldots, e - 1}{\max} \,\Delta_{s, e}^{t}\geq   \frac{1}{2 \sqrt{(e-s) n_{\max}}}    \geq   \frac{c_1 \kappa  \delta n_{\min}^{3/2}  }{2   \sqrt{(e - s)}n_{\max}}.
	\]
\end{proof}

Note that in the following lemma, the condition \eqref{eqn:lower_bound} follows from Lemma~\ref{lem3}, and \eqref{eqn:upper_bound} follows from Lemma~\ref{lem:concetration}.  

\begin{lemma}
\label{lem:local}
Let $z_0 \in \mathbb{R}$, $(s, e) \subset (1, T)$.  Suppose that there exits a true change point $\eta_k \in (s, e)$ such that 
	\begin{equation} \label{eqn:interval_lb}
		\min\{\eta_k - s, \, e - \eta_k\} \geq c_1 \delta,
	\end{equation}
	and 
	\begin{equation} \label{eqn:lower_bound}
		\Delta_{s, e}^{\eta_{k}} (z_0) \geq (c_1/2)  \frac{n^{3/2}_{\min}}{ n_{\max}} \frac{\kappa \delta }{ \sqrt{e - s}},
	\end{equation}
	where $c_1 > 0$ is a sufficiently small constant.  In addition, assume that
	\begin{equation} \label{eqn:upper_bound} 
		\max_{s < t < e}  \vert \Delta_{s, e}^{t} (z_0)\vert -  \Delta_{s, e}^{\eta_k} (z_0) \leq 3\log\left(\frac{T^4}{\delta}\right) + 3\log\bigl(n_{1:T}\bigr) \leq \frac{\kappa \delta^4 n_{\min}^5}{(e-s)^{7/2} n_{\max}^{9/2}}.
	\end{equation}

Then there exists $d \in (s, e)$ satisfying
	\begin{equation} \label{eqn:distance}
		\vert d -  \eta_k\vert \leq  \frac{c_1\delta   n_{\min}^2}{32 n_{\max}^2},
	\end{equation} and
	\[
		\Delta_{s, e}^{\eta_k} (z_0) -  \Delta_{s, e}^{d}(z_0) > c  |d - \eta_k| \delta \frac{n_{\min}^2}{n_{\max}^2} \Delta^{\eta_k}_{s, e}(z_0)(e-s)^{-2},
	\]
	where $c > 0$ is a sufficiently small constant.
\end{lemma}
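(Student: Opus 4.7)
The proof will follow a local analysis of the population CUSUM $t\mapsto \Delta_{s,e}^t(z_0)$ around $\eta_k$, in the spirit of \cite{venkatraman1992consistency}. My plan is to first pass to the ``index scale'' by defining $\{r_l(z_0)\}_{l=1}^{n_{s:e}}$ as in the proof of Lemma~\ref{lem2} (i.e. $r_l(z_0)=F_t(z_0)$ for $l\in(n_{s:(t-1)},n_{s:t}]$); then $\Delta_{s,e}^t(z_0)$ coincides, up to sign, with the standard scalar CUSUM $\tilde r^{\,n_{s:t}}_{1,n_{s:e}}(z_0)$ of the sequence $\{r_l\}$, whose change points sit at the indices $l_j=n_{s:\eta_j}$. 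On each segment between two consecutive such indices, $l\mapsto \tilde r^l_{1,n_{s:e}}(z_0)$ admits the explicit closed form $A\sqrt{(n_{s:e}-l)/l}+B\sqrt{l/(n_{s:e}-l)}$ for segment-dependent constants $A,B$, and so, by Lemma~2.2 of \cite{venkatraman1992consistency}, is either monotone or decreasing-then-increasing on each such piece.

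The second step is to show that on at least one side of $\eta_k$ the CUSUM is strictly monotonically decreasing throughout the entire adjacent inter-change-point segment (say $[\eta_k,\eta_{k+1}]$, of length $\ge \delta$). By Lemma~\ref{lem2}, on that segment only the two qualitative shapes above are possible. If the CUSUM were non-monotone on both sides of $\eta_k$, it would re-climb near the endpoints $\eta_{k\pm1}$; quantifying the minimum depth of such a dip via the Venkatraman closed form together with the lower bound \eqref{eqn:lower_bound} on $\Delta_{s,e}^{\eta_k}(z_0)$, this would force $\max_t|\Delta_{s,e}^t(z_0)|-\Delta_{s,e}^{\eta_k}(z_0)$ to exceed the $O(\log(T^4/\delta)+\log(n_{1:T}))$ slack allowed by \eqref{eqn:upper_bound} — which is precisely the role of the quantitative upper bound stated in \eqref{eqn:upper_bound}. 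Fix the monotone side; for concreteness, say $[\eta_k,\eta_{k+1}]$.

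The third step is a direct Taylor expansion of the closed form for $\tilde r^l_{1,n_{s:e}}(z_0)$ at $l=l_k$. Along the monotone segment one obtains
\[
   \tilde r^{l_k}_{1,n_{s:e}}(z_0)-\tilde r^{l_k+m}_{1,n_{s:e}}(z_0)\;\ge\;c\,m\,\tilde r^{l_k}_{1,n_{s:e}}(z_0)\,\frac{n_{s:e}}{l_k(n_{s:e}-l_k)}
\]
uniformly for $m$ up to a constant fraction of $\min\{l_k,n_{s:e}-l_k\}$. Using \eqref{eqn:interval_lb} in the form $\min\{l_k,n_{s:e}-l_k\}\ge c_1 n_{\min}\delta$, the elementary bound $l_k(n_{s:e}-l_k)\le n_{s:e}\cdot n_{\max}(e-s)$, and the translation $m=n_{s:d}-n_{s:\eta_k}\in[n_{\min}|d-\eta_k|,n_{\max}|d-\eta_k|]$, the assumption $|d-\eta_k|\le c_1\delta n_{\min}^2/(32n_{\max}^2)$ keeps $m$ safely within the linearization range. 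Substituting yields
\[
   \Delta_{s,e}^{\eta_k}(z_0)-\Delta_{s,e}^{d}(z_0)\;\ge\;c\,|d-\eta_k|\,n_{\min}\,\frac{\Delta_{s,e}^{\eta_k}(z_0)}{n_{\max}(e-s)}\;\ge\;c\,|d-\eta_k|\,\delta\,\frac{n_{\min}^2}{n_{\max}^2}\,\frac{\Delta_{s,e}^{\eta_k}(z_0)}{(e-s)^2},
\]
the last inequality using $e-s\ge\delta$ and $n_{\max}\ge n_{\min}$, which is the bound claimed.

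The main obstacle will be the second step: the ``decreasing-then-increasing'' alternative of Lemma~\ref{lem2} leaves open the possibility that the CUSUM dips below $\Delta_{s,e}^{\eta_k}(z_0)$ on \emph{both} sides of $\eta_k$, and ruling this out demands a tight quantitative lower bound on the depth of any such dip (via the explicit Venkatraman formulas) in order to match it against the fairly loose slack $O(\log(T^4/\delta)+\log(n_{1:T}))$ permitted by \eqref{eqn:upper_bound}. The Taylor-type computation in the third step is also tedious — one has to extract the factor $n_{s:e}/(l_k(n_{s:e}-l_k))$ cleanly from the nonlinear $\sqrt{\,\cdot\,}$ terms and control the remainder on the full range $|m|\lesssim \min\{l_k,n_{s:e}-l_k\}$ — but is essentially mechanical once the monotone segment has been identified.
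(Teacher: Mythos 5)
Your plan is structurally different from the paper's: the paper does not try to establish one-sided monotonicity of $t\mapsto\Delta^t_{s,e}(z_0)$ near $\eta_k$. It fixes WLOG $d\ge\eta_k$ (the caller, Lemma~\ref{lem:rate}, supplies the locally-decreasing direction at the empirical maximizer) and directly bounds $E_l=\Delta^{\eta_k}_{s,e}(z_0)-\Delta^{\eta_k+l}_{s,e}(z_0)$ via the decomposition $E_l=E_{1l}(1+E_{2l})+E_{3l}$ from Lemma~2.6 of \cite{venkatraman1992consistency}, splitting into Case~(i) $\eta_{k+1}>e$ and Case~(ii) $\eta_{k+1}\le e$. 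The gap in your plan is concentrated in step~3, and step~2 misdiagnoses the role of \eqref{eqn:upper_bound}.

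The bound you claim in step~3,
\[
\tilde{r}^{l_k}_{1,N}(z_0)-\tilde{r}^{l_k+m}_{1,N}(z_0)\;\ge\;c\,m\,\tilde{r}^{l_k}_{1,N}(z_0)\,\frac{N}{l_k(N-l_k)},\qquad N=n_{s:e},\; l_k=n_{s:\eta_k},
\]
does not follow from monotonicity alone. On the constant segment immediately to the right of $l_k$, $\tilde{r}^l_{1,N}=\sqrt{N/(l(N-l))}\,[A(N-l)+Bl]$ for segment constants $A,B$, and a short computation gives
\[
\frac{-\,d\tilde{r}^l_{1,N}/dl\,\big|_{l=l_k}}{\tilde{r}^{l_k}_{1,N}}=\frac{N}{2l_k(N-l_k)}\cdot\frac{A(N-l_k)-Bl_k}{A(N-l_k)+Bl_k}.
\]
Your inequality needs the second factor bounded below by a constant, which is automatic only when $B=0$ --- the paper's Case~(i), $\eta_{k+1}>e$, where the signal right of $\eta_k$ is constant up to $e$. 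In Case~(ii), $B$ encodes the tilt induced by the change points beyond $\eta_k$, and the segment can be monotone decreasing while $Bl_k$ is arbitrarily close to $A(N-l_k)$, so the slope at $l_k$ can vanish. Ruling this out is exactly where \eqref{eqn:upper_bound} must enter \emph{quantitatively}: in the paper, $E_{3l}$ carries the $B$-effect through $b=\Delta^{\eta_k+h}_{s,e}(z_0)-\Delta^{\eta_k}_{s,e}(z_0)$, which is bounded in magnitude via \eqref{eqn:upper_bound} and shown, using \eqref{eqn:lower_bound}, to be at most half of the main term $E_{1l}$. In your plan \eqref{eqn:upper_bound} is spent only in step~2, so there is nothing left to control the tilt inside the Taylor step where it is actually needed.

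Step~2 is also aimed at the wrong target. ``Decreasing-then-increasing on both sides'' is not the dangerous shape --- it still gives a local decrease near $\eta_k$. What would have to be ruled out is a monotone rise on one side (so that $\eta_k$ is a local minimum of $\Delta^\cdot_{s,e}(z_0)$ among change points), and \eqref{eqn:upper_bound} only caps how far the CUSUM can rise above $\Delta^{\eta_k}_{s,e}(z_0)$; it does not forbid a slow monotone rise to $\eta_{k\pm1}$. The paper sidesteps all of this by never asserting monotonicity and instead proving the explicit drop via $E_{1l}$, $E_{2l}$, $E_{3l}$.
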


\begin{proof}
Let us assume without loss of generality that $d \geq  \eta_k$.  Following the argument of Lemma~2.6 in \cite{venkatraman1992consistency}, it suffices to consider two cases: (i) $\eta_{k + 1} > e$ and (ii) $\eta_{k + 1} \leq e$.\\
	
\begin{figure}
\begin{center}
\begin{tikzpicture}
\draw[Fuchsia!100, thick]  (0, 0) -- (10, 0);

\filldraw [Fuchsia!50] (1,0) circle (2pt);	
\draw (1, -0.1) node[below] {$s$};
\filldraw [Fuchsia!50] (3,0) circle (2pt);	
\draw (3, -0.1) node[below] {$\eta_k$};
\filldraw [Fuchsia!50] (5,0) circle (2pt);	
\draw (5, -0.1) node[below] {$d$};
\filldraw [Fuchsia!50] (7,0) circle (2pt);	
\draw (7, -0.1) node[below] {$e$};
\filldraw [Fuchsia!50] (9,0) circle (2pt);	
\draw (9, -0.1) node[below] {$\eta_{k+1}$};

\draw [decorate,decoration={brace,amplitude=10pt},xshift=0pt,yshift=0pt]
(3.2,0.2) -- (5, 0.2) node [black,midway,xshift=0cm, yshift = 0.6cm] {$N_3$};

\draw [decorate,decoration={brace,amplitude=10pt, mirror},xshift=0pt,yshift=0pt]
(1, -0.6) -- (3, -0.6) node [black,midway,xshift=0cm, yshift = -0.6cm] {$N_1$};
\draw [decorate,decoration={brace,amplitude=10pt, mirror},xshift=0pt,yshift=0pt]
(3.2, -0.6) -- (7, -0.6) node [black,midway,xshift=0cm, yshift = -0.6cm] {$N_2$};

\end{tikzpicture}
\end{center}
\caption{Illustrations of Case (i) in the proof of Lemma~\ref{lem:local}. \label{fig-case1}}
\end{figure}

\noindent \textbf{Case (i) $\eta_{k+1} > e$.}  Notice that 
	\[
		\Delta_{s, e}^{\eta_k} (z_0) = \sqrt{\frac{N_1 N_2}{N_1 + N_2}}\bigl\{F_{\eta_k}(z_0) - F_{\eta_{k+1}}(z_0)\bigr\}
	\]
	and
	\[
		\Delta_{s, e}^{d} (z_0) = N_1 \sqrt{\frac{N_2 - N_3}{(N_1 + N_3)(N_1 + N_2)}}\bigl\{F_{\eta_k}(z_0) - F_{\eta_{k+1}}(z_0)\bigr\},
	\]
	where 
	$N_1 = n_{s: \eta_k}$, $N_2 = n_{(\eta_k+1):e}$ and $N_3 = n_{(\eta_k+1): d}$.  Therefore, due to \eqref{eqn:interval_lb}, we have	
	\begin{align}
		E_l & = \Delta_{s, e}^{\eta_k} (z_0) - \Delta_{s, e}^{d} (z_0) = \left(1 - \sqrt{\frac{N_1(N_2 - N_3)}{N_2(N_1 + N_3)}}\right) \Delta^{\eta_k}_{s, e}(z_0) \nonumber \\
		& = \frac{N_1 + N_2}{\sqrt{N_2(N_1 + N_3)}{\bigl(\sqrt{N_2(N_1+N_3)} + \sqrt{N_1(N_2 - N_3)}\bigr)}} N_3 \Delta^{\eta_k}_{s, e}(z_0) \nonumber \\
		& \geq c_1 \frac{n_{\min}^2}{n_{\max}^2} |d - \eta_k| \delta \Delta^{\eta_k}_{s, e}(z_0) (e-s)^{-2}. \label{eq-EL1}
	\end{align}

\begin{figure}
\begin{center}
\begin{tikzpicture}
\draw[Fuchsia!100, thick]  (0, 0) -- (12, 0);

\filldraw [Fuchsia!50] (1,0) circle (2pt);	
\draw (1, -0.1) node[below] {$s$};
\filldraw [Fuchsia!50] (3,0) circle (2pt);	
\draw (3, -0.1) node[below] {$\eta_k$};
\filldraw [Fuchsia!50] (5,0) circle (2pt);	
\draw (5, -0.1) node[below] {$d$};
\filldraw [Fuchsia!50] (7,0) circle (2pt);	
\filldraw [Fuchsia!50] (9,0) circle (2pt);	
\draw (9, -0.1) node[below] {$\eta_{k+1}$};
\filldraw [Fuchsia!50] (11,0) circle (2pt);	
\draw (11, -0.1) node[below] {$e$};

\draw [decorate,decoration={brace,amplitude=10pt},xshift=0pt,yshift=0pt]
(3.2,0.2) -- (5, 0.2) node [black,midway,xshift=0cm, yshift = 0.6cm] {$l$};

\draw [decorate,decoration={brace,amplitude=10pt},xshift=0pt,yshift=0pt]
(5.2,0.2) -- (7, 0.2) node [black,midway,xshift=0cm, yshift = 0.6cm] {$h - l$};

\draw [decorate,decoration={brace,amplitude=10pt, mirror},xshift=0pt,yshift=0pt]
(1, -0.6) -- (3, -0.6) node [black,midway,xshift=0cm, yshift = -0.6cm] {$N_1$};
\draw [decorate,decoration={brace,amplitude=10pt, mirror},xshift=0pt,yshift=0pt]
(3.2, -0.6) -- (7, -0.6) node [black,midway,xshift=0cm, yshift = -0.6cm] {$N_2$};
\draw [decorate,decoration={brace,amplitude=10pt, mirror},xshift=0pt,yshift=0pt]
(7.2, -0.6) -- (11, -0.6) node [black,midway,xshift=0cm, yshift = -0.6cm] {$N_3$};
\draw [decorate,decoration={brace,amplitude=10pt, mirror},xshift=0pt,yshift=0pt]
(3.2, -1.6) -- (5, -1.6) node [black,midway,xshift=0cm, yshift = -0.6cm] {$N_4$};

\end{tikzpicture}
\end{center}
\caption{Illustrations of Case (ii) in the proof of Lemma~\ref{lem:local}. \label{fig-case2}}
\end{figure}

\noindent \textbf{Case (ii) $\eta_{k+1} \leq e$.}  Let  $N_1 =  n_{s: \eta_k}$, $N_2 = n_{(\eta_k+1): (\eta_k+h)}$ and $N_3 = n_{(\eta_k+h+1): e}$,  where  $h =  c_1 \delta/8$.  Then,
    \[
		\Delta_{s, e}^{\eta_k} (z_0) = a \sqrt{\frac{N_1 + N_2 + N_3}{N_1(N_2 + N_3)}}, \, \mbox{and }\Delta_{s, e}^{\eta_k + h} (z_0) = (a+ N_2 \theta) \sqrt{\frac{N_1 + N_2 + N_3}{N_3(N_1 + N_2)}}.
    \]
    where 
    \[
    	a = \sum_{l = s}^{\eta_k} n_l F_l(z_0) - c_0, \quad c_0 = \frac{1}{n_{s:e}} \sum_{l = s}^e n_l F_l(z_0)
    \]
    and
    \begin{align*}
    \theta = \frac{a\sqrt{(N_1 + N_2)N_3}}{N_2} \left\{ \frac{1}{\sqrt{N_1(N_2 + N_3)}} - \frac{1}{(N_1 + N_2)N_3} + \frac{b}{a\sqrt{N_1 + N_2 + N_3}}\right\},
    \end{align*}
    with  $b= \Delta_{s, e}^{\eta_k +h} (z_0) - \Delta_{s, e}^{\eta_k} (z_0)$.
    
Next, we set $l = d - \eta_k \leq h/2$ and $N_4 = n_{(\eta_k + 1): d}$.  Therefore, as in the proof of Lemma 2.6 in \cite{venkatraman1992consistency}, we have that 
    \begin{equation}\label{eq-El1}
	    E_l = \Delta_{s, e}^{\eta_k} (z_0) -\Delta_{s, e}^{\eta_k + l} (z_0) = E_{1l}(  1+  E_{2l} )+ E_{3l},
    \end{equation}
    where     
    \[
    E_{1l} = \frac{a N_4(N_2 - N_4)  \sqrt{N_1 + N_2 + N_3}} {\sqrt{N_1(N_2 + N_3)} \sqrt{(N_1 + N_4)(N_2 + N_3 - N_4)} \left(\sqrt{(N_1 + N_4) (N_2 + N_3 - N_4)} +\sqrt{N_1(N_2 + N_3)}\right)},
    \]
    \[
    E_{2l} = \frac{(N_3 - N_1)(N_3 - N_1 - N_4)}{\left(\sqrt{(N_1 + N_4)(N_2 + N_3 - N_4)} + \sqrt{(N_1 + N_2)N_3}\right)\left(\sqrt{N_1(N_2 + N_3)} + \sqrt{(N_1 + N_2)N_3}\right)},
    \]
    and
    \[
    E_{3l} = -\frac{b N_4}{N_2} \sqrt{\frac{(N_1 + N_2)N_3}{(N_1 + N_4)(N_2 + N_3 - N_4)}}.
    \]  
    Next,  we notice that $N_2 - N_4 \geq n_{\min}c_1 \delta/16$.  It holds that
    \begin{equation}\label{eq-El2}
	E_{1l} \geq c_{1l}|d - \eta_k| \delta \frac{n_{\min}^2}{n_{\max}^2} \Delta^{\eta_k}_{s, e}(z_0)(e-s)^{-2}, 		
	\end{equation}
	where $c_{1l} > 0$ is a sufficiently small constant depending on $c_1$.  As for $E_{2l}$, due to \eqref{eqn:distance}, we have
	\begin{align}\label{eq-El3}
		E_{2l} \geq -1/2.
	\end{align}
	As for $E_{3l}$, we have
	\begin{align}
		E_{3l} & \geq -\left\{ 3\log\left(\frac{T^4}{\delta}\right) + 3\log\bigl(n_{1:T}\bigr)\right\}|d - \eta_k| \frac{n_{\min}^2}{n_{\max}^2}	\frac{e-s}{c_1^2 \delta^2} \nonumber \\
		& \geq - c_{3l} \left\{ 3\log\left(\frac{T^4}{\delta}\right) + 3\log\bigl(n_{1:T}\bigr)\right\}|d - \eta_k| \Delta^{\eta_k}_{s, e}(z_0) \delta (e-s)^{-2}\frac{n_{\min}^2}{n_{\max}^2} \nonumber \\
		& \hspace{1cm} \times \frac{n_{\max}^{9/2}}{n_{\min}^5} (e-s)^{7/2}\frac{\log(n_{1:T})}{\kappa \delta^4} \nonumber \\
		& \geq -c_{1l}/2|d - \eta_k| \delta \frac{n_{\min}^2}{n_{\max}^2} \Delta^{\eta_k}_{s, e}(z_0)(e-s)^{-2}, \label{eq-El4}
	\end{align}
	where the first inequality follows from \eqref{eqn:upper_bound}, the identity follows from \eqref{eqn:lower_bound}, and the second inequality follows from \eqref{eqn:upper_bound}.

Combining \eqref{eq-El1}, \eqref{eq-El2}, \eqref{eq-El3} and \eqref{eq-El4}, we have
    \begin{equation}\label{eq-EL2}
		\Delta_{s, e}^{\eta_k} (z_0) -\Delta_{s, e}^{d} (z_0) \geq c  |d - \eta_k| \delta \frac{n_{\min}^2}{n_{\max}^2} \Delta^{\eta_k}_{s, e}(z_0)(e-s)^{-2},   
    \end{equation}
    where $c > 0$ is a sufficiently small constant.  
   
In view of \eqref{eq-EL1} and \eqref{eq-EL2}, we conclude the proof.    
	
\end{proof}

\begin{lemma}
\label{Lem-17}	
Suppose $(s, e) \subset (1, T)$ such that $e - s \leq C_S \delta$ and that
	\[
		\eta_{k-1} \leq s \leq \eta_k \leq \ldots \leq \eta_{k+q} \leq e \leq \eta_{k+q+1}, \quad q \geq 0. 
	\]
	Denote 
	\[
		\kappa_{\max}^{s, e} = \max \bigl\{\kappa_p: \, p = k, \ldots, k+q \bigr\}.
	\]
	Then for any $p \in \{k-1, \ldots, k+q\}$, it holds that
	\[
		\sup_{z \in \mathbb{R}}\left|\frac{1}{n_{s:e}}\sum_{t = s}^e n_t F_t(z) - F_{\eta_p}(z)\right|	\leq (C_S + 1)\kappa^{s, e}_{\max}.
	\]	
\end{lemma}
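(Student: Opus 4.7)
The plan is to rewrite $\frac{1}{n_{s:e}} \sum_{t=s}^e n_t F_t(z)$ as a convex combination of block-CDFs and then combine a triangle inequality with a telescoping bound through consecutive jumps $\kappa_i$, closing the argument by counting change points via the spacing assumption $e - s \le C_S\delta$ together with the $\delta$-separation of true change points.

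First I would exploit the piecewise-constant structure from Assumption~\ref{as1}: $F_t$ is constant on each block $(\eta_{j-1}, \eta_j]$ with common value $F_{\eta_j}$. Partitioning $[s,e]$ accordingly and setting $N_j = \sum_{t \in [s,e] \cap (\eta_{j-1}, \eta_j]} n_t$, the hypothesis $\eta_{k-1} \le s \le \eta_k \le \cdots \le \eta_{k+q} \le e \le \eta_{k+q+1}$ forces $j$ to lie in a subset of $\{k-1, k, \ldots, k+q+1\}$. Writing
\[
\frac{1}{n_{s:e}} \sum_{t=s}^e n_t F_t(z) - F_{\eta_p}(z) \;=\; \sum_j \frac{N_j}{n_{s:e}} \bigl(F_{\eta_j}(z) - F_{\eta_p}(z)\bigr),
\]
and noting $\sum_j N_j/n_{s:e} = 1$, the triangle inequality reduces the problem to a uniform estimate of $\max_j \sup_z |F_{\eta_j}(z) - F_{\eta_p}(z)|$.

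For each contributing index $j$ and each $p \in \{k-1, \ldots, k+q\}$, I would then telescope through consecutive block-CDFs:
\[
\sup_z \bigl|F_{\eta_j}(z) - F_{\eta_p}(z)\bigr| \;\le\; \sum_{i = \min(j,p)+1}^{\max(j,p)} \sup_z \bigl|F_{\eta_i}(z) - F_{\eta_{i-1}}(z)\bigr| \;=\; \sum_i \kappa_i \;\le\; |j-p|\, \kappa_{\max}^{s,e}.
\]
To control $|j-p|$, I invoke the spacing condition: the $q+1$ true change points $\eta_k < \eta_{k+1} < \cdots < \eta_{k+q}$ all lie in $[s,e]$ and are pairwise at least $\delta$ apart, so $q\delta \le \eta_{k+q} - \eta_k \le e-s \le C_S\delta$, giving $q \le C_S$. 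Combined with $j, p \in \{k-1, \ldots, k+q+1\}$, this yields $|j-p| \le C_S + 1$, closing the estimate.

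The main source of subtlety will be the bookkeeping around which boundary blocks actually contribute a nonzero weight $N_j$ and ensuring the telescoping index $i$ stays within the range covered by $\kappa_{\max}^{s,e} = \max\{\kappa_p : p = k, \ldots, k+q\}$. Handling the boundary blocks separately ($j = k-1$, triggered only when $s = \eta_{k-1}$, and $j = k+q+1$, triggered only when $e > \eta_{k+q}$) should pin down the multiplicative constant exactly as $C_S + 1$.
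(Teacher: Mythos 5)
Your proposal has the same skeleton as the paper's proof: rewrite $\tfrac{1}{n_{s:e}}\sum_{t=s}^e n_t F_t(z)$ as a convex combination of per-block CDFs, apply the triangle inequality, and count the number of blocks that can appear using $e-s\le C_S\delta$ together with the $\delta$-separation of the $\eta_j$. The paper does not telescope explicitly; it writes the per-block coefficients $|p-k|,|p-k-1|,\dots,|p-k-q-1|$ directly, which is equivalent to your $|j-p|\,\kappa^{s,e}_{\max}$ bound. So the approaches are essentially identical, and your explicit telescoping step is if anything a clearer presentation of the same idea.

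The place where your proposal stops short, however, is precisely the step you flag as ``bookkeeping'' and promise ``should pin down the constant.'' Under Assumption~\ref{as1}, the block $(\eta_{j-1},\eta_j]$ carries CDF $F_{\eta_j}$, so when $e>\eta_{k+q}$ the rightmost block $\{\eta_{k+q}+1,\dots,e\}$ contributes $F_{\eta_{k+q+1}}$, and telescoping from $F_{\eta_{k+q+1}}$ down to $F_{\eta_p}$ passes through $\kappa_{k+q+1}=\sup_z|F_{\eta_{k+q+1}}(z)-F_{\eta_{k+q}}(z)|$, which is \emph{not} among $\{\kappa_k,\dots,\kappa_{k+q}\}$ and hence not controlled by $\kappa^{s,e}_{\max}$ as defined in the lemma. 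This is a genuine hole in your argument, not a routine indexing detail. It is worth noting that the paper's displayed proof avoids this by attaching the CDF label $F_{\eta_j}$ to the block $(\eta_j,\eta_{j+1}]$ (the left endpoint) rather than $F_{\eta_{j+1}}$, which is at variance with how Assumption~\ref{as1} sets up the blocks; with that shifted convention the telescope indices stay in $\{k,\dots,k+q\}$ and the $C_S+1$ constant closes. So your plan is on the right track, but the deferred boundary case is exactly where the lemma's labeling conventions need to be nailed down, and ``handling the boundary blocks separately'' does not by itself resolve the $\kappa_{k+q+1}$ mismatch.
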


\begin{proof}

Since $e - s \leq C_S \delta$, the interval $(s, e)$ contains at most $C_S + 1$ true change points.  Note that
	\begin{align*}
	& \sup_{z \in \mathbb{R}}\left|\frac{1}{n_{s:e}}\sum_{t = s}^e n_t F_t(z) - F_{\eta_p}(z)\right|	  \\
	= & \sup_{z \in \mathbb{R}} \Bigg\{\frac{1}{n_{s:e}} \Bigg|\sum_{t = s}^{\eta_k} n_t\left(F_{\eta_{k-1}}(z) - F_{\eta_p}(z)\right) + \sum_{t = \eta_k + 1}^{\eta_{k+1}} n_t \left(F_{\eta_{k}}(z) - F_{\eta_p}(z)\right) + \ldots \\
	& \hspace{6cm} + \sum_{t = \eta_{k+q} + 1}^{e} n_t \left(F_{\eta_{k+q}}(z) - F_{\eta_p}(z)\right)\Bigg|\Bigg\} \\
	\leq & \frac{|p-k|\sum_{t = s}^{\eta_k} n_t + |p-k-1| \sum_{t = \eta_k+1}^{\eta_{k+1}} n_t + \ldots + |p-k-q-1| \sum_{t = \eta_{k+q}+1}^e n_t}{n_{s:e}} \kappa^{s, e}_{\max} \\
	\leq & (C_S + 1)\kappa^{s, e}_{\max}.
	\end{align*}	
\end{proof}

For any $x = (x_i) \in \mathbb{R}^{n_{s:e}}$, define
	\[
		\mathcal{P}^d_{s, e}(x) = \frac{1}{n_{s:e}} \sum_{i = 1}^{n_{s:e}} x_i + \langle x, \psi^d_{s, e}\rangle \psi^d_{s, e},
	\]
	where $\langle \cdot, \cdot \rangle$ is the inner product in Euclidean space, and $\psi_{s,e}^d \in \mathbb{R}^{n_{s:e}}$ with
	\[
		(\psi_{s,e}^d)_i=  \begin{cases}
			\sqrt\frac{n_{(d+1):e}}{n_{s:e}n_{s:d}}, &  i = 1, \ldots, n_{s:d}, \\
			-\sqrt\frac{n_{s:d}}{n_{s:e}n_{(d+1):e}}, &  i = n_{s:d} + 1, \ldots, n_{s:e},
		\end{cases}
	\]
	i.e. the $i$-th entry of $\mathcal{P}_{s,e}^d (x)$ satisfies 
	\[
		\mathcal{P}_{s,e}^d (x)_i= \begin{cases}
			\frac{1}{n_{s:d}}\sum_{j=1}^{n_{s:d}} x_j, &  i = 1, \ldots, n_{s:d}, \\
			\frac{1}{n_{(d+1):e}}\sum_{j=n_{s:d}+1}^{n_{s:e}} x_j, & i = n_{s:d}+1, \ldots, n_{s:e}.
		\end{cases}  
	\]

\begin{lemma}
\label{lem-anova}
Suppose Assumption~\ref{as1} holds, and  consider any interval $(s, e) \subset (1, T)$ satisfying that there exists a true change point $\eta_k \in (s, e)$.  Let 
	\[
	b \in \argmax_{s < t < e} D^t_{s, e} \quad \mbox{and} \quad z_0 \in \argmax_{z \in \mathbb{R}} |D^b_{s, e}(z)|.
	\]
	Let
	\[
	\mu_{s, e} = \bigl(\underbrace{F_{s}(z_0), \ldots, F_s(z_0)}_{n_s}, \ldots, \underbrace{F_{e}(z_0), \ldots, F_e(z_0)}_{n_e}\bigr)^{\top} \in \mathbb{R}^{n_{s:e}}
	\]
	and
	\[
	Y_{s, e} = \left(\underbrace{\mathbbm{1}_{\{Y_{s, 1} \leq z_0\}}, \ldots, \mathbbm{1}_{\{Y_{s, n_s} \leq z_0\}}}_{n_s}, \ldots, \underbrace{\mathbbm{1}_{\{Y_{e, 1} \leq z_0\}}, \ldots, \mathbbm{1}_{\{Y_{e, n_e} \leq z_0\}}}_{n_e}\right)^{\top} \in \mathbb{R}^{n_{s:e}}.
	\]
	
We have
	\begin{equation}\label{eq-anova-1}
		\bigl\|Y_{s, e} - \mathcal{P}^b_{s, e}\bigl(Y_{s, e}\bigr)\bigr\|^2 \leq \bigl\|Y_{s, e} - \mathcal{P}^{\eta_k}_{s, e}\bigl(Y_{s, e}\bigr)\bigr\|^2 \leq \bigl\|Y_{s, e} - \mathcal{P}^{\eta_k}_{s, e}\bigl(\mu_{s, e}\bigr)\bigr\|^2.
	\end{equation}
\end{lemma}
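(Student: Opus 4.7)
\textbf{Proof proposal for Lemma~\ref{lem-anova}.} The plan is to recognize $\mathcal{P}^d_{s,e}$ as an orthogonal projection and then reduce both inequalities to standard facts about projections. First, I would verify that the all-ones vector $\mathbf{1}\in\mathbb{R}^{n_{s:e}}$ and $\psi^d_{s,e}$ are orthogonal and that $\|\psi^d_{s,e}\|^2 = n_{s:d}\cdot\frac{n_{(d+1):e}}{n_{s:e}n_{s:d}} + n_{(d+1):e}\cdot\frac{n_{s:d}}{n_{s:e}n_{(d+1):e}} = 1$. Consequently, $\mathcal{P}^d_{s,e}(x) = \frac{\langle x,\mathbf{1}\rangle}{\|\mathbf{1}\|^2}\mathbf{1} + \langle x,\psi^d_{s,e}\rangle \psi^d_{s,e}$ is the orthogonal projection onto the two-dimensional subspace $V^d_{s,e} := \mathrm{span}\{\mathbf{1},\psi^d_{s,e}\}$. (One can cross-check by unpacking the componentwise formula given in the definition.)

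For the first inequality, I would apply the Pythagorean identity
\[
\bigl\|Y_{s,e} - \mathcal{P}^d_{s,e}(Y_{s,e})\bigr\|^2 = \|Y_{s,e}\|^2 - \bigl\|\mathcal{P}^d_{s,e}(Y_{s,e})\bigr\|^2,
\]
which together with orthonormality of $\{\mathbf{1}/\sqrt{n_{s:e}},\,\psi^d_{s,e}\}$ gives
\[
\bigl\|\mathcal{P}^d_{s,e}(Y_{s,e})\bigr\|^2 = \frac{\langle Y_{s,e},\mathbf{1}\rangle^2}{n_{s:e}} + \langle Y_{s,e},\psi^d_{s,e}\rangle^2.
\]
The first term is independent of $d$, so the first inequality reduces to showing $\langle Y_{s,e},\psi^b_{s,e}\rangle^2 \ge \langle Y_{s,e},\psi^{\eta_k}_{s,e}\rangle^2$. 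The key observation is that a direct computation from the definitions yields
\[
\langle Y_{s,e},\psi^d_{s,e}\rangle = \sqrt{\tfrac{n_{s:d}\,n_{(d+1):e}}{n_{s:e}}}\bigl\{\widehat{F}_{s:d}(z_0) - \widehat{F}_{(d+1):e}(z_0)\bigr\} = D^d_{s,e}(z_0).
\]
By the defining properties of $b$ and $z_0$,
\[
|D^b_{s,e}(z_0)| = \sup_{z\in\mathbb{R}} |D^b_{s,e}(z)| = D^b_{s,e} = \max_{s<t<e} D^t_{s,e} \ge D^{\eta_k}_{s,e} \ge |D^{\eta_k}_{s,e}(z_0)|,
\]
which gives $\langle Y_{s,e},\psi^b_{s,e}\rangle^2 \ge \langle Y_{s,e},\psi^{\eta_k}_{s,e}\rangle^2$, completing the first inequality.

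The second inequality is then immediate from the variational characterization of orthogonal projection: since $\mathcal{P}^{\eta_k}_{s,e}(\mu_{s,e}) \in V^{\eta_k}_{s,e}$ and $\mathcal{P}^{\eta_k}_{s,e}(Y_{s,e})$ is the unique minimizer of $\|Y_{s,e} - v\|^2$ over $v\in V^{\eta_k}_{s,e}$, we have
\[
\bigl\|Y_{s,e} - \mathcal{P}^{\eta_k}_{s,e}(Y_{s,e})\bigr\|^2 \le \bigl\|Y_{s,e} - \mathcal{P}^{\eta_k}_{s,e}(\mu_{s,e})\bigr\|^2.
\]
There is no real obstacle here; the only step that requires more than bookkeeping is identifying the CUSUM--KS value $D^d_{s,e}(z_0)$ as the inner product $\langle Y_{s,e},\psi^d_{s,e}\rangle$, after which the argument is purely linear-algebraic.
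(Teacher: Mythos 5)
Your proof is correct and takes essentially the same route as the paper: both reduce the residual sum of squares to $\mathrm{const} - \bigl(D^d_{s,e}(z_0)\bigr)^2$, use the defining properties of $b$ and $z_0$ to compare $\bigl|D^b_{s,e}(z_0)\bigr|$ with $\bigl|D^{\eta_k}_{s,e}(z_0)\bigr|$, and invoke the least-squares/projection property for the second inequality. Your version is in fact a bit cleaner, since the paper performs the same computation via the group means $Y_1, Y_2$ and contains a benign sign typo in the resulting identity, while you make the orthogonal-projection structure and the identity $\langle Y_{s,e}, \psi^d_{s,e}\rangle = D^d_{s,e}(z_0)$ explicit.
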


\begin{proof}
Note that for any $d \in (s, e)$, we have
	\begin{align*}
	\bigl\|Y_{s, e} - \mathcal{P}^d_{s, e}\bigl(Y_{s, e}\bigr)\bigr\|^2 & = n_{s:d}(Y_1 - Y_1^2) + n_{(d+1):e} (Y_2 - Y_2^2) \\
	& = -\bigl\{D^t_{s, e}(z_0)\bigr\}^2 + \frac{\left\{\sum_{t = s}^e \sum_{i=1}^{n_t} \mathbbm{1}_{\{Y_{t, i} \leq z_0\}}\right\}^2}{n_{s:e}} - \sum_{t = s}^e \sum_{i=1}^{n_t} \mathbbm{1}_{\{Y_{t, i} \leq z_0\}},
	\end{align*}
	where 
	\[
		Y_1 = \frac{1}{n_{s:d}}\sum_{t = s}^d \sum_{i = 1}^{n_t} \mathbbm{1}_{\{Y_{t, i} \leq z_0\}}, \quad \mbox{and} \quad Y_2 = \frac{1}{n_{(d+1):e}}\sum_{t = d+1}^e \sum_{i = 1}^{n_t} \mathbbm{1}_{\{Y_{t, i} \leq z_0\}}.
	\]
	It follow from the definition of $b$, we have that	
	\[
		\bigl\|Y_{s, e} - \mathcal{P}^b_{s, e}\bigl(Y_{s, e}\bigr)\bigr\|^2 \leq \bigl\|Y_{s, e} - \mathcal{P}^{\eta_k}_{s, e}\bigl(Y_{s, e}\bigr)\bigr\|^2.
	\]
	The second inequality in \eqref{eq-anova-1} follows from the observation that the sum of the squares of errors maximized by the sample mean.

\end{proof}

\begin{lemma}
\label{lem-19}	

Let $(s, e) \subset (1, T)$ contains two or more change points such that
	\[
		\eta_{k-1} \leq s \leq \eta_k \leq \ldots \leq \eta_{k+q} \leq e \leq \eta_{k+q+1}, \quad q \geq 1.
	\]
	If $\eta_k - s \leq c_1 \delta$, for $c_1 > 0$, then
	\[
		\Delta^{\eta_k}_{s, e} \leq \sqrt{\frac{c_1n_{\max}}{n_{\min}}} \Delta^{\eta_{k+1}}_{s, e} + 2\sqrt{n_{s:\eta_k}} \kappa_k.
	\]	
\end{lemma}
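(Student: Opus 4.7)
The strategy is to exploit the block structure that Assumption~\ref{as1} imposes on $\{F_t\}_{t=s}^{e}$: since $\eta_k-s\leq c_1\delta$ forces $s>\eta_{k-1}$ (as $\eta_k-\eta_{k-1}\geq\delta$, taking $c_1<1$), one has $F_t=F_{\eta_k}$ on $[s,\eta_k]$, $F_t=F_{\eta_{k+1}}$ on $[\eta_k+1,\eta_{k+1}]$, and $F_t$ piecewise constant on $(\eta_{k+1},e]$. I fix
\[
z_0\in\arg\max_{z\in\mathbb{R}}|\Delta_{s,e}^{\eta_k}(z)|,
\]
so that $\Delta_{s,e}^{\eta_k}=|\Delta_{s,e}^{\eta_k}(z_0)|$ and $\Delta_{s,e}^{\eta_{k+1}}\geq|\Delta_{s,e}^{\eta_{k+1}}(z_0)|$; the whole proof then reduces to a scalar inequality at $z_0$.

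Set $N_1=n_{s:\eta_k}$, $N_2=n_{(\eta_k+1):\eta_{k+1}}$, $N_3=n_{(\eta_{k+1}+1):e}$, $A=F_{\eta_k}(z_0)$, $B=F_{\eta_{k+1}}(z_0)$, and $C=N_3^{-1}\sum_{t=\eta_{k+1}+1}^{e}n_t F_t(z_0)$. A direct expansion of Definition~\ref{def:pop_cusum} gives
\begin{align*}
\Delta_{s,e}^{\eta_k}(z_0) &= \sqrt{\tfrac{N_1}{(N_2+N_3)(N_1+N_2+N_3)}}\bigl[N_2(A-B)+N_3(A-C)\bigr],\\
\Delta_{s,e}^{\eta_{k+1}}(z_0) &= \sqrt{\tfrac{N_3}{(N_1+N_2)(N_1+N_2+N_3)}}\bigl[(N_1+N_2)(A-C)-N_2(A-B)\bigr],
\end{align*}
where the second identity rewrites $N_1(A-C)+N_2(B-C)$ via $(B-C)=(A-C)-(A-B)$. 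Solving the second display for $(A-C)$ and substituting into the first yields the clean decomposition
\[
\Delta_{s,e}^{\eta_k}(z_0)=\sqrt{\tfrac{N_1N_3}{(N_2+N_3)(N_1+N_2)}}\,\Delta_{s,e}^{\eta_{k+1}}(z_0)+\tfrac{N_2(A-B)}{N_1+N_2}\sqrt{\tfrac{N_1(N_1+N_2+N_3)}{N_2+N_3}}.
\]

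It then remains to bound the two prefactors and $|A-B|$. For the coefficient of $\Delta_{s,e}^{\eta_{k+1}}(z_0)$, drop $N_3$ against $N_2+N_3$ to obtain $\sqrt{N_1/(N_1+N_2)}\leq\sqrt{N_1/N_2}$, and the spacing estimates $N_1\leq c_1\delta\,n_{\max}$ and $N_2\geq\delta\,n_{\min}$ (from $\eta_{k+1}-\eta_k\geq\delta$) give $\sqrt{c_1 n_{\max}/n_{\min}}$. For the coefficient of $(A-B)$, the key identity is the elementary estimate
\[
\frac{N_2^2(N_1+N_2+N_3)}{(N_1+N_2)^2(N_2+N_3)}\leq\frac{N_2(N_1+N_2+N_3)}{(N_1+N_2)(N_2+N_3)}=\frac{N_1N_2}{(N_1+N_2)(N_2+N_3)}+\frac{N_2}{N_1+N_2}\leq 2,
\]
so this coefficient is at most $\sqrt{2N_1}\leq 2\sqrt{n_{s:\eta_k}}$. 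Combined with $|A-B|=|F_{\eta_k}(z_0)-F_{\eta_{k+1}}(z_0)|\leq\kappa_k$ (following the convention already used in Lemma~\ref{lem4}, where $\kappa_k$ denotes the KS magnitude of the jump at the $k$-th change point), this closes the inequality.

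The principal obstacle is the algebraic bookkeeping in the second step: inverting the $2\times 2$ linear system to express $\Delta_{s,e}^{\eta_k}(z_0)$ in terms of $\Delta_{s,e}^{\eta_{k+1}}(z_0)$ and $(A-B)$ requires care with the identity $(A-B)+(B-C)=(A-C)$, and the uniform bound $\leq 2$ on the composite quantity $N_2^2(N_1+N_2+N_3)/[(N_1+N_2)^2(N_2+N_3)]$, although a one-line manipulation, is the only nontrivial estimate. A minor technical caveat is the tacit assumption $\eta_{k+1}<e$ (so $N_3\geq 1$ and $\Delta_{s,e}^{\eta_{k+1}}$ is defined), which is needed for the lemma's statement to be meaningful and is met in every application in the induction of Theorem~\ref{thm-wbs}.
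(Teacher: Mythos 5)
Your proof is correct, and it takes a genuinely different route from the paper's. The paper proves Lemma~\ref{lem-19} by introducing an auxiliary sequence $\{G_t\}_{t=s}^e$ with the change at $\eta_k$ removed (setting $G_t = F_{\eta_k+1}$ on $[s,\eta_k]$ and $G_t = F_t$ thereafter), bounding $|\Delta^t_{s,e}(z) - \mathcal{G}^t_{s,e}(z)| \leq \sqrt{n_{s:\eta_k}}\kappa_k$ for $t \geq \eta_k$, then using the fact that the population CUSUM of the modified sequence is constant-block-linear on $[s,\eta_{k+1}]$ to write $\mathcal{G}^{\eta_k}_{s,e} = \sqrt{n_{s:\eta_k}n_{(\eta_{k+1}+1):e}/(n_{s:\eta_{k+1}}n_{(\eta_k+1):e})}\,\mathcal{G}^{\eta_{k+1}}_{s,e}$, and finally converting back from $\mathcal{G}$ to $\Delta$. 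You instead expand both CUSUM quantities at a single $z_0$, observe that they sit in a two-dimensional linear system in the unknowns $(A-B)$ and $(A-C)$, invert it, and bound the two resulting scalar prefactors directly. Both arguments are of comparable length and both deliver the constant $2$ with no extra restriction on $c_1$ beyond $c_1<1$. Your version has the virtue of being entirely self-contained elementary algebra and of making the exact dependence on the block sizes $N_1,N_2,N_3$ explicit; the paper's version is conceptually cleaner in that the ``remove one change point'' surgery reduces the problem to a known identity for constant blocks without any matrix inversion. Two small points worth keeping in mind, both of which you correctly flagged: the hypothesis $c_1 < 1$ is silently required so that $[s,\eta_k]$ contains no change point (otherwise $F_{s:\eta_k}\neq F_{\eta_k}$ and the first identity fails), and $\eta_{k+1}<e$ must hold for $\Delta^{\eta_{k+1}}_{s,e}$ to be defined — both caveats are shared by the paper's proof.
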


\begin{proof}
Consider the distribution sequence $\{G_t\}_{t = s}^e$ be such that
	\[
		G_t = \begin{cases}
			F_{\eta_k + 1}, & t = s+1, \ldots, \eta_k,\\
			F_t, & t = \eta_k + 1, \ldots, e.
	 	\end{cases}	
	\]	
	For any $s < t < e$, define
	\[
		\mathcal{G}_{s,e}^t = \underset{ z \in \mathbb{R}  }{\sup }\left\vert  \mathcal{G}_{s,e}^t(z)  \right\vert,  
	\]
	where
	\[	
		\mathcal{G}_{s,e}^t(z) = \sqrt{\frac{  n_{s:t} \, n_{(t+1):e} }{n_{s:e}} } \left\{\frac{1}{n_{s:t}}\sum_{l = s}^t n_l G_l(z) -   \frac{1}{n_{(t+1):e}}\sum_{l = t+1}^e n_l G_l(z)\right\}.
	\]

For any $t \geq \eta_k$ and $z \in \mathbb{R}$, it holds that
	\[
		\left|\Delta^t_{s, e}(z) - \mathcal{G}^t_{s, e}(z)\right| = \sqrt{\frac{n_{(t+1):e}}{n_{s:e}n_{s:t}}} n_{s:\eta_k} \left|F_{\eta_{k+1}}(z) - F_{\eta_{k}}(z)\right| \leq \sqrt{n_{s:\eta_k}} \kappa_k.
	\]
	Thus we have
	\begin{align*}
	\Delta^{\eta_k}_{s, e} & = \sup_{z \in \mathbb{R}} \left|\Delta^{\eta_k}_{s, e}(z) - \mathcal{G}^{\eta_k}_{s, e}(z) + \mathcal{G}^{\eta_k}_{s, e}(z)\right| \leq \sup_{z \in \mathbb{R}} \left| \Delta^{\eta_k}_{s, e}(z) - \mathcal{G}^{\eta_k}_{s, e}(z) \right| + \mathcal{G}^{\eta_k}_{s, e} \\
	& \leq \mathcal{G}^{\eta_k}_{s, e} + \sqrt{n_{s:\eta_k}} \kappa_k \leq \sqrt{\frac{n_{s:\eta_k} n_{(\eta_{k+1} +1):e}}{n_{s:\eta_{k+1}} n_{(\eta_k + 1): e}}} \mathcal{G}^{\eta_{k+1}}_{s, e} + \sqrt{n_{s:\eta_k}} \kappa_k  \\
	& \leq \sqrt{\frac{c_1n_{\max}}{n_{\min}}} \Delta^{\eta_{k+1}}_{s, e} + 2\sqrt{n_{s:\eta_k}} \kappa_k.
	\end{align*}
	
\end{proof}

\begin{lemma}
\label{lem-21}

Under Assumption~\ref{as1}, let $(s_0, e_0)$ be an interval with $e_0 - s_0 \leq C_S \delta$ and contain at least one change point $\eta_k$ such that
	\[
		\eta_{k-1} \leq s_0 \leq \eta_k \leq \ldots \leq \eta_{k+q} \leq e_0 \leq \eta_{k+q+1}, \quad q \geq 0.
	\]
	Suppose that there exists $k'$ such that 
	\[
		\min \bigl\{\eta_{k'} - s_0, \, e_0 - \eta_{k'} \bigr\} \geq \delta/16.
	\]
	Let 
	\[
		\kappa^{\max}_{s, e} = \max\bigl\{\kappa_p:\, \min\{\eta_p - s_0, \, e_0 - \eta_p\} \geq \delta/16\bigr\}.
	\]
	Consider any generic $(s, e) \subset (s_0, e_0)$, satisfying
	\[
		\min\{\eta_k - s_0, e_0 - \eta_k\} \geq \delta/16, \quad \eta_k \in (s, e).
	\]

Let $b \in \argmax_{s < t < e} D^t_{s, e}$.  For some $c_1 > 0$ and $\gamma > 0$, suppose that
	\begin{equation}\label{eq-lem13-2}
		D^b_{s, e} \geq c_1 \kappa^{\max}_{s, e}\sqrt{\delta} \frac{n^{3/2}_{\min}}{n_{\max}},
	\end{equation}
	\begin{equation}\label{eq-lem13-1}
		\max_{s < t < e}\sup_{z \in \mathbb{R}}\left|\Lambda^t_{s, e}(z)\right| \leq \gamma,
	\end{equation}
	and
	\begin{equation}\label{eq-lem13-4}
		\max_{1 \leq s < e \leq T}\sup_{z \in \mathbb{R}}\left| \frac{1}{\sqrt{n_{s:e}}} \sum_{t = s}^e \sum_{i = 1}^{n_t} \left(\mathbbm{1}_{\{Y_{t, i} \leq z\}} - F_t(z)\right)\right| \leq \gamma.
	\end{equation}
	If there exits a sufficiently small $0 < c_2 < c_1/2$ such that
	\begin{equation}\label{eq-lem13-3}
		\gamma \leq c_2 \kappa^{\max}_{s, e}\sqrt{\delta}\frac{n^{3/2}_{\min}}{n_{\max}},
	\end{equation}
	then there exists a change point $\eta_k \in (s, e)$ such that
	\[
		\min\{e - \eta_k, \, \eta_k - s\} \geq \delta/4 \quad \mbox{and} \quad |\eta_k - b| \leq C\frac{n^9_{\max}}{n^{10}_{\min}}\kappa^{-2}\gamma^{2},
	\]		
	where $C > 0$ is a sufficiently large constant.
\end{lemma}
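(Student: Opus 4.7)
The plan is to proceed through four steps, combining the deterministic CUSUM comparison lemmas (Lemmas~\ref{lem2}, \ref{lem4}, \ref{lem:local}) with the ANOVA basic inequality of Lemma~\ref{lem-anova} and the concentration bounds \eqref{eq-lem13-1}--\eqref{eq-lem13-4} in order to upgrade a naive linear-in-$\gamma$ comparison into the quadratic $\gamma^2/\kappa^2$ localization rate claimed.

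\textbf{Step 1} (a witness $z_0$ and a dominating change point). Let $z_0 \in \argmax_{z\in\mathbb{R}} |D^b_{s,e}(z)|$, so that $|D^b_{s,e}(z_0)|=D^b_{s,e}$; without loss of generality $D^b_{s,e}(z_0)>0$. By \eqref{eq-lem13-1}, $\Delta^b_{s,e}(z_0)\geq D^b_{s,e}-\gamma$, and then \eqref{eq-lem13-2}--\eqref{eq-lem13-3} give a lower bound of order $\kappa^{\max}_{s,e}\sqrt{\delta}\,n_{\min}^{3/2}/n_{\max}$. Applying Lemma~\ref{lem2} at the fixed level $z_0$, the map $t\mapsto \Delta^t_{s,e}(z_0)$ is monotone or decreasing-then-increasing on each inter-change-point interval, so its maximum over $(s,e)$ is attained at some true change point $\eta_k$, and $\Delta^{\eta_k}_{s,e}(z_0)\geq \Delta^b_{s,e}(z_0)$.

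\textbf{Step 2} (interiority of $\eta_k$). Lemma~\ref{lem4} bounds $\Delta^{\eta_k}_{s,e}$ by $\kappa_k\sqrt{n_{\max}\min\{\eta_k-s,e-\eta_k\}}$. Comparing against the Step~1 lower bound rules out change points that are too close to an endpoint; in the multi-change-point situation Lemma~\ref{lem-19} is used to transfer dominance from a ``shallow'' change point to a deeper one. Together these force $\min\{\eta_k-s,\,e-\eta_k\}\geq \delta/4$, which simultaneously gives the first part of the conclusion and verifies the hypotheses of Lemma~\ref{lem:local}.

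\textbf{Step 3} (the quadratic oracle inequality). Lemma~\ref{lem-anova} yields
\[
\|Y_{s,e}-\mathcal{P}^b_{s,e}(Y_{s,e})\|^2 \leq \|Y_{s,e}-\mathcal{P}^{\eta_k}_{s,e}(\mu_{s,e})\|^2.
\]
Expanding around $\mu_{s,e}$ and using the orthogonal decomposition $\|\mu-\mathcal{P}^b(Y)\|^2=\|\mu-\mathcal{P}^b(\mu)\|^2+\|\mathcal{P}^b(Y-\mu)\|^2$ rewrites this as
\[
\|\mu-\mathcal{P}^b(\mu)\|^2 - \|\mu-\mathcal{P}^{\eta_k}(\mu)\|^2 \;\leq\; 2\bigl\langle Y-\mu,\,\mathcal{P}^b(Y)-\mathcal{P}^{\eta_k}(\mu)\bigr\rangle,
\]
whose left-hand side equals $\Delta^{\eta_k}_{s,e}(z_0)^2-\Delta^b_{s,e}(z_0)^2$ after the common constant piece cancels. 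The vector $\mathcal{P}^b(Y)-\mathcal{P}^{\eta_k}(\mu)$ lies in the three-dimensional subspace spanned by $\mathbf{1}$, $\psi^b_{s,e}$ and $\psi^{\eta_k}_{s,e}$, and the inner product of $Y-\mu$ with each of these three directions is bounded by $\gamma$ by \eqref{eq-lem13-4} and \eqref{eq-lem13-1}. Cauchy--Schwarz and AM--GM bookkeeping (with the ``bias'' piece $\|\mu-\mathcal{P}^{\eta_k}(\mu)\|^2$ absorbed using Step~2 and Lemma~\ref{Lem-17} in the multi-change-point case) then deliver
\[
\Delta^{\eta_k}_{s,e}(z_0)^2-\Delta^b_{s,e}(z_0)^2 \;\leq\; C\gamma^{2}.
\]

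\textbf{Step 4} (passing to $|b-\eta_k|$). Factoring the LHS and using $\Delta^b\leq\Delta^{\eta_k}$ yields $\Delta^{\eta_k}_{s,e}(z_0)-\Delta^b_{s,e}(z_0)\leq C\gamma^2/\Delta^{\eta_k}_{s,e}(z_0)$. Lemma~\ref{lem:local}, whose hypotheses are met at $z_0$ by Steps~1--3, gives the matching lower bound
\[
\Delta^{\eta_k}_{s,e}(z_0)-\Delta^b_{s,e}(z_0) \;\gtrsim\; |b-\eta_k|\,\delta\,\frac{n_{\min}^{2}}{n_{\max}^{2}}\,\Delta^{\eta_k}_{s,e}(z_0)\,(e-s)^{-2}.
\]
Combining the two displays and substituting $e-s\leq C_S\delta$ together with $\Delta^{\eta_k}_{s,e}(z_0)\gtrsim \kappa\sqrt{\delta}\,n_{\min}^{3/2}/n_{\max}$ produces the claimed $|b-\eta_k|\leq C\,n_{\max}^{9}\,n_{\min}^{-10}\,\kappa^{-2}\gamma^{2}$.

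The main obstacle is Step~3. A direct comparison using only \eqref{eq-lem13-1} together with $\Delta^b\leq\Delta^{\eta_k}+\gamma$ gives merely $\Delta^{\eta_k}-\Delta^b\lesssim \gamma$, which when passed through Lemma~\ref{lem:local} produces the suboptimal rate $\gamma/\kappa$. The ANOVA inequality of Lemma~\ref{lem-anova} is essential because the squared-residual formulation confines the noise contribution to a three-dimensional subspace, thereby producing an $O(\gamma^2)$ term in place of the multiplicative $O(\gamma)\cdot\|\text{signal}\|$ term of the naive bound; controlling the bias piece in the multi-change-point case, and carefully allocating factors of $\Delta^{\eta_k}$ when converting the squared gap back into a gap in $\Delta$, are the delicate bookkeeping inside this step.
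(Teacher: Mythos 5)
Your Steps~1, 2 and the overall scaffolding (find $z_0$, use Lemma~\ref{lem2} to land on a change point dominating $b$, establish interiority via Lemmas~\ref{lem4} and~\ref{lem-19}, invoke the ANOVA inequality of Lemma~\ref{lem-anova}, and close with Lemma~\ref{lem:local}) are all aligned with the paper's argument. The gap is in Step~3: the claimed bound $\Delta^{\eta_k}_{s,e}(z_0)^2 - \Delta^b_{s,e}(z_0)^2 \le C\gamma^2$ is false, and no amount of Cauchy--Schwarz or AM--GM bookkeeping can produce it. Decomposing $\mathcal{P}^b_{s,e}(Y) - \mathcal{P}^{\eta_k}_{s,e}(\mu) = \mathcal{P}^b_{s,e}(Y-\mu) + \bigl(\mathcal{P}^b_{s,e}(\mu) - \mathcal{P}^{\eta_k}_{s,e}(\mu)\bigr)$, only the first piece contributes $2\|\mathcal{P}^b_{s,e}(Y-\mu)\|^2 \le 2\gamma^2$. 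The second piece equals $2\langle Y-\mu,\ \Delta^b_{s,e}(z_0)\psi^b_{s,e} - \Delta^{\eta_k}_{s,e}(z_0)\psi^{\eta_k}_{s,e}\rangle$: the \emph{directions} have inner products bounded by $\gamma$, but the \emph{coefficients} $\Delta^b_{s,e}(z_0)$ and $\Delta^{\eta_k}_{s,e}(z_0)$ are of order $\kappa\sqrt{\delta}\,n^{3/2}_{\min}/n_{\max}$, not $\gamma$. This is exactly what the paper's terms $(II.1)$--$(II.3)$ quantify: the bias contribution scales like $\gamma\,\kappa^{\max}_{s,e}\sqrt{n_{\max}|b-\eta_k|}$ and $\gamma\,\kappa^{\max}_{s,e}\,|b-\eta_k|\,n^{3/2}_{\max}/(n_{\min}\sqrt{\delta})$, and these are the dominant pieces, not the $\gamma^2$ piece.

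Because the cross term depends on $|b - \eta_k|$ itself, a two-step ``first bound the square gap, then divide and pass to $|b-\eta_k|$'' strategy does not close. The paper handles this by contradiction: assume $|b - \eta_k|$ exceeds the claimed bound and show that the signal gain on the left of \eqref{eq-lem13-pf-6} (which also scales linearly in $|b-\eta_k|$) overwhelms all three noise pieces, with the contradiction hypothesis being precisely what makes the $\sqrt{|b-\eta_k|}$ piece $(II.2)$ subdominant. A telling check: if your Step~3 were correct, Step~4 would yield $|b-\eta_k| \lesssim \gamma^2\kappa^{-2}\,n^4_{\max}/n^5_{\min}$, which is strictly tighter than the stated $n^9_{\max}/n^{10}_{\min}$; the extra powers in the Lemma's conclusion originate exactly from the $\gamma\kappa\sqrt{n_{\max}|b-\eta_k|}$ term you are dropping. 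So the direct route as written both misstates the cross-term bound and would, if taken at face value, overclaim. To repair it you need to carry the $|b-\eta_k|$-dependent noise terms into the comparison and either argue by contradiction as in the paper or solve the resulting inequality $A\,|b-\eta_k| \le \gamma^2 + B\sqrt{|b-\eta_k|} + C'|b-\eta_k|\gamma$ in $|b-\eta_k|$, using Assumption~\ref{assumption-3} (equivalently \eqref{eq-lem13-3}) to make the linear noise term $C'|b-\eta_k|\gamma$ dominated by $A|b-\eta_k|$.
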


\begin{proof}

Without loss of generality, assume that $\Delta^b_{s, e} > 0$ and that $\Delta^t_{s, e}$ is locally decreasing at $b$.  Observe that there has to be a change point $\eta_k \in (s, b)$, or otherwise $\Delta^b_{s, e} > 0$ implies that $\Delta^t_{s, e}$ is decreasing, as a consequence of Lemma~\ref{lem2}.  Thus, if $s \leq \eta_k \leq b \leq e$, then
	\begin{equation}\label{eq-lem13-pf-1}
		\Delta^{\eta_k}_{s, e} \geq \Delta^b_{s, e} \geq D^b_{s, e} - \gamma \geq (c_1 - c_2)\kappa^{\max}_{s, e}\sqrt{\delta}\frac{n^{3/2}_{\min}}{n_{\max}} \geq (c_1/2)\kappa^{\max}_{s, e}\sqrt{\delta}\frac{n^{3/2}_{\min}}{n_{\max}},
	\end{equation}
	where the second inequality follows from \eqref{eq-lem13-1}, and the second inequality follows from \eqref{eq-lem13-2} and \eqref{eq-lem13-3}.  Observe that $e - s \leq e_0 - s_0 \leq C_S \delta$ and that $(s, e)$ has to contain at least one change point or otherwise $\max_{s < t < e}\Delta^t_{s, e} = 0$, which contradicts \eqref{eq-lem13-pf-1}.  \\
	
\noindent \textbf{Step 1.}  In this step, we are to show that 
	\begin{equation}\label{eq-lem13-pf-2}
		\min\{\eta_k - s, \, e - \eta_k\} \geq \min\{1, c_1^2\}\delta/16.
	\end{equation}	
	Suppose that $\eta_k$ is the only change point in $(s, e)$.  Then \eqref{eq-lem13-pf-2} must hold or otherwise it follows from \eqref{eq-lem4-2} in Lemma~\ref{lem4}, we have
	\[
		\Delta^{\eta_k}_{s, e} \leq \kappa_k \sqrt{n_{\max}}\frac{c_1\sqrt{\delta}}{4},
	\]
	which contradicts \eqref{eq-lem13-pf-1}.
	
Suppose $(s, e)$ contains at least two change points.  Then $\eta_k - s < \min\{1, \, c_1^2\}\delta /16$ implies that $\eta_k$ is the most left change point in $(s, e)$.  Therefore it follows from Lemma~\ref{lem-19} that
	\begin{align*}
		\Delta^{\eta_k}_{s, e} & \leq \frac{c_1}{4}\sqrt{\frac{n_{\max}}{n_{\min}}}	\Delta^{\eta_{k+1}}_{s, e} + 2\sqrt{n_{s:\eta_k}} \kappa_k \leq \frac{c_1}{4} \sqrt{\frac{n_{\max}}{n_{\min}}} \max_{s < t < e} \Delta^t_{s, e} + \frac{\sqrt{\delta}}{4}c_1\sqrt{n_{\max}} \kappa_k \\
		& \leq \frac{c_1}{4} \sqrt{\frac{n_{\max}}{n_{\min}}} \max_{s < t < e} D^t_{s, e} + \frac{c_1}{4} \sqrt{\frac{n_{\max}}{n_{\min}}} \gamma + \frac{\sqrt{\delta}}{4}c_1\sqrt{n_{\max}} \kappa_k \\
		& \leq \max_{s < t < e} D^t_{s, e} - \gamma,
	\end{align*}
	which contradicts with \eqref{eq-lem13-pf-1}.\\

\noindent \textbf{Step 2.}  It follows from Lemma~\ref{lem:local} that there exits $d \in (\eta_k, \eta_k + c_1\delta n^2_{\min} n^{-2}_{\max}/32)$ and that 
	\begin{equation}\label{eq-lem13-pf-3}
		\Delta^{\eta_k}_{s, e} - \Delta^d_{s, e} \geq 2\gamma.
	\end{equation}
	We claim that $b \in (\eta_k, d) \subset (\eta_k, \eta_k + c_1\delta n^2_{\min} n^{-2}_{\max}/16)$.  By contradiction, suppose that $b \geq d$.  Then
	\begin{equation}\label{eq-lem13-pf-4}
		\Delta^b_{s, e} \leq \Delta^d_{s, e} < \Delta^{\eta_k}_{s, e} - 2\gamma \leq \max_{s < t < e}\Delta^t_{s, e} - 2\gamma \leq \max_{s < t < e} D^t_{s, e} - \gamma = D^b_{s, e} - \gamma,
	\end{equation}
	where the first inequality follows from Lemma~\ref{lem2}, the second follows from \eqref{eq-lem13-pf-3}, and the fourth follows from \eqref{eq-lem13-1}.   Note that \eqref{eq-lem13-pf-4} is a contradiction with \eqref{eq-lem13-pf-1}, therefore we have $b \in (\eta_k, \eta_k + c_1\delta n^2_{\min} n^{-2}_{\max}/32)$. \\

\noindent \textbf{Step 3.}  	It follows from \eqref{eq-anova-1} in Lemma~\ref{lem-anova} that
	\[
		\bigl\|Y_{s, e} - \mathcal{P}^b_{s, e}\bigl(Y_{s, e}\bigr)\bigr\|^2 \leq \bigl\|Y_{s, e} - \mathcal{P}^{\eta_k}_{s, e}\bigl(Y_{s, e}\bigr)\bigr\|^2 \leq \bigl\|Y_{s, e} - \mathcal{P}^{\eta_k}_{s, e}\bigl(\mu_{s, e}\bigr)\bigr\|^2,
	\]
	with the notation defined in Lemma~\ref{lem-anova}.  By contradiction, we assume that
	\begin{equation}\label{eq-lem13-pf-contra}
		\eta_k + C\frac{n^9_{\max}}{n^{10}_{\min}}\kappa^{-2}\gamma^{2} < b,
	\end{equation}
	where $C > 0$ is a sufficiently large constant.  We are to show that this leads to the bound that
	\begin{equation}\label{eq-lem13-pf-5}
		\bigl\|Y_{s, e} - \mathcal{P}^b_{s, e}\bigl(Y_{s, e}\bigr)\bigr\|^2 > \bigl\|Y_{s, e} - \mathcal{P}^{\eta_k}_{s, e}\bigl(\mu_{s, e}\bigr)\bigr\|^2,
	\end{equation}	
	which is a contradiction.
	
Note that $\min\{\eta_k - s, \, e - \eta_k\} \geq \min\{1, c_1^2\}\delta/16$ and $|b - \eta_k| \leq c_1\delta n^2_{\min} n^{-2}_{\max}/32)$.  For properly chose $c_1$, we have
	\[
		\min\{e - b, \, b - s\} \geq \min \{1, \, c_1^2\}\delta/32.
	\]	
	
Note that
	\begin{align*}
		& \bigl\|Y_{s, e} - \mathcal{P}^b_{s, e}\bigl(Y_{s, e}\bigr)\bigr\|^2  - \bigl\|Y_{s, e} - \mathcal{P}^{\eta_k}_{s, e}\bigl(\mu_{s, e}\bigr)\bigr\|^2\\
		= & \bigl\|\mu_{s, e} - \mathcal{P}^b_{s, e}\bigl(\mu_{s, e}\bigr)\bigr\|^2 - \bigl\|\mu_{s, e} - \mathcal{P}^{\eta_k}_{s, e}\bigl(\mu_{s, e}\bigr)\bigr\|^2  + 2\langle Y_{s, e}  - \mu_{s, e}, \mathcal{P}^{\eta_k}_{s, e}\bigl(\mu_{s, e}\bigr) - \mathcal{P}^b_{s, e}\bigl(Y_{s, e}\bigr)\rangle.
	\end{align*}
	Therefore if we can show that 
	\begin{equation}\label{eq-lem13-pf-6}
		2\langle Y_{s, e}  - \mu_{s, e}, \mathcal{P}^b_{s, e}\bigl(Y_{s, e}\bigr) - \mathcal{P}^{\eta_k}_{s, e}\bigl(\mu_{s, e}\bigr)\rangle < \bigl\|\mu_{s, e} - \mathcal{P}^b_{s, e}\bigl(\mu_{s, e}\bigr)\bigr\|^2 - \bigl\|\mu_{s, e} - \mathcal{P}^{\eta_k}_{s, e}\bigl(\mu_{s, e}\bigr)\bigr\|^2,
	\end{equation}
	then \eqref{eq-lem13-pf-5} holds. 
	
As for the right-hand side of \eqref{eq-lem13-pf-6}, we have 
	\begin{align}
		& \bigl\|\mu_{s, e} - \mathcal{P}^b_{s, e}\bigl(\mu_{s, e}\bigr)\bigr\|^2 - \bigl\|\mu_{s, e} - \mathcal{P}^{\eta_k}_{s, e}\bigl(\mu_{s, e}\bigr)\bigr\|^2 = \bigl(\Delta_{s,e}^{\eta_k}(z_0)\bigr)^2 - \bigl(\Delta_{s,e}^{b}(z_0)\bigr)^2 \nonumber\\
		\geq & \bigl(\Delta_{s,e}^{\eta_k}(z_0) - \Delta_{s,e}^{b}(z_0)\bigr) \bigl|\Delta_{s,e}^{\eta_k}(z_0) \bigr|. \label{eq-lem13-47-1}
	\end{align}

We are then to utilize the result of Lemma~\ref{lem:local}.  Note that $z_0$ there can be any $z_0 \in\mathbb{R}$ satisfying conditions thereof.  Equation~\eqref{eqn:lower_bound} holds due to the fact that here we have
	\begin{align}
		\bigl|\Delta^{\eta_k}_{s, e}(z_0)\bigr| \geq & \bigl|\Delta^{b}_{s, e}(z_0)\bigr| \geq \bigl|D^{b}_{s, e}(z_0)\bigr| - \gamma \geq c_1 \kappa^{\max}_{s, e}\sqrt{\delta} \frac{n^{3/2}_{\min}}{n_{\max}} - c_2 \kappa^{\max}_{s, e}\sqrt{\delta}\frac{n^{3/2}_{\min}}{n_{\max}} \nonumber \\
		\geq & (c_1)/2\kappa^{\max}_{s, e}\sqrt{\delta} \frac{n^{3/2}_{\min}}{n_{\max}}, \label{eq-lem13-47-2}
	\end{align}
	where the first inequality follows from the fact that $\eta_k$ is a true change point, the second inequality from \eqref{eq-lem13-1}, the third inequality follows from \eqref{eq-lem13-2} and \eqref{eq-lem13-3}, and the final inequality follows from the condition that $0 < c_2 < c_1/2$.   Towards this end, it follows from Lemma~\ref{lem:local} that 
	\begin{equation}\label{eq-lem13-47-3}
		\Delta_{s,e}^{\eta_k}(z_0) - \Delta_{s,e}^{b}(z_0) \geq c |b - \eta_k| \delta \frac{n_{\min}^2}{n_{\max}^2} \Delta^{\eta_k}_{s, e}(z_0)(e-s)^{-2}.
	\end{equation}
	Combining \eqref{eq-lem13-47-1}, \eqref{eq-lem13-47-2} and \eqref{eq-lem13-47-3}, we have
	\begin{equation}\label{eq-lem13-47-rhs}
		\bigl\|\mu_{s, e} - \mathcal{P}^b_{s, e}\bigl(\mu_{s, e}\bigr)\bigr\|^2 - \bigl\|\mu_{s, e} - \mathcal{P}^{\eta_k}_{s, e}\bigl(\mu_{s, e}\bigr)\bigr\|^2 \geq \frac{cc_1^2}{4} \delta^2 \frac{n^5_{\min}}{n^4_{\max}} \kappa^2 (e-s)^{-2} |b - \eta_k|.
	\end{equation}
	
The left-hand side of \eqref{eq-lem13-pf-6} can be decomposed as follows.
	\begin{align}
		& 2\langle Y_{s, e}  - \mu_{s, e}, \mathcal{P}^b_{s, e}\bigl(Y_{s, e}\bigr) - \mathcal{P}^{\eta_k}_{s, e}\bigl(\mu_{s, e}\bigr)\rangle \nonumber \\
		= & 2 \langle Y_{s, e} - \mu_{s, e}, \mathcal{P}^b_{s, e}\bigl(Y_{s, e}\bigr) - \mathcal{P}^b_{s, e}\bigl(\mu_{s, e}\bigr) \rangle + 2 \langle Y_{s, e} - \mu_{s, e}, \mathcal{P}^b_{s, e}\bigl(\mu_{s, e}\bigr) - \mathcal{P}^{\eta_k}_{s, e}\bigl(\mu_{s, e}\bigr) \rangle \nonumber \\
		= & (I) + 2 \left(\sum_{i = 1}^{n_{s:\eta_k}} + \sum_{i = n_{s:\eta_k} + 1}^{n_{s:b}} + \sum_{i = n_{s:b} + 1}^{n_{s:e}} \right) \bigl(Y_{s, e} - \mu_{s, e}\bigr)_i \left(\mathcal{P}^b_{s, e}\bigl(\mu_{s, e}\bigr) - \mathcal{P}^{\eta_k}_{s, e}\bigl(\mu_{s, e}\bigr)\right)_i \nonumber \\
		= & (I) + (II.1) + (II.2) + (II.3).\label{eq-lem13-47-lhs0}
	\end{align}\\

\noindent \textbf{Term (I)}.  It holds that
	\begin{align}\label{eq-lem13-47-lhs1}
		(I) = \frac{2}{n_{s:b}}	\left\{\sum_{j=1}^{n_{s:b}}  \bigl(Y_{s, e} - \mu_{s, e}\bigr)_j\right\}^2 + \frac{2}{n_{(b+1):e}}	\left\{\sum_{j=n_{s:b} + 1}^{n_{s:e}}  \bigl(Y_{s, e} - \mu_{s, e} \bigr)_j\right\}^2 \leq 2 \gamma^2,
	\end{align}
	where the inequality follows from the definition of the CUSUM statistics and \eqref{eq-lem13-1}.  \\

\noindent \textbf{Term (II)}.  It holds that
	\begin{align*}
		(II.1) = 2\sqrt{n_{s:\eta_k}} \left\{\frac{1}{\sqrt{n_{s:\eta_k}}} \sum_{i = 1}^{n_{s:\eta_k}}\left(Y_{s, e} - \mu_{s, e}\right)_i\right\}	\left\{\frac{1}{n_{s:b}}\sum_{i = 1}^{n_{s:b}} (\mu_{s,e})_i - \frac{1}{n_{s:\eta_k}}\sum_{i=1}^{\eta_k} (\mu_{s,e})_i\right\}.
	\end{align*}
	In addition, it holds that
	\begin{align*}
		& \left|\frac{1}{n_{s:b}}\sum_{i = 1}^{n_{s:b}} (\mu_{s,e})_i - \frac{1}{n_{s:\eta_k}}\sum_{i=1}^{\eta_k} (\mu_{s,e})_i\right| = \frac{n_{(\eta_k+1):b}}{n_{s:b}}\left|-\frac{1}{n_{s:\eta_k}} \sum_{i = 1}^{n_{s:\eta_k}} (\mu_{s,e})_i + F_{\eta_k + 1}(z_0)\right| \\
		\leq & \frac{n_{(\eta_k+1):b}}{n_{s:b}} (C_S+1)\kappa_{s, e}^{\max},
	\end{align*}
	where the inequality follows from Lemma~\ref{Lem-17}	.  Combining with \eqref{eq-lem13-4}, it leads to that
	\begin{align}
		(II.1) & \leq 2\sqrt{n_{s:\eta_k}} \gamma \frac{n_{(\eta_k+1):b}}{n_{s:b}} (C_S+1)\kappa_{s, e}^{\max} \nonumber \\ 
		& \leq 2\frac{n_{\max}^{3/2}}{n_{\min}}\frac{4}{\min\{1, \, c_1^2\}}\delta^{-1/2} \gamma |b - \eta_k| (C_S+1) \kappa^{\max}_{s, e}. \label{eq-lem13-47-lhs21}
	\end{align}
	As for the term (II.2), it holds that
	\begin{align}\label{eq-lem13-47-lhs22}
		(II.2) \leq 2 \sqrt{n_{\max}} \sqrt{|b - \eta_k|} \gamma (2C_S + 3) \kappa^{\max}_{s, e}.
	\end{align}
	As for the term (II.3), it holds that 
	\begin{align}\label{eq-lem13-47-lhs23}
		(II.3) \leq 2\frac{n_{\max}^{3/2}}{n_{\min}}\frac{4}{\min\{1, \, c_1^2\}}\delta^{-1/2} \gamma |b - \eta_k| (C_S+1) \kappa^{\max}_{s, e}.
	\end{align}

Therefore, combining \eqref{eq-lem13-47-rhs}, \eqref{eq-lem13-47-lhs0}, \eqref{eq-lem13-47-lhs1}, \eqref{eq-lem13-47-lhs21}, \eqref{eq-lem13-47-lhs22} and \eqref{eq-lem13-47-lhs22}, we have that \eqref{eq-lem13-pf-6} holds if 
	\[
		\delta^2 \frac{n^5_{\min}}{n^4_{\max}} \kappa^2 (e-s)^{-2} |b - \eta_k| \gtrsim \max\left\{\gamma^2, \, \frac{n^{3/2}_{\max}}{n_{\min}} \delta^{-1/2} \gamma |b - \eta_k| \kappa, \, \sqrt{n_{\max}}\sqrt{|b - \eta_k|} \gamma \kappa\right\}.
	\]
	The second inequality holds due to Assumption~\ref{assumption-3}, the third inequality holds due to \eqref{eq-lem13-pf-contra} and the first inequality is a consequence of the third inequality and Assumption~\ref{assumption-3}.

\end{proof}

\section{Proofs of Section~\ref{sec-phase}}

\begin{proof}[Proof of Lemma~\ref{lemma-low-snr}]
Let $P_0$ denote the joint distribution of the independent random variables $\{Y_{t, i}\}_{i=1, t = 1}^{n, T}$, where
	\[
		Y_{1,1}, \ldots, Y_{\delta, n} \stackrel{i.i.d.}{\sim} \delta_0 \quad \text{and}  \quad Y_{\delta + 1, 1}, \ldots, Y_{T, n} \stackrel{i.i.d.}{\sim} \delta_1,
	\]
	where $\delta_c$, $c \in \mathbb{R}$, is the Dirac distribution having point mass at point $c$.  Similarly, let $P_1$ denote the joint distribution of the independent random variables $\{Z_{t, i}\}_{i=1, t = 1}^{n, T}$ such that
	\[
		Z_{1, 1}, \ldots, Z_{T - \delta, n} \stackrel{i.i.d.}{\sim} \delta_1 \quad \text{and}  \quad Z_{T - \delta + 1, 1}, \ldots, Z_{T, n} \stackrel{i.i.d.}{\sim} \delta_0.
	\]
	Observe that $\eta(P_0) = \delta$ and $\eta(P_1) = T - \delta$.  Since $\delta \leq T/3$, it holds that
	\[
		\inf_{\hat \eta} \sup_{P\in \mathcal P_{\zeta}^T} \mathbb{E}_P\bigl(|\hat \eta - \eta|\bigr) \geq (T/3)\bigl\{1- d_{\mathrm{TV}}(P_0, P_1)\bigr\} \geq (T/3) \bigl\{1 - 2\delta n\bigr\} \geq \frac{1 - 2\zeta^2}{3} T,		
	\]
	where $d_{\mathrm{TV}}(\cdot, \cdot)$ is the total variation distance.  In the last display, the first inequality follows from Le Cam's lemma \citep[see, e.g.][]{yu1997assouad}, and the second inequality follows from Eq.(1.2) in \cite{steerneman1983total}.

\end{proof}

\begin{proof}[Proof of Lemma~\ref{lemma-error-opt}]
Let $P_0$ denote the joint distribution of the independent random variables $\{Y_{t, i}\}_{i=1, t = 1}^{n, T}$, where
	\[
		Y_{1,1}, \ldots, Y_{\delta, n}  \stackrel{i.i.d.}{\sim}  F  \quad \text{and} \quad Y_{\delta + 1, 1}, \ldots, Y_{T, n}  \stackrel{i.i.d.}{\sim}  G;
	\]
	and, similarly, let $P_1$ be the joint distribution of the independent random variables $\{Z_{t, i}\}_{i=1, t = 1}^{n, T}$ such that
	\[
		Z_{1, 1}, \ldots, Z_{\delta + \xi, n}  \stackrel{i.i.d.}{\sim}  F, \quad \text{and} \quad Z_{\delta + \xi + 1, 1}, \ldots, Z_{T, n}  \stackrel{i.i.d.}{\sim}  G,
	\]
	where $\xi$ is a positive integer no larger than $n-1 - \delta$, 	
	\[
		F(x) = \begin{cases}
 			0, & x \leq 0,\\
 			x, & 0 < x \leq 1,\\
 			1, & x \geq 1,
 		\end{cases}
 		\quad \mbox{and} \quad
 		G(x) = \begin{cases}
 			0, & x \leq 0,\\
 			(1 - 2\kappa)x, & 0 < x \leq 1/2,\\
 			(1/2 - \kappa) + (1 + 2\kappa)(x-1/2), & 1/2 < x \leq 1,\\
 			1, & x \geq 1.
 		\end{cases}
	\]
	It is easy to check that 
	\[
		\sup_{z \in \mathbb{R}} |F(z) - G(z)| = \kappa.
	\]
	
Observe that $\eta(P_0) = \delta$ and $\eta(P_1) = \delta + \xi$.  By Le Cam's Lemma \citep[e.g.][]{yu1997assouad} and Lemma~2.6 in \cite{Tsybakov2009}, it holds that
	\begin{equation}\label{eq:second.lower}
		\inf_{\hat \eta} \sup_{P\in \mathcal{Q}} \mathbb{E}_P\bigl(|\hat \eta - \eta|\bigr)  \geq \xi \bigl\{1- d_{\mathrm{TV}}(P_0, P_1)\bigr\} \geq \frac{\xi}{2} \exp\left(-\mathrm{KL}(P_0, P_1)\right).
	\end{equation}

Since 
	\begin{align*}
	\mathrm{KL}(P_0, P_1) = \sum_{i \in \{\delta + 1, \ldots, \delta + \xi\}}	 \mathrm{KL}(P_{0i}, P_{1i}) = \frac{n\xi}{2} \log(1 - 4\kappa^2) \leq 2n\xi \kappa^2,
	\end{align*}
	we have 
	\[
		\inf_{\hat \eta} \sup_{P\in \mathcal{Q}} \mathbb{E}_P\bigl(|\hat \eta - \eta|\bigr) \geq \frac{\xi}{2}\exp(-2n\xi\kappa^2).
	\]

Next, set $\xi = \min \{ \lceil \frac{1}{n\kappa^2} \rceil, T - 1 - \delta\}$. 
By the assumption on $\zeta_T$, for all $T$ large enough we must have that $\xi = \lceil \frac{1}{ n\kappa^2} \rceil$.
 Thus, for all $T$ large enough, using \eqref{eq:second.lower},
\[
\inf_{\hat \eta} \sup_{P\in \mathcal Q} \mathbb{E}_P\bigl(|\hat \eta - \eta|\bigr) \geq   \max \left\{ 1, \frac{1}{2} \Big\lceil\frac{1}{n\kappa^2} \Big\rceil e^{-2} \right\}.
\]
	
\end{proof}

\section{ Proof of Theorem \ref{thm-full}  }
\label{sec-full}

\begin{proof}
	It follows from Theorem~\ref{thm-wbs} and the proof thereof that applying Algorithm~\ref{algorithm:WBS} to $\{W_{t, i}\}$ and the $\tau$ sequence defined in \eqref{eq-tau-candidates}, with probability at least
	\[
	1 -\frac{24 \log(n_{1:T})}{T^3 n_{1:T}} - \frac{48 T}{n_{1:T} \log(n_{1:T}) \delta} - \exp\left\{\log\left(\frac{T}{\delta}\right) - \frac{S\delta^2}{16T^2}\right\},
	\]
	the event $\mathcal{A}$, which is defined as follows holds.
	\begin{itemize}
		\item [$\mathcal{A}1$] if $\tau > c_{\tau, 2}  \kappa \delta^{1/2} \frac{n^{3/2}_{\min}}{n_{\max}}$, then the corresponding change point estimators satisfying $\widehat{K} < K$, but for any $\hat{\eta}$ in the estimator set, there exits $k \in \{1, \ldots, K\}$ such that
		\[
		|\hat{\eta} - \eta_k| \leq C_{\epsilon}\kappa^{-2}_k\log (n_{1:T}) n^{9}_{\max}n^{-10}_{\min};
		\]
		\item [$\mathcal{A}2$] if $c_{\tau, 2}  \kappa \delta^{1/2} \frac{n^{3/2}_{\min}}{n_{\max}} \geq \tau \geq c_{\tau, 1} \sqrt{\log(n_{1:T})}$,	then the corresponding change point estimators satisfying $\widehat{K} = K$, and for any $\hat{\eta}$ in the estimator set, there exits $k \in \{1, \ldots, K\}$ such that
		\[
		|\hat{\eta} - \eta_k| \leq C_{\epsilon}\kappa^{-2}_k\log (n_{1:T}) n^{9}_{\max}n^{-10}_{\min};
		\]
		\item [$\mathcal{A}3$] if $\tau < c_{\tau, 1} \sqrt{\log(n_{1:T})}$, then the corresponding change point estimators satisfying $\widehat{K} > K$, and for any true change point $\eta_k$, there exits $\hat{\eta}$ in the estimators such that
		\[
		|\hat{\eta} - \eta_k| \leq C_{\epsilon}\kappa^{-2}_k\log (n_{1:T}) n^{9}_{\max}n^{-10}_{\min}.
		\]
	\end{itemize}
	The rest of the proof is conducted conditionally on the event $\mathcal{A}$.  
	
	Note that different $\tau_j$'s may return the same collections of the change point estimators.  For simplicity, in the rest of the proof, we assume that distinct candidate $\tau_j$'s in \eqref{eq-tau-candidates} return distinct and nested $\mathcal{B}_j$ with $|\mathcal{B}_j| = K_j$. \\
	
	\noindent \textbf{Step 1.}  Let $\hat{\eta}_0 = 0$ and $\hat{\eta}_{K+1} = T$.  In this step, it suffices to show that for any $k \in \{0, \ldots, K-1\}$, it holds that with large probability
	\begin{align}\label{eq-SIC-pf-1}
	\sum_{l = k}^{k+1} \sum_{t = \hat{\eta}_l + 1}^{\hat{\eta}_{l+1}} \sum_{i = 1}^{n_t}\left(\mathbbm{1}_{\{Y_{t, i} \leq \hat{z}\}} - \widehat{F}^Y_{(\hat{\eta}_l + 1): \hat{\eta}_{l+1}}(\hat{z})\right)^2 + \lambda < \sum_{t = \hat{\eta}_k + 1}^{\hat{\eta}_{k+2}} \sum_{i = 1}^{n_t}\left(\mathbbm{1}_{\{Y_{t, i} \leq \hat{z}\}} - \widehat{F}^Y_{(\hat{\eta}_k + 1): \hat{\eta}_{k+2}}(\hat{z})\right)^2.
	\end{align}
	Without loss of generality, we consider the case when $k = 0$.
	
	Note that with probability at least
	\[
	1 -\frac{24 \log(n_{1:T})}{T^3 n_{1:T}} - \frac{48 T}{n_{1:T} \log(n_{1:T}) \delta} - \exp\left\{\log\left(\frac{T}{\delta}\right) - \frac{S\delta^2}{16T^2}\right\},
	\]
	it holds that
	\begin{align}
	& \sum_{t = 1}^{\hat{\eta}_2} \sum_{i = 1}^{n_t} \left(\mathbbm{1}_{\{Y_{t, i} \leq \hat{z}\}} - \widehat{F}_{1: \hat{\eta}_{2}}(\hat{z})\right)^2 - \sum_{l = 0}^{1} \sum_{t = \hat{\eta}_l + 1}^{\hat{\eta}_{l+1}} \sum_{i = 1}^{n_t}\left(\mathbbm{1}_{\{Y_{t, i} \leq \hat{z}\}} - \widehat{F}_{(\hat{\eta}_l + 1): \hat{\eta}_{l+1}}(\hat{z})\right)^2  \nonumber \\
	= &  \left(D^{\hat{\eta}_1}_{1, \hat{\eta}_2}(\{Y_{t, i}\})\right)^2 \geq c_{\tau, 2}^2  \kappa^2 \delta \frac{n^{3}_{\min}}{n_{\max}^2}, \label{eq-SIC-pf-2}
	\end{align}
	where the last inequality follows from the proof of Theorem~\ref{thm-wbs}.
	
	Therefore, for $\lambda = C\log(n_{1:T})$, \eqref{eq-SIC-pf-1} holds due to Assumption~\ref{assumption-3}. \\
	
	\noindent \textbf{Step 2.}  In this step, we are to show with large probability, Algorithm~\ref{alg:full} will not over select.  For simplicity, assume $\mathcal{B}_2 = \{\hat{\eta}_1\}$ and $\mathcal{B}_1 = \{\hat{\eta}, \hat{\eta}_1\}$ with $0 < \hat{\eta} < \hat{\eta}_1$.  Let $\hat{z}$ be the one defined in Algorithm~\ref{alg:full} using the triplet $\{0, \hat{\eta}, \hat{\eta}_1\}$.  
	
	Since
	\begin{align*}
	& \sum_{t = 1}^{\hat{\eta}_{1}} \sum_{i = 1}^{n_t}\left(\mathbbm{1}_{\{Y_{t, i} \leq \hat{z}\}} - \widehat{F}^Y_{1: \hat{\eta}_{1}}(\hat{z})\right)^2 - \sum_{t = 1}^{\hat{\eta}} \sum_{i = 1}^{n_t}\left(\mathbbm{1}_{\{Y_{t, i} \leq \hat{z}\}} - \widehat{F}^Y_{1: \hat{\eta}}(\hat{z})\right)^2 - \sum_{t = \hat{\eta} + 1}^{\hat{\eta}_{1}} \sum_{i = 1}^{n_t}\left(\mathbbm{1}_{\{Y_{t, i} \leq \hat{z}\}} - \widehat{F}^Y_{(\hat{\eta}+1): \hat{\eta}_{1}}(\hat{z})\right)^2 \\
	& = \left(D^{\hat{\eta}}_{0, \hat{\eta}_1}(\{Y_{t, i}\})\right)^2 \leq  c^2_{\tau, 1} \log(n_{1:T})
	\end{align*}
	holds with probability at least
	\[
	1 -\frac{24 \log(n_{1:T})}{T^3 n_{1:T}} - \frac{48 T}{n_{1:T} \log(n_{1:T}) \delta} - \exp\left\{\log\left(\frac{T}{\delta}\right) - \frac{S\delta^2}{16T^2}\right\}.
	\]
	Therefore, for $\lambda = C\log(n_{1:T})$, \eqref{eq-SIC-pf-1} holds. \\
	
	Combining both steps above and the fact that these two steps are conducted in the event $\mathcal{A}$, we have that	
	\begin{align}
	& \mathbb{P}\left\{\widehat{K} = K \quad \mbox{and} \quad \epsilon_k \leq C_{\epsilon}\kappa^{-2}_k\log (n_{1:T}) n^{9}_{\max}n^{-10}_{\min}, \, \forall k = 1, \ldots, K\right\} \nonumber\\
	\geq & 1 - \frac{48 \log(n_{1:T})}{T^3 n_{1:T}} - \frac{96 T}{n_{1:T} \log(n_{1:T}) \delta} - \exp\left\{\log\left(\frac{T}{\delta}\right) - \frac{S\delta^2}{16T^2}\right\}. \nonumber
	\end{align} 
\end{proof}

\bibliographystyle{abbrvnat} 
\bibliography{references}

\end{document}